\newif\iffullver
\definecolor{Darkblue}{rgb}{0,0,0.4}
\definecolor{Brown}{cmyk}{0,0.81,1.,0.60}
\definecolor{Purple}{cmyk}{0.45,0.86,0,0}
\newcommand{\mydriver}{hypertex}
 \renewcommand{\mydriver}{pdftex}
\newcommand{\lref}[2][]{\hyperref[#2]{#1~\ref*{#2}}}
\newtheorem{theorem}{Theorem}[section]
\newtheorem{definition}[theorem]{Definition}
\newtheorem{lemma}[theorem]{Lemma}
\newtheorem{claim}[theorem]{Claim}
\newtheorem{corollary}[theorem]{Corollary}
\numberwithin{algorithm}{section}
\newenvironment{proof}{

\noindent{\bf Proof:}}
{\hfill$\blacksquare$

}
\newcommand{\junk}[1]{}
\newcommand{\ignore}[1]{}
\newcommand{\Z}[0]{{\ensuremath{\mathbb{Z}}}}
\newcommand{\A}[0]{{\ensuremath{\mathcal{A}}}\xspace}
\newcommand{\lm}[0]{{\ensuremath{\mathcal{L}}}\xspace}
\newcommand{\N}[0]{{\ensuremath{\mathcal{N}}}\xspace}
\newcommand{\sse}{\subseteq}
\newcommand{\I}{{\mathcal{I}}}
\newcommand{\T}{\ensuremath{\mathcal{T}}\xspace}
\newcommand{\Opt}{\ensuremath{\mathsf{Opt}}\xspace}
\newcommand{\so}{\ensuremath{\mathsf{StocOrient}}\xspace}
\newcommand{\co}{\ensuremath{\mathsf{CorrOrient}}\xspace}
\newcommand{\kdo}{\ensuremath{\mathsf{KDO}}\xspace}
\newcommand{\lp}{\ensuremath{\mathsf{LP}}\xspace}
\newcommand{\dlp}{\ensuremath{\mathsf{DLP}}\xspace}
\newcommand{\size}{{S}}
\newcommand{\E}{\mathbb{E}}
\newcommand{\q}[1]{e\left(#1\right)}
\newcommand{\initOneLiners}{%
    \setlength{\itemsep}{0pt}
    \setlength{\parsep }{0pt}
    \setlength{\topsep }{0pt}
%      \usecounter{myLISTctr}
}
\begin{document}

\title{On the Adaptivity Gap of Stochastic Orienteering}
\author{
Nikhil Bansal\thanks{Eindhoven University of Technology. Supported by the Dutch NWO Grant 639.022.211.}
\and
Viswanath Nagarajan\thanks{IBM T.J. Watson Research Center}
}
\date{}

\maketitle

\begin{abstract}
The input to the {\em stochastic orienteering} problem~\cite{GKNR12} consists of a budget $B$ and metric $(V,d)$ where each vertex $v\in V$ has a job with a deterministic reward and a \emph{random} processing time (drawn from a known distribution). The processing times are independent across vertices. The goal is to obtain a non-anticipatory policy  (originating from a given root vertex)
to run jobs at different vertices, that maximizes expected reward, subject to the total distance traveled plus processing times being at most $B$. An {\em adaptive} policy is one that can choose the next vertex to visit based on observed random instantiations. Whereas, a {\em non-adaptive} policy is just given by a fixed ordering of vertices.
%: it visits and runs jobs in this order until the budget $B$ is exhausted.
The {\em adaptivity gap} is the worst-case ratio of the expected rewards of the optimal adaptive and non-adaptive policies.

\smallskip

We prove an $\Omega\left((\log\log B)^{1/2}\right)$ lower bound on the adaptivity gap of stochastic orienteering. This provides a negative answer to the $O(1)$-adaptivity gap conjectured in~\cite{GKNR12}, and comes close to the $O(\log\log B)$ upper bound proved there. This result holds even on a line metric.

\smallskip

We also show an $O(\log\log B)$ upper bound on the adaptivity gap for the {\em correlated} stochastic orienteering problem, where the reward of each job is random and possibly correlated to its processing time. Using this, we obtain an improved quasi-polynomial time $ \min\{\log n,\log B\}\cdot \tilde{O}(\log^2\log B)$-approximation algorithm for correlated stochastic orienteering.

\end{abstract}

\section{Introduction}

In the {\em orienteering} problem~\cite{GLV87}, we are given a metric $(V,d)$ with a starting vertex $\rho\in V$ and a budget $B$ on length. The objective is to compute a path originating from $\rho$ having length at most $B$, that maximizes the number of vertices visited. This is a basic vehicle routing problem (VRP) that arises as a subroutine in algorithms for a number of more complex variants, such as VRP with
time-windows, discounted reward TSP and distance constrained VRP.

The stochastic variants of orienteering and related problems such as traveling salesperson and vehicle routing have also been extensively studied. In particular, several dozen variants have been considered depending on which parameters are stochastic, the choice of the objective function, the probability distributions, and optimization models such as {\em a priori} optimization, stochastic optimization with recourse, probabilistic settings and so on.
For more details we refer to a recent survey \cite{Weyland} and references therein.

Here, we consider the following stochastic version of the orienteering problem defined by~\cite{GKNR12}. Each vertex contains a job with a deterministic reward and random processing time (also referred to as size); these processing times are independent across vertices. The processing times  model the random delays encountered at the node, say due to long queues or activities such as filling out a form, before the reward can be collected. The distances in the metric correspond to travel times between vertices, which are deterministic. The goal is to compute a {\em policy}, which describes a path originating from the root $\rho$ that visits vertices and runs the respective jobs, so as to maximize the total expected reward subject to the total time (for travel plus processing) being at most $B$.
Stochastic orienteering also generalizes the well-studied stochastic knapsack problem~\cite{DeanGV08,BGK11,Bhalgat11} (when all distances are zero).
We also consider a further generalization, where the reward at each vertex is also random and possibly {\em correlated} to its processing time. %(The instantiations at different vertices are still independent.)

A feasible solution (policy) for the stochastic orienteering problem is represented by a decision tree, where nodes encode the ``state'' of the solution (previously visited vertices and the residual budget), and branches denote random instantiations. Such solutions are called {\em adaptive} policies, to emphasize the fact that their actions may depend on previously observed random outcomes.
Often, adaptive policies can be very complex and hard to reason about.
For example, even for the stochastic knapsack problem an optimal adaptive strategy may have exponential size (and
several related problems are PSPACE-hard) \cite{DeanGV08}.

Thus a natural approach for designing algorithms in the stochastic setting is to:
(i) restrict the solution space to the simpler class of {\em non adaptive} policies (eg.~in our stochastic orienteering setting, such a policy is described by a fixed permutation to visit vertices in, until the budget $B$ is exhausted), and (ii) design an efficient algorithm to find a
(close to) optimum non-adaptive policy.

While non-adaptive policies are often easier to optimize over, the drawback is that they could be much worse than the optimum adaptive policy. Thus, a key issue is to bound the {\em adaptivity gap}, introduced by \cite{DeanGV08} in their seminal paper,  which is the worst-case ratio (over all problem instances) of the optimal adaptive value to the optimal non-adaptive value. %\cite{DeanGV08} showed that the adaptivity gap for the stochastic knapsack problem is $O(1)$.

In recent years, increasingly sophisticated techniques have been developed for designing good non-adaptive policies and for proving small adaptivity gaps~\cite{DeanGV08,GuhaM09,chenetal,BGLMNR10,GKMR11,GKNR12}.
% (see section \ref{s:prev-work} for previous work).
For stochastic orienteering, \cite{GKNR12} gave an $O(\log\log B)$ bound on the adaptivity gap, using an elegant probabilistic argument (previous approaches only gave a $\Theta(\log B)$ bound).
More precisely, they considered certain $O(\log B)$ correlated probabilistic events and used martingale tails bounds on suitably defined stopping times to bound the probability that none of these events happen.
In fact, \cite{GKNR12} conjectured that the adaptivity gap for stochastic orienteering was $O(1)$, suggesting that the $O(\log \log B)$ factor was an artifact of their analysis.
 %That paper left open the possibility of a constant-factor adaptivity gap for stochastic orienteering.

\subsection{Our Results and Techniques}

\noindent {\bf Adaptivity gap for stochastic orienteering:}
Our main result is the following lower bound.
\begin{theorem}\label{thm:ad-gap}
The adaptivity gap of stochastic orienteering is $\Omega\left((\log\log B)^{1/2}\right)$, even on a line metric.
\end{theorem}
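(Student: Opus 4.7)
I would prove the lower bound by exhibiting a family of hard instances on a line metric $[0,B]$ and analyzing the best adaptive versus best non-adaptive policies on them. Let $k = \Theta(\sqrt{\log\log B})$ denote a ``number of stages'' parameter, and choose a doubly-exponentially increasing sequence of sizes $s_1 \ll s_2 \ll \cdots \ll s_k$ (so that $\log s_i$ roughly doubles with $i$); this is why only $\Theta(\log\log B)$ stages fit under the budget, of which we take the square root. Place the root $\rho$ at position $0$. At a carefully chosen location $D_i \in [0,B]$ along the line, place a cluster $C_i$ of $N_i$ identical vertices; the job at any vertex in $C_i$ has unit reward and processing time equal to $s_i$ with probability $p_i := 1/s_i$, and $0$ otherwise (so expected size $\Theta(1)$ per job). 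The $D_i$'s, $N_i$'s, and $s_i$'s are tuned so that (i) the total travel $\sum_i D_i$ is a small constant fraction of $B$; (ii) each cluster $C_i$ is large enough that, with constant probability, at least $\Omega(1)$ of its jobs instantiate to size $0$ before any job instantiates to size $s_i$; and (iii) a single ``bad'' realization of size $s_i$ at stage $i$ eats up a substantial fraction of the budget that remains upon reaching $C_i$.

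\textbf{Step 1 (adaptive lower bound).} Consider the policy that walks rightwards along the line and at each cluster $C_i$ runs jobs one at a time, banking reward on the size-$0$ outcomes and leaving $C_i$ the moment one job returns size $s_i$ (or after collecting enough free rewards). Since $p_i = 1/s_i$ is tiny while $N_i$ is large, the expected reward collected within $C_i$ is $\Omega(1)$, and by the budget calibration above, with constant probability the policy successfully reaches all $k$ clusters. Linearity of expectation then gives adaptive expected reward $\Omega(k)$.

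\textbf{Step 2 (non-adaptive upper bound).} A non-adaptive policy is a permutation $\pi$ of vertices; on the line, only the subset $T$ visited and its left-to-right order matter, since any detour is wasteful. The crucial point is that a non-adaptive policy cannot abort a job after seeing its outcome: the size is revealed only by paying it. Hence if $T$ contains $t_i$ vertices from cluster $C_i$, the total processing time consumed at cluster $C_i$ is a sum of $t_i$ independent $\{0,s_i\}$-valued random variables with mean $t_i$ and variance $\Theta(t_i \cdot s_i)$. The plan is to argue, via a second-moment/Paley-Zygmund style anti-concentration bound applied stagewise, that for any choice of $T$ the non-adaptive policy exhausts its budget by around stage $\sqrt{k}$ with constant probability: either $T$ stuffs many jobs into early clusters (capping its reach), or it spreads thinly (capping its per-cluster reward). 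In either case the expected reward is $O(\sqrt{k})$.

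\textbf{Combining and main obstacle.} The ratio of the two bounds is $\Omega(k/\sqrt{k}) = \Omega(\sqrt{\log\log B})$, giving the theorem. The main obstacle is Step 2: one must show the $O(\sqrt{k})$ upper bound \emph{uniformly} over all subsets $T$, not merely against the natural left-to-right greedy choice. This requires a delicate stage-by-stage martingale argument on the accumulated processing time along $\pi$, showing that the increments at the first few stages have enough variance to blow the budget before reward $\Omega(\sqrt k)$ has been accumulated; the doubly-exponential spacing of the $s_i$'s is what makes the variance at stage $i$ dominate the entire history, preventing a clever $T$ from smoothing out the noise.
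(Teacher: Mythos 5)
Your proposed construction is genuinely different from the paper's, but it has a fatal flaw in Step~2: a non-adaptive policy that visits each cluster and attempts exactly \emph{one} job per cluster already achieves expected reward $\Omega(k)$, matching the adaptive policy. Each job in cluster $C_i$ instantiates to size $s_i$ only with probability $1/s_i$, so with one job per cluster the probability that \emph{any} positive size ever occurs is $\sum_i 1/s_i = o(1)$; since the total travel is set to be a small fraction of $B$, with high probability this non-adaptive policy finishes all $k$ jobs within budget and collects reward $k$. The ``variance blows the budget'' intuition does not apply: the variance of each cluster's contribution is $\Theta(s_i)$ only because of an event of probability $1/s_i$, and an anti-concentration argument cannot manufacture a constant-probability bad event out of this. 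More structurally, your instance is essentially a concatenation of uniform stochastic-knapsack blocks placed along a line, and stochastic knapsack itself has $O(1)$ adaptivity gap, so no choice of $N_i, s_i, D_i$ within this template will produce a super-constant gap. A secondary issue: even granting an $O(\sqrt k)$ non-adaptive bound, with $k = \Theta(\sqrt{\log\log B})$ the ratio $k/\sqrt k = \sqrt k$ gives only $\Omega((\log\log B)^{1/4})$; you would need $k = \Theta(\log\log B)$ stages.

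The paper's construction differs in the two ingredients your instance is missing. First, rather than uniform clusters it uses a depth-$L$ binary tree ($L = \Theta(\log\log B)$) in which every node has a \emph{distinct} Bernoulli size calibrated to the residual budget along the adaptive path, so that the adaptive policy can always continue to a leaf. Second, and crucially, the reward of a node is $(1-p)^{\tau(v)}$ where $\tau(v)$ counts right-branches to $v$: the adaptive policy branches left on a zero outcome (keeping reward high) and right on a positive outcome (sacrificing reward but staying alive), collecting $\Omega(L)$ in expectation, whereas any fixed root-leaf path either has many right-branches (rewards decay geometrically to a total of $O(\sqrt L)$) or many left-branching nodes (after two positive instantiations the budget is blown, bounding the reward to $O(\sqrt L)$ by a coupon-collector argument). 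The geometric reward decay and the budget calibration of the sizes are what break the symmetry between adaptive and non-adaptive policies; without them the non-adaptive policy can mimic the adaptive one by committing to a small fixed number of jobs per ``stage.''
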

This answers negatively the $O(1)$-adaptivity gap conjectured in~\cite{GKNR12}, and comes close to the $O(\log\log B)$ upper bound proved there. To the best of our knowledge, this gives the first non-trivial $\omega(1)$ adaptivity gap for a natural problem.

The lower bound proceeds in three steps and is based on a somewhat intricate construction.
 We begin with a basic instance described by a directed binary tree of height $\log\log B$ that essentially represents the optimal adaptive policy. Each processing time is a Bernoulli random variable: it is either zero, in which case the optimal policy goes to its left child, or a carefully set positive value, in which case the optimal policy goes to its right child.
The edge distances and processing times are chosen so that when a non-zero size instantiates, it is always possible to take a right edge, while the left edges can only be taken a few times. On the other hand, if the non-adaptive policy chooses a path with mostly right edges, then it cannot collect too much reward.

In the first step of the proof, we show that this directed tree instance has an $\Omega((\log\log B)^{1/2})$ adaptivity gap.
The main technical difficulty here is to show that every fixed path (which may possibly skip vertices, and gain advantage over the adaptive policy) either runs out of budget $B$ or collects low expected reward.
In the second step, we drop the directions on the edges and show that the adaptivity gap continues to hold (up to constant factors). The optimum adaptive policy that we compare against remains the same as in the directed case, and the key issue here is to show that
the non-adaptive policy cannot gain too much by backtracking along the edges.
To this end, we use some properties of the distances on edges in our instance.
In the final step, we embed the undirected tree onto a line at the expense of losing another $O(1)$ factor in  the adaptivity gap.
The problem here is that pairs of nodes that are far apart on the tree may be very close on the line.
To get around this, we exploit the asymmetry of the tree distances and some other structural properties to show that this has limited effect.

\medskip

\noindent  {\bf Correlated Stochastic Orienteering:}
Next, we consider the correlated stochastic orienteering problem, where the reward at each vertex is also random and possibly correlated with its processing time (the distributions are still independent across vertices). In this setting, we prove the following.% upper bound on the adaptivity gap.
\begin{theorem}\label{thm:corr-loglog-UB}
The adaptivity gap of correlated stochastic orienteering is $O(\log\log B)$.
\end{theorem}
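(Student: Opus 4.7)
The plan is to reduce the correlated instance to an \emph{uncorrelated} stochastic orienteering instance, then invoke the $O(\log\log B)$ adaptivity gap for the uncorrelated setting proved in \cite{GKNR12}. For each vertex $v$ with joint size-reward distribution $(S_v, R_v)$, partition the support of $S_v$ into dyadic bands $I_j = (2^{j-1}, 2^j]$ for $j = 0, 1, \ldots, \lceil \log B\rceil$, and set $q_{v,j} := \Pr[S_v \in I_j]$ and $r_{v,j} := \E[R_v \cdot \mathbf{1}\{S_v \in I_j\}]$. Build a surrogate uncorrelated instance on the same metric by placing $O(\log B)$ virtual jobs $v_0, v_1, \ldots$ at the location of $v$: job $v_j$ has \emph{deterministic} reward $r_{v,j}$ and Bernoulli size equal to $2^j$ with probability $q_{v,j}$ (and $0$ otherwise). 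Since rewards are now deterministic, this is a legitimate instance of the uncorrelated problem.

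The argument then proceeds in three steps. \emph{Step 1} establishes $\Opt_{\mathrm{ad}}(\text{surrogate, budget } O(B)) = \Omega(\Opt_{\mathrm{ad}}(\text{original, budget } B))$: given the optimal adaptive policy $\pi^*$ for the original, simulate it on the surrogate by visiting \emph{all} virtual jobs $v_0, v_1, \ldots$ at $v$'s location whenever $\pi^*$ visits $v$. The expected reward collected at $v$ is $\sum_j r_{v,j} = \E[R_v]$, and the expected processing time used at $v$ is $\sum_j 2^j q_{v,j} \leq 2\,\E[S_v]$, so a constant-factor budget inflation absorbs the overhead (feasibility with constant probability follows from Markov's inequality). \emph{Step 2} applies the uncorrelated $O(\log\log B)$ gap of \cite{GKNR12} to the surrogate, yielding a non-adaptive ordering $\sigma$ of virtual jobs with expected reward $\Omega(\Opt_{\mathrm{ad}}(\text{original})/\log\log B)$. \emph{Step 3} projects $\sigma$ back to a non-adaptive ordering $\sigma'$ of physical vertices by consolidating co-located virtual visits at $v$ into a single real visit. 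This can only decrease travel; the expected processing cost at $v$ is $\E[S_v]$, no larger than the $\sum_{j \in J_v} 2^j q_{v,j} \leq 2\,\E[S_v]$ used by $\sigma$ at $v$'s location; and the expected reward collected at $v$ in the original is $\E[R_v] = \sum_j r_{v,j} \geq \sum_{j \in J_v} r_{v,j}$, where $J_v$ indexes the bands visited by $\sigma$.

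The main obstacle is handling \emph{large} jobs whose size can exceed, say, $B/4$ with non-trivial probability: such a job fits in any feasible path at most once, so the band-decomposition's constant-factor size inflation breaks down for it (a single accidental realization of a size-$B$ outcome destroys feasibility). Following \cite{GKNR12}, I would split $\Opt_{\mathrm{ad}}$ into contributions from ``small'' and ``large'' jobs, handle the large part by taking the single most-rewarding large vertex as a trivial non-adaptive policy (losing only a constant factor by an averaging/enumeration argument), and apply the reduction above only to the small-job instance. A secondary concern is ensuring $\sigma'$ remains feasible on the original instance with constant probability; this again follows from Markov's inequality on the expected total size along $\sigma'$, combined with a standard truncation of the per-job size at $B/4$ on the small-job sub-instance.
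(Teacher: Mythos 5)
Your proposal takes a genuinely different route from the paper's proof, and unfortunately the route has a fatal gap in Step~3. The paper does not reduce to the uncorrelated problem; it directly extends the martingale argument of~\cite{GKNR12} by applying Zhang's concentration inequality (Theorem~\ref{thm:zhang}) separately for each of $\log B$ truncation thresholds $2^j$, taking a union bound to extract a good root-to-leaf path $\sigma$ (Lemma~\ref{lem:corr-NAfromA}), and then converting $\sigma$ into a non-adaptive policy by sampling each job with probability $\Theta(1/\log\log B)$. No surrogate instance is ever formed.

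The flaw in your Step~3 is the claim that the expected processing cost $\E[S_v]$ incurred by $\sigma'$ at a physical vertex $v$ is ``no larger than $\sum_{j\in J_v} 2^j q_{v,j}$,'' the cost paid by $\sigma$ at $v$'s location. That inequality holds only when $J_v$ contains \emph{all} bands in the support of $S_v$, and a non-adaptive policy on the surrogate has every incentive \emph{not} to do that: it can pick up the reward $r_{v,j}$ from the cheap (small-$j$) virtual jobs while skipping the expensive large-$j$ virtual jobs entirely, because in the surrogate those outcomes live on separate, independently-priced copies. The original instance offers no such option --- visiting $v$ once exposes you to the full distribution of $S_v$. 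Concretely, consider $n$ co-located vertices, budget $B$, each with $(S_v,R_v)=(0,M)$ w.p.~$1/2$ and $(S_v,R_v)=(B,0)$ w.p.~$1/2$. Every policy on the original (adaptive or not) collects expected reward $\Theta(M)$: with probability $1/2$ the very first job blows the budget. But the surrogate has, at each of the $n$ locations, a zero-size deterministic virtual job of reward $M/2$, so a non-adaptive surrogate path that visits all $n$ of these collects $nM/2$ at zero processing cost. Projecting this path back to the original as you describe yields reward only $\Theta(M)$, so the third inequality in your chain, $\Opt_{\mathrm{na}}(\text{original}) \geq \Omega(\Opt_{\mathrm{na}}(\text{surrogate}))$, loses a factor $\Theta(n)$, which is unbounded. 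Truncating sizes at $B/4$ does not help, since the problem is caused by the zero-size/high-reward outcome coexisting with a high-size/low-reward outcome, and that coexistence survives truncation. (The same example also shows one cannot just drop the size-$0$ band from the surrogate: then $\Opt_{\mathrm{ad}}(\text{surrogate})$ collapses to $0$ and your Step~1 fails instead.) The structural reason the reduction cannot work is that the band decomposition turns a single coupled size-reward lottery into several independent ones that the non-adaptive solver can cherry-pick; the correlation you are trying to eliminate is exactly what forces the original policy to pay for the bad tail in order to sample the good one, and the paper's direct argument is designed to respect that coupling rather than sever it.
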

 This improves upon the $O(\log B)$-factor adaptivity gap that is implicit in~\cite{GKNR12}, and matches the adaptivity gap upper bound known for uncorrelated stochastic orienteering. The  proof makes use of a martingale concentration inequality~\cite{Zhang05} (as~\cite{GKNR12} did for the uncorrelated problem), but dealing with the reward-size correlations requires a different definition of the stopping time. For the uncorrelated case, the stopping time~\cite{GKNR12} used a single ``truncation threshold'' (equal to $B$ minus the travel time)  to compare the instantiated sizes and their expectation.
In the correlated setting, we use $\log B$ different
truncation thresholds (all powers of $2$), irrespective of the travel time, to determine the stopping criteria.

\medskip

\noindent{\bf Algorithm for Correlated Stochastic Orienteering:}
Using some structural properties in the proof of the adaptivity gap upper bound above, we obtain an improved {\em quasi-polynomial}\footnote{A quasi-polynomial time algorithm runs in $2^{\log^c N}$ time on inputs of size $N$, where $c$ is some constant.} time algorithm for correlated stochastic orienteering.
\begin{theorem}\label{thm:corr-NA}
There is an $O\left( \alpha\cdot \log^2\log B/\log\log\log B\right)$-approximation algorithm for correlated stochastic orienteering, running in time $(n+\log B)^{O(\log B)}$. Here $\alpha\le \min\{O(\log n),\, O(\log B)\}$ denotes the best approximation ratio for the orienteering with deadlines problem.
\end{theorem}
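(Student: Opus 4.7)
The plan is to combine the $O(\log\log B)$ adaptivity gap of Theorem~\ref{thm:corr-loglog-UB} with an enumeration-based algorithm that invokes the $\alpha$-approximation for \ddo as a subroutine. By the adaptivity gap, it suffices to compute a non-adaptive tour --- a fixed permutation of vertices, processed in order until the budget runs out --- whose expected reward is within an $O(\alpha\cdot \log\log B/\log\log\log B)$ factor of the optimal non-adaptive value; multiplying by the adaptivity gap yields the claimed $O(\alpha\cdot \log^2\log B/\log\log\log B)$ ratio.

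I would build the non-adaptive tour by exploiting the $\log B$ truncation thresholds $\{2^k : 0\le k\le \log B\}$ that drive the proof of Theorem~\ref{thm:corr-loglog-UB}. For each scale $k$, define a \ddo instance on the same metric by assigning vertex $v$ a deterministic size proxy (e.g.~$2^k\cdot \Pr[\text{size}_v>2^{k-1}]$, chosen so that the total proxy size along a tour bounds the expected processing time at scale $k$) and reward $\E[\text{reward}_v \cdot \mathbf{1}\{\text{size}_v\le 2^k\}]$, with deadline equal to the travel budget allocated to scale $k$. The truncation identity underlying the adaptivity-gap proof should guarantee that the expected reward of any non-adaptive tour decomposes, up to constants, into per-scale contributions, each of which is upper-bounded by the optimum of the corresponding \ddo instance. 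Applying the $\alpha$-approximation per scale then recovers an $\Omega(1/\alpha)$ fraction of each scale's contribution.

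To assemble the per-scale \ddo solutions into a single non-adaptive tour I would enumerate, for each of the $O(\log B)$ scales, the number of vertices and the share of the travel budget allocated to that scale; this enumeration has size $(n+\log B)^{O(\log B)}$, matching the stated running time. The main obstacle --- and the source of the refined $\log\log B/\log\log\log B$ factor --- is that each vertex may appear only once in the tour, so the per-scale \ddo instances cannot be solved fully independently on a shared vertex set. I would address this by proving that the near-optimal non-adaptive tour concentrates its reward in only $O(\log\log B/\log\log\log B)$ of the $\log B$ scales, so that concatenating the \ddo solutions (on disjoint vertex sets) for a correctly guessed profile loses only that many factors rather than the naive $\log B$. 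The required concentration should follow by a pigeonhole applied to the martingale-based scale decomposition of Theorem~\ref{thm:corr-loglog-UB}: because that analysis already absorbs an $O(\log\log B)$ factor across all $\log B$ scales, only polylogarithmically many scales can carry a non-negligible share of the reward. Returning the best tour across all enumerated profiles then delivers the desired approximation.
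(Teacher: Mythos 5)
Your high-level plan (enumerate over $O(\log B)$ scales, invoke deadline orienteering per scale, pay $O(\log\log B)$ for the adaptivity gap) is in the right neighborhood, but the step that is supposed to produce the refined $\log\log B/\log\log\log B$ factor is where the argument breaks, and the way you propose to combine the per-scale solutions does not work as stated.

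First, the concentration claim is unfounded. You assert that the near-optimal non-adaptive path concentrates its reward on only $O(\log\log B/\log\log\log B)$ of the $\log B$ scales, citing the $O(\log\log B)$ factor absorbed in the adaptivity-gap proof. But that $O(\log\log B)$ arises uniformly (as the $K$ parameter in the prefix-size bound that every scale respects simultaneously); it does not force the reward to be concentrated. Nothing prevents an instance in which each of the $\Theta(\log B)$ scales carries a $\Theta(1/\log B)$ fraction of the optimum, and then your pigeonhole gives nothing. In the paper's proof, the $\frac{\log\log B}{\log\log\log B}$ factor comes from a completely different place: it is the deviation threshold in a Chernoff bound used in randomized rounding with alteration. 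Concretely, after solving a configuration LP, each segment's candidate path is sampled independently; the prefix-mean-size constraints at each scale $h$ are then sums of independent $[0,1]$-bounded variables with expectation $\le 1/2$, and $\Pr[Z_h > \log L/\log\log L] \le 1/(2L)$ by Chernoff. A union bound over the $L$ constraints makes the alteration succeed with constant probability, and $\log L/\log\log L = \Theta(\log\log B/\log\log\log B)$ is exactly the loss paid. That is a probabilistic-rounding phenomenon, not a structural property of the optimum.

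Second, ``concatenating \ddo solutions on disjoint vertex sets for a correctly guessed profile'' is not a well-defined operation that preserves feasibility or reward. The per-scale subproblems share the metric, the residual budget, and the vertex set; there is no argument that forcing the vertex sets disjoint at guess time (or solving independent \ddo instances) is close to optimal. The paper handles vertex multiplicity via an explicit LP constraint (each vertex chosen at most once across all segments) and then a dropping/alteration step. Also, the subroutine the paper actually needs is \emph{knapsack} deadline orienteering: each segment must simultaneously respect a deadline structure and a capped-mean-size knapsack budget, and the crucial trick that only a \emph{single} knapsack constraint per segment is needed (the capped-mean ratios $\mu^{j}_v/2^j$ are monotone in $j$) is what keeps the per-segment approximation at $O(\alpha)$. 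Your ``deterministic size proxy'' hints at this but is not developed, and plain \ddo without the knapsack side-constraint does not control prefix sizes. Finally, the paper's enumeration is over $O(\log B)$ portal vertices plus a mid-vertex and a power-of-two length offset per segment (giving the $(n\log B)^{O(\log B)}$ bound), not over per-scale vertex counts and budget shares; these are different guesses with different feasibility consequences.

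In short: the missing idea is the configuration LP over per-segment \kdo solutions, solved approximately via the Ellipsoid/packing-LP machinery with \kdo as a dual separation oracle, followed by randomized rounding with alteration. Your structural concentration hypothesis, which replaces all of that, is false in general.
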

The {\em orienteering with deadlines} problem is defined formally in Section~\ref{subsec:defn}.
Previously,~\cite{GKNR12} gave a  polynomial time $O(\alpha\cdot \log B)$-approximation algorithm for correlated stochastic orienteering. They also showed that this problem is at least as hard to approximate as the deadline orienteering problem, i.e.~an $\Omega(\alpha)$-hardness of approximation (this result also holds for quasi-polynomial time algorithms). Our algorithm improves the approximation ratio to $O(\alpha\cdot \log^2\log B)$, but at the expense of quasi-polynomial running time. We note that the running time in Theorem~\ref{thm:corr-NA} is quasi-polynomial for general inputs where probability distributions are described {\em explicitly}, since the input size is $n\cdot B$. If probability distributions are specified implicitly, the runtime is quasi-polynomial only for $B\le 2^{poly(\log n)}$.

The algorithm in Theorem~\ref{thm:corr-NA} is based on finding an approximate non-adaptive policy, and losing an $O(\log\log B)$-factor on top by Theorem~\ref{thm:corr-loglog-UB}. There are three main steps in the algorithm: (i) we enumerate over $\log B$ many ``portal'' vertices (suitably defined) on the optimal policy; (ii) using these portal vertices, we solve (approximately) a {\em configuration LP relaxation} for paths between portal vertices; (iii) we randomly round the LP solution. The quasi-polynomial running time is only due to the enumeration. In formulating and solving the configuration LP relaxation, we also use some ideas from the earlier $O(\alpha\cdot \log B)$-approximation algorithm~\cite{GKNR12}. Solving the configuration LP requires an algorithm for deadline orienteering (as the dual separation oracle), and incurs an $\alpha$-factor loss in the approximation ratio.
This configuration LP is a ``packing linear program'', for which we can use fast combinatorial algorithms~\cite{PST91,GK07}. The final rounding step involves randomized rounding with alteration, and loses an extra $O(\frac{\log\log B}{\log\log\log B})$ factor.

\subsection{Related Work}
\label{s:prev-work}
The deterministic orienteering problem was introduced by Golden et al.~\cite{GLV87}. It has several applications, and many exact approaches and heuristics have been applied to this problem, see eg. the survey~\cite{VSO11}. The first constant-factor approximation algorithm was due to Blum et al.~\cite{BCKLMM07}. The approximation ratio has been improved~\cite{BBCM04,CKP08} to the current best $2+\epsilon$.

Dean et al.~\cite{DeanGV08} were the first to consider stochastic packing problems in this adaptive optimization framework: they introduced the {\em stochastic knapsack} problem (where items have random sizes), and obtained a constant-factor approximation algorithm and adaptivity gap. The approximation ratio has subsequently been improved to $2+\epsilon$, due to~\cite{BGK11,Bhalgat11}. The stochastic orienteering problem~\cite{GKNR12} is a common generalization of both deterministic orienteering and stochastic knapsack.

Gupta et al.~\cite{GKMR11} studied a generalization of the stochastic knapsack problem, to the setting where the reward and size of each item  may be correlated, and gave an $O(1)$-approximation algorithm and adaptivity gap for this problem.
Recently, Ma~\cite{M14} improved the approximation ratio to $2+\epsilon$.

The correlated stochastic orienteering problem was studied in~\cite{GKNR12}, where the authors obtained an $O(\log n\cdot \log B)$-approximation algorithm and an $O(\log B)$ adaptivity gap. They also showed the  problem to be at least as hard to approximate as the deadline orienteering problem, for which the best approximation ratio known is $O(\log n)$~\cite{BBCM04}.

A related problem to stochastic orienteering was considered by Guha and Munagala~\cite{GuhaM09} in the context of the {\em multi-armed bandit} problem. As observed in~\cite{GKNR12}, the approach in~\cite{GuhaM09} yields an $O(1)$-approximation algorithm (and adaptivity gap) for the variant of stochastic orienteering with two {\em separate} budgets for the travel and processing times. In contrast, our result shows that stochastic orienteering (with a single budget) has super-constant adaptivity gap.

\subsection{Problem Definition}\label{subsec:defn}
%\textbf{Stochastic Orienteering.}
An instance of stochastic orienteering (\so) consists of a metric space
$(V, d)$ with vertex-set $|V| = n$ and symmetric integer distances $d: V \times V \rightarrow \Z^+$ (satisfying the
triangle inequality) that represent travel times.  Each vertex $v \in V$ is associated with a stochastic job, with a deterministic
reward $r_v\ge 0$ and a random processing time (also called size) $\size_v \in \Z^+$ distributed according to a
known probability distribution. % $\pi_v: \R^+ \rightarrow [0,1]$.
The processing times are independent across vertices. We are also given a starting ``root'' vertex $\rho\in V$, and
a budget $B\in \Z^+$ on the total time available. A solution (policy) must start from $\rho$, and visit a sequence of vertices (possibly adaptively). Each job is executed non-preemptively, and the solution knows the precise processing time only upon completion of the job. The objective is to maximize the expected reward from jobs that are completed before the horizon $B$; note that there is no reward for partially completing a job.
The approximation ratio of an
algorithm is the ratio of the expected reward of an optimal policy to that of the algorithm's policy.

We assume that all times (travel and processing) are integer valued and lie in $\{0,1,\cdots,B\}$.
In the {\em correlated} stochastic orienteering problem (\co), the job sizes and rewards are both random, and correlated with each other.
The distributions across different vertices are still independent.  For each vertex $v\in V$, we use $S_v$ and $R_v$ to denote its random size and reward, respectively. We assume an explicit representation of the distribution of each job $v\in V$: for each $s\in\{0,1,\cdots,B\}$, job $v$ has size $S_v=s$ and reward $r_v(s)$ with probability $\Pr[S_v=s]=\pi_v(s)$. Note that the input size is $nB$.

An \emph{adaptive policy} is a decision tree where each node is labeled by a job/vertex of $V$, with the outgoing arcs
from a node labeled by $u$ corresponding to the possible sizes in the support of $S_u$.  A \emph{non-adaptive
policy} is simply given by a path $P$ starting at $\rho$: we just traverse this path, processing
the jobs that we encounter, until the total (random) size of the jobs plus the distance traveled reaches $B$. A
randomized non-adaptive policy may pick a path $P$ at random from some distribution before it knows any of the
size instantiations, and then follows this path as above. Note that in a non-adaptive policy, the order in which jobs
are processed is independent of their processing time instantiations.

%Finally, for any integer $m\ge 0$ we use $[m]$ to denote the set $\{0, 1, \ldots, m\}$.

In our algorithm for \co, we use the {\em deadline orienteering} problem as a subroutine.
The input to this problem is a metric $(U,d)$ denoting travel times, a reward and deadline at each vertex, start ($s$) and end ($t$) vertices, and length bound $D$. The objective is to compute an $s-t$ path of length at most $D$ that maximizes the reward from vertices visited before their deadlines.
The best approximation ratio for this problem is $\alpha=\min\{O(\log n), O(\log B)\}$  due to~\cite{BBCM04,CKP08}.

\subsection{Organization}
The adaptivity gap lower bound appears in Section~\ref{sec:ad-gap}, where we prove Theorem~\ref{thm:ad-gap}. In Section~\ref{sec:corr-ad-gap}, we consider the correlated stochastic orienteering problem and prove the upper bound on its adaptivity gap (Theorem~\ref{thm:corr-loglog-UB}). Finally, the improved quasi-polynomial time algorithm (Theorem~\ref{thm:corr-NA}) for correlated stochastic orienteering appears in Section~\ref{sec:corr-alg}

\section{Lower Bound on the Adaptivity Gap}\label{sec:ad-gap}
Here we describe our lower bound instance which shows that the adaptivity gap is $\Omega(\sqrt{\log \log B})$ even for an undirected line metric. The proof and the description of the instance is divided into three steps. First we describe an instance where the underlying graph is a directed complete binary tree, and prove the lower bound for it. The directedness ensures that all policies follow a path from root to a leaf (possibly with some nodes skipped) without any backtracking. Second, we show that the directed assumption can be removed at the expense of an additional $O(1)$ factor in the adaptivity gap. In particular this means that the nodes on the tree can be visited in any order starting from the root. Finally, we ``embed" the undirected tree into a line metric, and show that the adaptivity gap stays the same up to a constant factor.

\subsection{Directed Binary Tree}

Let $L\ge 2$ be an integer and $p:=1/\sqrt{L}$. We define a complete binary tree \T of height $L$ with root $\rho$.
All the edges are directed from the root towards the leaves.
The {\em level} $\ell(v)$ of any node $v$ is the number of nodes on the shortest path from $v$ to any leaf. So all the leaves are at {\em level} one and the root $\rho$ is at level $L$. We refer to the two children of each internal node as the left and right child, respectively. Each node $v$ of the tree has a job with some deterministic reward $r_v$ and a random size $S_v$. Each random variable $S_v$ is Bernoulli, taking value zero with probability $1-p$ and some positive value $s_v$ with the remaining probability $p$. The budget for the instance is $B = 2^{2^{L+1}}$.

To complete the description of the instance, we need to define the values of the rewards $r_v$, the job sizes $S_v$, and the distances $d(u,v)$ on edges $e=(u,v)$.

{\bf Defining rewards.} For any node $v$, let $\tau(v)$ denote the number of right-branches taken on the path from the root to $v$. We define the reward of each node $v$ to be $r_v:=(1-p)^{\tau(v)}$.

{\bf Defining sizes.} Let $\q{x}:=2^{x}$ for any $x\in \mathbb{R}$. The size at the root, $s_\rho:=\q{2^L}=2^{2^L}$. The rest of the sizes are defined recursively. For any non-root node $v$ at level $\ell(v)$ with $u$ denoting its parent, the size is:
$$s_v:=\left\{
\begin{array}{ll}
s_u\cdot \q{2^{\ell(v)}} & \mbox{ if  $v$ is the right child of $u$}\\
s_u\cdot \q{-2^{\ell(v)}} & \mbox{ if  $v$ is the left child of $u$}
\end{array}\right.
$$

In other words, for a node $v$ at level $\ell$, consider the path $P=(\rho=u_L,u_{L-1},\ldots,u_{\ell+1},u_\ell=v)$  from $\rho$ to $v$. Let $k = \sum_{j=L}^{\ell} (-1)^{i(u_j)} 2^j$
where $i(u_j)=1$ if $u_j$ is the left child of its parent $u_{j+1}$, and $0$ otherwise (we assume $i(\rho)=0$).
Then $s_v = \q{k}$.

Observe that for a node $v$, each node $u$ in its left (resp. right) subtree has $s_u < s_v$ (resp. $s_u > s_v$).

It remains to define distances on the edges. This will be done in an indirect way, and it is instructive to first consider the adaptive policy that we will work with. In particular, the distances will be defined in such a way that the adaptive policy can always continue till it reaches a leaf node.

{\bf Adaptive policy \A.} Consider the policy \A  that goes left at node $u$ whenever it observes size zero at $u$, and goes right otherwise.

Clearly, the {\em residual budget} $b(v)$ at node $v$ under $\A$  will satisfy the following:
$b(\rho) = B = \q{2^{L+1}}= 2^{2^{L+1}}$, and
$$b(v) := \left\{
\begin{array}{ll}
b(u)-s_u - d(u,v) & \mbox{ if  $v$ is the right child of $u$}\\
b(u) - d(u,v) &  \mbox{ if  $v$ is the left child of $u$}
\end{array}\right.
$$

{\bf Defining distances.} We will define the distances so that the residual budgets $b(\cdot)$ under $\A$ satisfy the following:
$b(\rho)=B$, and for any node $v$ with parent $u$,
$$b(v) := \left\{
\begin{array}{ll}
b(u)-s_u & \mbox{ if  $v$ is the right child of $u$}\\
s_u &  \mbox{ if  $v$ is the left child of $u$}
\end{array}\right.
$$

In particular, this implies the following lengths on edges. For any node $v$ with parent $u$,
$$d(u,v) := \left\{
\begin{array}{ll}
0 & \mbox{ if  $v$ is the right child of $u$}\\
b(u)-s_u = b(u)-b(v) &  \mbox{ if  $v$ is the left child of $u$}
\end{array}\right.
$$

In Claim \ref{cl:well-defn} below we will show that the distances are non-negative, and hence well-defined.

%precisely the remaining budget at each node under policy \A. We will show that \A has expected reward $\Omega(\sqrt{L})$ times any non-adaptive policy.
Figure \ref{fig:tree-def} gives a pictorial view of the instance.

\begin{figure}[ht]
  \begin{centering}
    \includegraphics[scale=0.8]{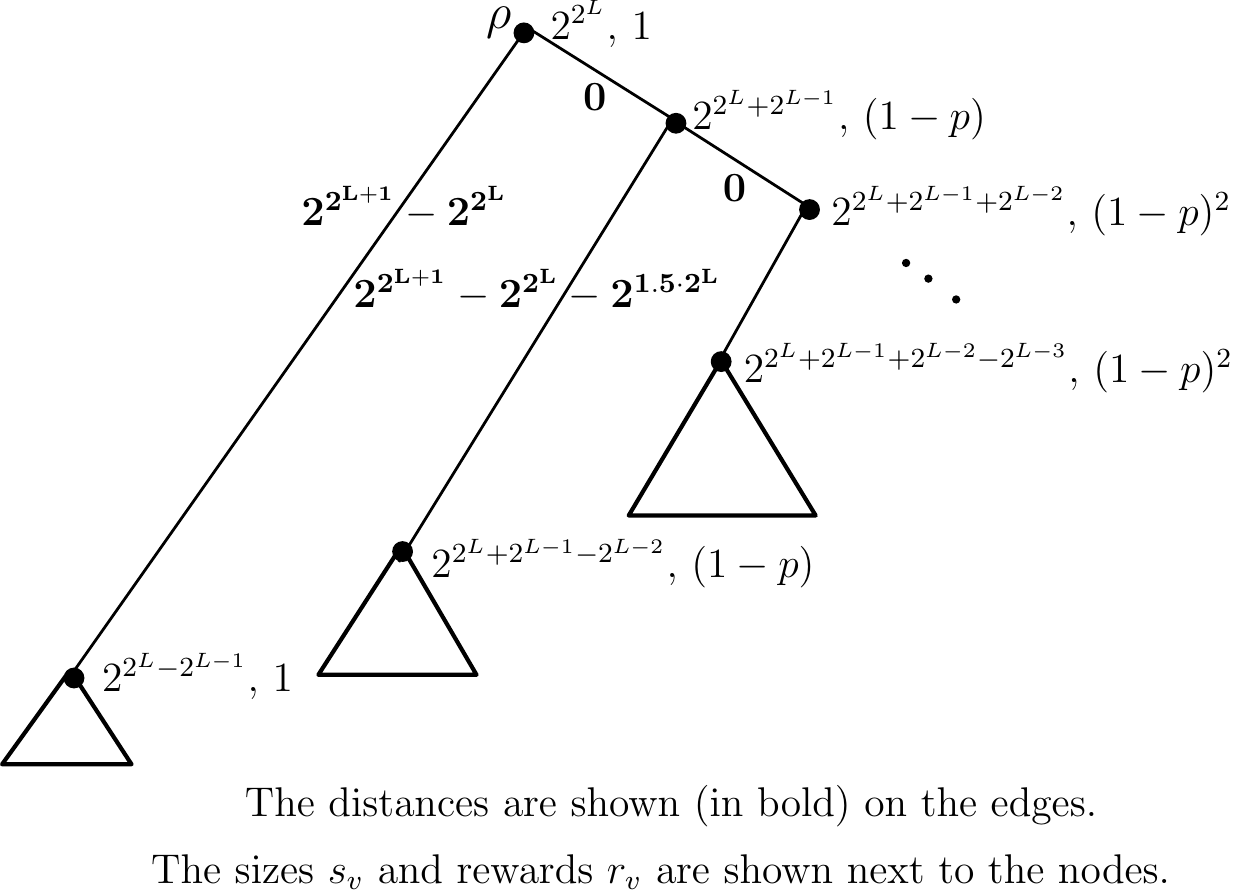}
    \caption{The binary tree \T. \label{fig:tree-def}}
  \end{centering}
\end{figure}

\paragraph{Basic properties of the instance}

Let $A_d(v)$ denote the distance traveled by the adaptive strategy \cal{A} to reach $v$, and let $A_s(v)$ denote the total size instantiation before reaching $v$. By the definition of the budgets, and as \cal{A} takes the right branch at $u$ iff the size at $u$ instantiates, we have the following.

\begin{claim}
\label{lem:bv}
For any node $v$, the budget $b(v)$ satisfies $b(v)=B - A_d(v) - A_s(v)$.
\end{claim}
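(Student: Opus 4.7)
The plan is to verify the identity by induction on $\level(v)$, using the recursive definitions of $b(\cdot)$, of the edge lengths, and of the branching behavior of the adaptive policy $\A$. For the base case $v=\rho$ the identity is immediate, since $b(\rho)=B$ by definition and $A_d(\rho)=A_s(\rho)=0$ because $\A$ starts at the root with no travel or processing performed.

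For the inductive step, let $u$ be the parent of $v$ and assume $b(u)=B-A_d(u)-A_s(u)$. I split into two cases based on how $\A$ descends from $u$ to $v$. If $v$ is the right child of $u$, then $\A$ reaches $v$ precisely when the size at $u$ instantiates to $s_u$, so $A_s(v)=A_s(u)+s_u$; the right edge has length zero, so $A_d(v)=A_d(u)$; and the budget recursion gives $b(v)=b(u)-s_u$. A one-line substitution into the inductive hypothesis then yields $b(v)=B-A_d(v)-A_s(v)$. If instead $v$ is the left child of $u$, then $\A$ reaches $v$ precisely when the size at $u$ instantiates to $0$, so $A_s(v)=A_s(u)$; the left edge has length $b(u)-s_u$ by the distance definition, so $A_d(v)=A_d(u)+b(u)-s_u$; and the budget recursion gives $b(v)=s_u$. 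Substituting into $B-A_d(v)-A_s(v)$ and applying the inductive hypothesis, the two copies of $b(u)$ telescope and leave exactly $s_u=b(v)$, closing the induction.

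I do not expect a genuine obstacle in this argument: the lemma is essentially a consistency check built into the way $d$ and $b$ were defined along the trajectory of $\A$. The only thing worth keeping straight is the correspondence between $\A$'s two branching rules and the two branches of the recursion for $b$; once that is made explicit in each case, the remaining algebra is immediate and does not use the non-negativity of distances proved in Claim~\ref{cl:well-defn}.
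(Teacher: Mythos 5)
Your induction is correct and matches the paper's (terse) justification, which simply notes that the identity follows from the definitions of $b(\cdot)$ and $d(\cdot,\cdot)$ once one observes that $\A$ branches right at $u$ exactly when the size at $u$ instantiates. You have merely written out the implicit level-by-level induction, with both cases handled exactly as the definitions dictate.
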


\begin{claim}
\label{size:val}
If a node $w$ is a left child of its parent, then $b(w) = s_w \cdot \q{2^{\ell(w)}}$.
\end{claim}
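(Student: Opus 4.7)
The plan is to derive the identity by applying, in turn, the two recursive definitions that govern the instance: the definition of the residual budget $b(\cdot)$ under the adaptive policy $\A$, and the recursive definition of the sizes $s_v$. Both recursions treat left and right children differently, and since the claim concerns only left children, I will invoke each recursion exclusively in its ``left child of $u$'' case.

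Let $u$ denote the parent of $w$, so that $\ell(u)=\ell(w)+1$. Since $w$ is the left child of $u$, the budget recursion directly yields $b(w)=s_u$. Now apply the size recursion, again in the left-child case: this gives
\[
s_w \;=\; s_u\cdot \q{-2^{\ell(w)}},
\]
which I would rearrange to $s_u = s_w\cdot \q{2^{\ell(w)}}$. Combining the two equalities yields $b(w)=s_u=s_w\cdot \q{2^{\ell(w)}}$, as claimed.

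I do not foresee any real obstacle here; the argument is essentially just bookkeeping that ties the two recursive definitions together, and the only hypothesis used is that $w$ has a parent, which is guaranteed because $w$ is a left child. It is, however, worth flagging the qualitative content of the identity for later steps of the lower-bound proof: at any left-child node $w$ the remaining budget dominates the local size by the doubly-exponential factor $\q{2^{\ell(w)}}$, which is precisely what will let the adaptive policy $\A$ continue descending whenever it sees a zero-size instantiation, and correspondingly will be used to control what a non-adaptive policy can afford at such nodes.
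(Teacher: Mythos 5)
Your proof is correct and is essentially identical to the paper's own argument: both invoke the left-child case of the residual-budget recursion to get $b(w)=s_u$ and the left-child case of the size recursion to get $s_w=s_u\cdot\q{-2^{\ell(w)}}$, then combine. No gaps.
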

\begin{proof}
Let $u$ be the parent of $w$.
By definition of sizes, $s_w = s_u \cdot \q{-2^{\ell(w)}}$.
As $b(w) = s_u$ by the definition of residual budgets, the claim follows.
\end{proof}

\begin{claim}\label{cl:well-defn}
For any node $u$, we have $3\cdot s_u\le b(u)$. This implies that all the residual budgets and distances are non-negative.
 %and that $b(u) \geq b(v)$ for any node $u$ and its child $v$.
\end{claim}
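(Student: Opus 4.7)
The plan is to prove the claim by a downward induction on the level $\ell(u)$, but with a strengthened inductive hypothesis. Simply trying to carry $b(u) \ge 3 s_u$ through the recursion fails on the right-child case, because $s_v = s_u \cdot \q{2^{\ell(v)}}$ is much larger than $s_u$, while $b(v) = b(u) - s_u$ loses only an additive $s_u$. So the inductive statement I would actually prove is the much sharper bound
\[
b(u) \;\ge\; s_u \cdot \bigl(\q{2^{\ell(u)}} - 1\bigr),
\]
from which $b(u) \ge 3 s_u$ follows immediately since every node satisfies $\ell(u) \ge 1$, giving $\q{2^{\ell(u)}} = 2^{2^{\ell(u)}} \ge 4$.

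For the base case $u = \rho$, by the definitions $b(\rho) = B = \q{2^{L+1}} = \q{2^L}^2$ and $s_\rho = \q{2^L}$, so $b(\rho)/s_\rho = \q{2^L} = \q{2^{\ell(\rho)}}$, which is stronger than needed. For the inductive step, I would split into the two cases of the recursive definitions. If $v$ is a left child of $u$ at level $\ell = \ell(v)$, then $b(v) = s_u$ and $s_v = s_u \cdot \q{-2^\ell}$, so $b(v)/s_v = \q{2^\ell}$, again better than required and independent of the IH. If $v$ is a right child of $u$ at level $\ell$, then $b(v) = b(u) - s_u$ and $s_v = s_u \cdot \q{2^\ell}$. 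Applying the IH at level $\ell+1$ gives
\[
b(v) \;\ge\; s_u \bigl(\q{2^{\ell+1}} - 1\bigr) - s_u \;=\; s_u\bigl(\q{2^\ell}^2 - 2\bigr),
\]
and what I need is
\[
s_v\bigl(\q{2^\ell}-1\bigr) \;=\; s_u \cdot \q{2^\ell}\bigl(\q{2^\ell}-1\bigr) \;=\; s_u\bigl(\q{2^\ell}^2 - \q{2^\ell}\bigr).
\]
Since $\q{2^\ell} \ge 4 \ge 2$, the first quantity dominates the second, closing the induction.

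The main (only) obstacle is precisely this right-child step, which is why the weak inductive hypothesis $b(u) \ge 3 s_u$ does not work and a quadratic-in-$\q{2^\ell}$ slack must be tracked. Once the strengthened bound is established, the ``implies'' part is immediate: for right children $d(u,v) = 0 \ge 0$, and for left children $d(u,v) = b(u) - s_u \ge 3 s_u - s_u = 2 s_u \ge 0$, so all edge distances (and, in turn, all residual budgets) are non-negative, completing the proof of the claim.
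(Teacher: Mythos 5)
Your proof is correct, and it takes a genuinely different route from the paper's. The paper argues non-inductively: for a given node $u$ it locates the lowest-level ancestor $w$ that is a left child of its parent (or $\rho$ if none exists), uses the identity $b(w)=s_w\cdot\q{2^{\ell(w)}}$, and then bounds the total size accumulated along the all-right-branch path from $w$ to $u$ by a geometric series against $b(w)$. You instead push a strengthened invariant $b(u)\ge s_u\bigl(\q{2^{\ell(u)}}-1\bigr)$ down the tree by induction; the left-child case resets the ratio $b(v)/s_v$ to exactly $\q{2^{\ell(v)}}$, and the right-child case survives because $\q{2^{\ell+1}}=\q{2^\ell}^2$, so the quadratic slack absorbs the subtraction of $s_u$. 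Both proofs exploit the same underlying fact (residual budget is roughly the square of the size on the $\q{\cdot}$ scale), but yours packages it as a standard strengthen-the-hypothesis induction and so avoids the explicit search for the ``reset'' ancestor $w$ and the associated geometric-series estimate; it is arguably the cleaner of the two, and it also yields the sharper level-dependent bound $b(u)\ge s_u(\q{2^{\ell(u)}}-1)$ for free.
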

\begin{proof}
Let $w$ denote the lowest level node on the path from $\rho$ to $u$ that is the left child of its parent (if $u$ is the left child of its parent, then $w=u$); if there is no such node, set $w=\rho$.
Note that by Claim~\ref{size:val} and the definition of $s_\rho$ and $b(\rho)$, in either case it holds that $b(w)=s_w\cdot \q{2^{\ell(w)}}$.

Let $\pi$ denote the path from $w$ to $u$ (including $w$ but not $u$; so $\pi=\emptyset$ if $w=u$).
Since $\pi$ contains only right-branches, $b(u)=b(w)-\sum_{y\in \pi} s_y$ and hence $b(u)\ge b(w)- 3 \sum_{y\in \pi} s_y$. Thus to prove $3\cdot s_u\le b(u)$ it  suffices to show $3(s_u+ \sum_{y\in \pi} s_y)\le b(w)$. For brevity, let $s:=s_w$ and $\ell=\ell(w)$. Using the definition of sizes,
\begin{eqnarray*}
s_u+\sum_{y\in \pi} s_y  & \le  & \sum_{i=1}^{\ell} s\cdot \q{2^{\ell-1} + 2^{\ell-2} +\cdots 2^i}  \quad  = \quad   s\cdot \sum_{i=1}^{\ell} \q{2^{\ell}-2^i} \\ & = & s \cdot \sum_{i=1}^{\ell} \q{2^\ell} \cdot \q{-2^i} \quad  \le
\quad s\cdot \q{2^{\ell}} \cdot\sum_{i\ge 1} 4^{-i} \\
 & \le & \frac{1}{3}\cdot s\cdot \q{2^{\ell}} \quad = \quad \frac{b(w)}{3},
\end{eqnarray*}
as desired.  Here the right hand side of the first inequality is simply the total size of nodes in the $w$ to leaf path using all right branches. The inequality in the second line follows as $\q{-2^{i}}= 2^{-2^{i}} \leq 2^{-2i} = 4^{-i}$ for all $i\geq 1$.

Thus we always have $3\cdot s_u\le b(u)$.

As $b(v)=b(u)-s_u$ if $v$ is the right child of $u$, or $b(v) = s_u$ otherwise, this implies that all the residual-budgets are non-negative.

Similarly, as $d(u,v)$ is either $0$ or $b(u)-s_u$ (and hence at least $2/3 b(u)$), this implies that all edge lengths are non-negative.
\end{proof}

This claim shows that the above instance is well defined, and that  \A is a feasible adaptive policy that always continues for $L$ steps until it reaches a leaf. Next, we show that \A obtains large expected reward.

\begin{lemma}\label{lem:ad-profit}
The expected reward of policy \A is $\Omega(L)$.
\end{lemma}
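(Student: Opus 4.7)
The plan is to compute the expected reward of \A directly by linearity of expectation. Since \A takes the left branch exactly when the observed size is $0$ and the right branch when the observed size is $s_u$, the policy traces a single root-to-leaf path through one node per level $L, L-1, \ldots, 1$. By Claim~\ref{cl:well-defn}, the residual budget at every visited node $v$ satisfies $b(v) \geq 3 s_v \geq S_v$, so \A always has enough budget to finish the job it starts, meaning the reward $r_v = (1-p)^{\tau(v)}$ is collected at every visited node. Hence the total reward equals the sum of $r_v$ over the $L$ visited nodes.

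Let $v_L = \rho, v_{L-1}, \ldots, v_1$ be the visited nodes, and for $j = L, L-1, \ldots, 2$ let $Y_j \in \{0,1\}$ indicate whether \A takes the right branch out of $v_j$. Because the sizes $S_{v_j}$ are independent Bernoulli$(p)$ variables (\emph{and} because \A's decision at $v_j$ depends only on $S_{v_j}$), the $Y_j$ are independent Bernoullis with parameter $p$. The number of right-branches on the path to $v_\ell$ is $\tau(v_\ell) = \sum_{j=\ell+1}^{L} Y_j$, so the total reward is
\[
R \;=\; \sum_{\ell=1}^{L}(1-p)^{\sum_{j=\ell+1}^{L}Y_j}.
\]

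The key computation is $\E[(1-p)^{Y_j}] = p(1-p) + (1-p) = 1 - p^2$, and by independence
\[
\E\!\left[(1-p)^{\sum_{j=\ell+1}^{L}Y_j}\right] \;=\; (1-p^2)^{L-\ell}.
\]
Summing the geometric series with $p^2 = 1/L$ gives
\[
\E[R] \;=\; \sum_{k=0}^{L-1}(1-p^2)^k \;=\; \frac{1-(1-1/L)^L}{1/L} \;\geq\; (1-1/e)\,L \;=\; \Omega(L),
\]
which is the desired bound.

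I do not expect a real obstacle here: the only subtlety is justifying that reward is actually collected at each visited node, which is immediate from Claim~\ref{cl:well-defn}, and confirming the independence of the $Y_j$'s, which follows from the independence of the sizes $S_v$ across vertices. Everything else is a one-line geometric-sum calculation.
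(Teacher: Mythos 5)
Your proof is correct, and it takes a genuinely different and arguably cleaner route than the paper's. Both arguments use the same setup: by Claim~\ref{cl:well-defn}, policy \A always has enough residual budget to complete each job and to reach the next node, so it visits exactly $L$ nodes and collects reward $(1-p)^{\tau(v_\ell)}$ at each visited node $v_\ell$, where $\tau(v_\ell)=\sum_{j>\ell}Y_j$ is the number of positive size instantiations among the ancestors of $v_\ell$, and the $Y_j$'s are i.i.d.\ Bernoulli$(p)$ by independence of sizes across vertices. The paper then applies Markov's inequality to the count $\sum_j Y_j$: with probability at least $1/2$ at most $2\sqrt{L}$ of them are positive, so every node's reward is at least $(1-p)^{2\sqrt{L}}\approx e^{-2}$, yielding expected reward $\geq \frac12 L e^{-2}$. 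You instead compute $\E[(1-p)^{Y_j}] = 1-p^2$ exactly and sum the geometric series, obtaining $\E[R]=L\bigl(1-(1-1/L)^L\bigr)\geq (1-1/e)L$. Your computation is exact rather than a one-sided tail bound, shorter, and gives a better constant ($1-1/e\approx 0.63$ versus $\tfrac{1}{2}e^{-2}\approx 0.068$). The only subtlety you flag --- that the decisions $Y_j$ are independent even though the node $v_j$ is itself random --- is correct and worth making explicit: conditionally on $(Y_L,\ldots,Y_{j+1})$, the vertex $v_j$ is determined and its size is an independent fresh Bernoulli$(p)$, so $Y_j$ is independent of the prefix.
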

\begin{proof}
Notice that \A accrues reward as follows: it keeps getting reward $1$ (and going left) until the first positive size instantiation, then it goes right for a single step and keeps going left and getting reward $(1-p)$ till the next positive size instantiation and so on. This continues  for a total of $L$ steps.
In particular, at any time $t$ it collects reward $(1-p)^i$, if exactly $i$ nodes have positive sizes among the $t$ nodes seen.

Let $X_i$ denote the Bernoulli random variable that is $1$ if the $i^{th}$ node in \A has a positive size instantiation, and $0$ otherwise. So $E[X_i]=p$, and
$E[X_1 + \ldots + X_L] = L p = \sqrt{L}$. By Markov's inequality, the probability that more than $2\sqrt{L}$
nodes in \A have positive sizes is at most half. Hence, with probability at least $\frac12$ the reward collected in the last node of \A is at least $(1-p)^{2 \sqrt{L}}$. That is, the total expected reward of \A is at least $\frac12\cdot L \cdot (1-p)^{2 \sqrt{L}} \approx L/2 \cdot e^{-2} =  \Omega(L)$.
\end{proof}

\subsection{Bounding Directed Non-adaptive Policies}\label{subsec:dir-gap}
We will first show that any non-adaptive policy \N
that is constrained to visit vertices according to the partial order given by the tree \T
gets reward at most $O(\sqrt{L})$. 
Notice that these correspond precisely to non-adaptive policies on the directed tree \T.

The key property we need from the size construction is the following.
\begin{lemma}\label{lem:prefix-size}
For any node $v$, the total size instantiation observed under the adaptive policy \A before $v$ is strictly less than $s_v$.
\end{lemma}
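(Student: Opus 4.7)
The plan is to prove, by induction on $L-\ell(v)$ (the depth of $v$ in $\T$), the stronger quantitative bound
\[
A_s(v)\;\le\; 2\, s_v\cdot \q{-2^{\ell(v)}}.
\]
Since $\ell(v)\ge 1$ for every node of $\T$, the right-hand side is at most $s_v/2$, so the desired strict inequality $A_s(v)<s_v$ follows immediately from this invariant.

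The first step is to identify what $A_s(v)$ actually is under $\A$. By the definition of $\A$, a right branch at an ancestor $u$ is taken precisely when $S_u$ instantiates to its positive value $s_u$, and a left branch precisely when $S_u=0$. Hence $A_s(v)$ equals the sum of $s_u$ over those proper ancestors $u$ of $v$ whose outgoing edge toward $v$ is the right branch; in particular, if $u$ is the parent of $v$, then
\[
A_s(v)\;=\;A_s(u)\;+\;s_u\cdot \mathbf{1}[v\text{ is the right child of }u].
\]

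With this in hand I would run the induction. The base case $v=\rho$ is trivial, since $A_s(\rho)=0$. For the inductive step, set $\ell=\ell(v)$ and recall the size recurrence $s_v=s_u\cdot\q{2^\ell}$ if $v$ is the right child of $u$ and $s_v=s_u\cdot\q{-2^\ell}$ if $v$ is the left child. In the left-child case, the hypothesis $A_s(u)\le 2 s_u\q{-2^{\ell+1}}$ gives, in fact with equality,
\[
A_s(v)\;=\;A_s(u)\;\le\;2\, s_u\,\q{-2^{\ell+1}}\;=\;2\, s_v\cdot\q{2^\ell}\cdot\q{-2^{\ell+1}}\;=\;2\, s_v\,\q{-2^\ell}.
\]
In the right-child case,
\[
A_s(v)\;\le\;2\, s_u\,\q{-2^{\ell+1}}+s_u\;=\;s_u\bigl(1+2\q{-2^{\ell+1}}\bigr)\;\le\;2\, s_u\;=\;2\, s_v\,\q{-2^\ell},
\]
where the penultimate step uses $2\q{-2^{\ell+1}}\le 2\q{-2}=1/2$ for $\ell\ge 0$.

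The main obstacle is choosing the right inductive statement. The naive invariant $A_s(v)<s_v$ is too weak to propagate, because traversing a left edge leaves $A_s$ unchanged but shrinks $s_v$ by the doubly-exponential factor $\q{2^\ell}$ relative to its parent, so any invariant with only constant multiplicative slack would immediately break at a left step. The doubly-exponentially small slack $\q{-2^{\ell(v)}}$ carried on the right-hand side is exactly tuned to cancel this shrinkage (producing the equality in the left-child case), while the leading factor of $2$ leaves just enough room to absorb the additive $+s_u$ incurred at a right step.
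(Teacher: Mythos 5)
Your proof is correct. The identification of $A_s(v)$ as the sum of $s_u$ over the proper ancestors $u$ at which the root-to-$v$ path takes a right branch is exactly right, and the strengthened invariant $A_s(v)\le 2\,s_v\cdot\q{-2^{\ell(v)}}$ does propagate through both the left-child step (where the doubly-exponential slack exactly absorbs the shrinkage of $s_v$) and the right-child step (where the additive $s_u$ is absorbed by the factor $2$, since $1+2\q{-2^{\ell+1}}\le 3/2$); the conclusion $A_s(v)\le s_v/2<s_v$ then follows from $\ell(v)\ge 1$. Your route is genuinely different in architecture from the paper's: the paper does not induct, but instead decomposes the root-to-$v$ path at its "left turns" $u_1,\dots,u_t$, bounds the size contribution of each segment between consecutive turns by a geometric series ($\le 2\,s_i\,\q{-2^{k_i}}$), shows that these per-segment bounds themselves decay geometrically ($s_{i+1}\q{-2^{k_{i+1}}}\le \tfrac14 s_i\q{-2^{k_i}}$), and finally relates the lowest turn to $s_v$. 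Both arguments track the same potential $s_v\cdot\q{-2^{\ell(v)}}$, but your single node-by-node induction with a strengthened hypothesis is shorter and yields the slightly sharper bound $A_s(v)\le s_v/2$, whereas the paper's explicit path decomposition makes visible where the mass of $A_s(v)$ sits (near the lowest left turn), which is reused informally elsewhere in the analysis. Your closing discussion of why the naive invariant fails at left edges is exactly the right diagnosis.
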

\begin{proof}
Consider the path $\pi$ from the root to $v$, and let $k_1 < k_2<\cdots<k_t$ denote the levels at which $\pi$ ``turns left''. That is, for each $i$, the node $u_i$ at level $k_i$ in path $\pi$ satisfies (a) $u_i$ is the right child of its parent, and (b) $\pi$ contains the left child of $u_i$ if it goes below level $k_i$. (If $v$ is the right child of its parent then $u_1=v$ and $k_1=\ell(v)$.) Let $s_i$ denote the size of $u_i$, the level $k_i$ node in $\pi$. Also, set $k_{t+1}=L$ corresponding to the root. Below we use $[t]:=\{1,2,\cdots,t\}$.

\begin{figure}[ht]
  \begin{centering}
    \includegraphics[scale=0.7]{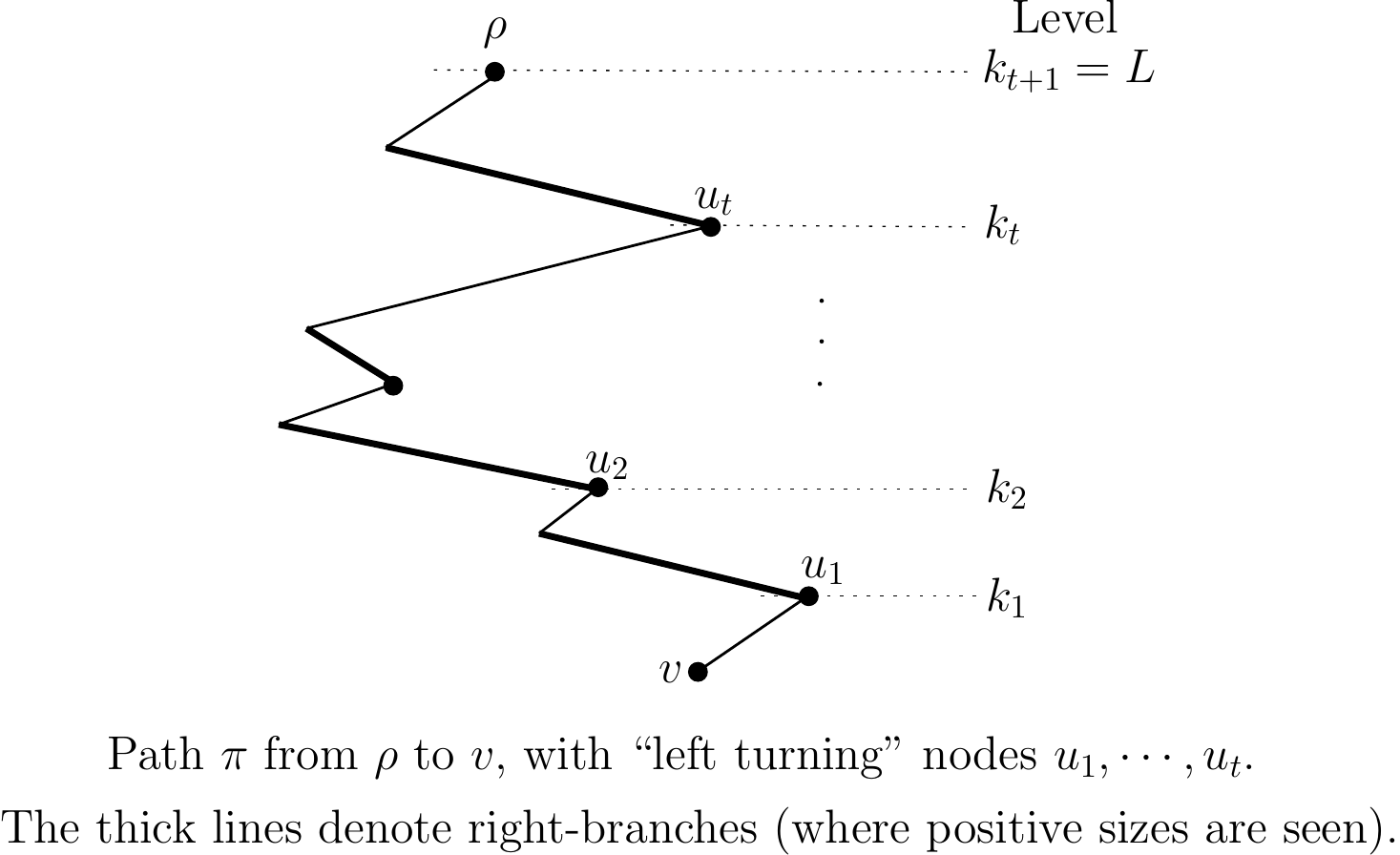}
    \caption{The path $\pi$ in proof of Lemma~\ref{lem:prefix-size}.    \label{fig:??} }
  \end{centering}
\end{figure}

We first bounds the size instantiation between levels $k_{i+1}$ and $k_{i}$ in terms of $s_i$.
Observe that a positive size instantiation is seen in \A only along right branches. So for any $i\in [t]$, the total size instantiation seen in $\pi$ between levels $k_i$ and $k_{i+1}$ %(i.e. the size of nodes in levels $k_i+1,k_i+2,\cdots,k_{i+1}$)
is at most:
\begin{eqnarray}\label{eq:prefix-size1}
& & s_i\cdot \left[ \q{-2^{k_i}} + \q{-2^{k_i}-2^{k_i+1}} + \q{-2^{k_i}-2^{k_i+1} - 2^{k_i+2} }\cdots\right] \nonumber \\
 & \le &  s_i \cdot \q{-2^{k_i}} \cdot \left(1+1/2+1/4+\cdots \right) \quad \leq  \quad 2\,s_i\cdot \q{-2^{k_i}}
\end{eqnarray}

Now, note that for any $i\in[t]$, the sizes $s_{i+1}$ and $s_i$ are related as follows:
\begin{eqnarray}
s_{i+1} &\le & s_i \cdot \q{-2^{k_i} + 2^{k_i+1} + 2^{k_i+2} +\cdots +2^{k_{i+1}-1} }\quad =\quad s_i \cdot \q{-2^{k_i} - 2^{k_i+1} + 2^{k_{i+1}}} \notag \\
&\le &\frac{s_i}{4}\cdot \q{-2^{k_i}  + 2^{k_{i+1}}}  \label{eq:prefix-size2}
\end{eqnarray}
The first inequality uses the fact that the path from $u_{i+1}$ to $u_i$ is a sequence of (at least one) left-branches followed
by a sequence of (at least one) right-branches. As the size decreases along left-branches and increases along right branches, it follows that conditional on the values of $k_{i}$ and $k_{i+1}$, the  ratio $s_{i+1}/s_{i}$ is maximized for the path with a sequence of left branches followed by a single right branch (at level $k_i$).

Using~\eqref{eq:prefix-size2}, we obtain inductively that:
\begin{equation}\label{eq:prefix-size3}
s_{i+1}\cdot \q{-2^{k_{i+1}}} \quad \le \quad \frac14\cdot s_i\cdot \q{-2^{k_i}} \quad \le \quad \frac{1}{4^{i}} \cdot s_1\cdot \q{-2^{k_1}},\qquad \forall i\in[t].
\end{equation}

Using~\eqref{eq:prefix-size1} and~\eqref{eq:prefix-size3}, the total size instantiation seen in $\pi$ (this does not include the size at $v$) is at most:
\begin{equation}\label{eq:prefix-size4}
\sum_{i=1}^t 2\,s_i\cdot \q{-2^{k_i}} \quad \le \quad
2\sum_{i=1}^t \frac{1}{4^{i-1}} \cdot s_1\cdot \q{-2^{k_1}} \quad < \quad 4\,s_1\cdot \q{-2^{k_1}}.
\end{equation}

Finally, observe that the size at the level $k_1$ node $s_1\le s_v\cdot \q{2^{k_1-1}+2^{k_1-2}+\cdots+2^1}=s_v\cdot \q{2^{k_1}-2}$, since $k_1$ is the lowest level at which $\pi$ turns left (i.e. $\pi$ keeps going left below level $k_1$ until $v$). Together with \eqref{eq:prefix-size4}, it follows that the total size instantiation seen before $v$ is strictly less than
$$4 \,s_1\cdot \q{-2^{k_1}} \leq 4 \, \q{-2^{k_1}} \cdot  s_v\cdot \q{2^{k_1}-2}  = 4 \, \q{-2} \, s_v =  s_v.$$
This completes the proof of Lemma~\ref{lem:prefix-size}. \end{proof}

%All size constructions satisfying Claim~\ref{cl:well-defn} and Lemma~\ref{lem:prefix-size} seem to require $B$ to be double exponential in $L$.

We now show that any non-adaptive policy on the directed tree \T achieves reward $O(\sqrt{L})$. Note that any such solution \N is just a root-leaf path in \T that skips some subset of vertices. A node $v$ in \N is an {\em L-branching} node if the path \N goes left after $v$. {\em R-branching} nodes are defined similarly.

\begin{claim}\label{cl:na-Rbranch}
The total reward from R-branching nodes is at most $\sqrt{L}$.
\end{claim}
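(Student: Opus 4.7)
The plan is to exploit the geometric decay of rewards $(1-p)^{\tau(v)}$ along with a simple counting argument on $\tau$-values at R-branching nodes. Since $\N$ is a root-to-leaf path (with possible skips) in the directed tree $\T$, I would list the R-branching nodes in the order they appear along $\N$, as $v_1, v_2, \ldots, v_k$. The key observation is that going from $v_j$ to the next node of $\N$ starts with a right edge out of $v_j$, since $v_j$ is R-branching by definition. So each R-branching node strictly before $v_i$ contributes at least one right-branch to the root-to-$v_i$ path.

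This yields the crucial inequality $\tau(v_i) \ge i-1$ for every $i \in [k]$, because $v_1, \ldots, v_{i-1}$ each force at least one right-branch on the path from $\rho$ to $v_i$ (any additional right-branches coming from skipped nodes only help). By the definition $r_v = (1-p)^{\tau(v)}$, the reward collected at $v_i$ is at most $(1-p)^{i-1}$. Summing the geometric series gives
\begin{equation*}
\sum_{i=1}^{k} r_{v_i} \;\le\; \sum_{i=1}^{k} (1-p)^{i-1} \;\le\; \frac{1}{p} \;=\; \sqrt{L},
\end{equation*}
which is exactly the bound claimed.

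The only subtlety to verify carefully is the index bookkeeping: $\tau(v_i)$ counts right-branches on the path from $\rho$ up to and including the last edge entering $v_i$, and does \emph{not} include the right-branch that $\N$ takes \emph{out} of $v_i$ (which is what labels $v_i$ as R-branching). The right-branches at $v_1, \ldots, v_{i-1}$ occur strictly before $v_i$ on the path and hence are counted in $\tau(v_i)$. I do not anticipate any real obstacle here; the bound is essentially immediate once the sequence of R-branching nodes is ordered, and no properties of the sizes or distances are needed for this particular claim.
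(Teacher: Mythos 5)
Your proof is correct and follows essentially the same approach as the paper: both arguments observe that each R-branching node adds a right branch to the root-path of all subsequent R-branching nodes, so the rewards along this subsequence decay geometrically by factors of $(1-p)$, and the sum is bounded by $\sum_{i\ge 0}(1-p)^i = 1/p = \sqrt{L}$. Your write-up just makes the index bookkeeping ($\tau(v_i)\ge i-1$) explicit where the paper states it more tersely.
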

\begin{proof}
As the reward of a node decreases by a factor of $(1-p)$ upon taking a right branch,
the total reward of such nodes is at most $\sum_{i=0}^L (1-p)^i \le \frac{1}{p}=\sqrt{L}$.
\end{proof}

\begin{claim}\label{cl:na-Lbranch}
\N can not get any reward after two L-branching nodes instantiate to positive sizes.
\end{claim}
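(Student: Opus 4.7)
The intuition driving the claim is that L-branches are very expensive: from any node $u$, the L-edge has length $d(u, u^L) = b(u) - s_u$, so an L-branch taken at a node whose size instantiates to $s_u$ consumes essentially the full residual budget $b(u)$ (travel plus processing). Taking two such L-branches with positive instantiations should then overshoot the budget $B$.

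Let $u_1$ be the shallower and $u_2$ the deeper of the two L-branching nodes of \N that instantiate to positive sizes; by definition both lie in $V(\N)$. Since \N follows a fixed root-leaf path and branches left after $u_1$, the node $u_2$ lies in the left subtree of $u_1$, i.e., is a descendant (possibly equal to) of $u_1$'s left child $u_1^L$. The first step is to establish the size bound $s_{u_2}\le s_{u_1}/2$. Writing $k_1:=\ell(u_1)$, the child $u_1^L$ has size $s_{u_1}\cdot\q{-2^{k_1-1}}$, and any descendant of $u_1^L$ is obtained by a sequence of $\pm$-exponents at strictly smaller levels $k_1-2,k_1-3,\ldots,1$. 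The size is maximized by taking all right-branches, which contributes a multiplicative boost of
\[
\q{2^{k_1-2}+2^{k_1-3}+\cdots+2^0}=\q{2^{k_1-1}-1}.
\]
Combining the left step with this maximal right-descent gives $s_{u_2}\le s_{u_1}\cdot\q{-1}=s_{u_1}/2$. Applying Lemma~\ref{lem:prefix-size} at $u_2$ then yields $A_s(u_2)<s_{u_2}\le s_{u_1}/2<s_{u_1}$.

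The second step is a direct budget calculation. To reach any node on $P$ strictly beyond $u_2$, \N must travel at least the tree distance $A_d(u_2)+d(u_2,u_2^L)=A_d(u_2)+b(u_2)-s_{u_2}$ and incur the processing time of every visited strict ancestor of $u_2^L$, which is at least $s_{u_1}+s_{u_2}$ (since $u_1,u_2\in V(\N)$ and both instantiate positive). Using Claim~\ref{lem:bv} in the form $A_d(u_2)+b(u_2)=B-A_s(u_2)$, the total cost satisfies
\[
[A_d(u_2)+b(u_2)-s_{u_2}]+[s_{u_1}+s_{u_2}]=B+s_{u_1}-A_s(u_2)>B,
\]
by the first-step inequality $s_{u_1}>A_s(u_2)$. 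Hence \N's budget is exhausted before it arrives at $u_2^L$, so \N visits no node of $P$ strictly below $u_2$ and collects no further reward.

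The main obstacle is the size bound $s_{u_2}\le s_{u_1}/2$, which is where the particular choice of exponents $2^\ell$ pays off through the telescoping identity $2^{k_1-2}+\cdots+2^0=2^{k_1-1}-1$ that exactly cancels the left step's $\q{-2^{k_1-1}}$ up to a factor of $\q{-1}$; without this precise cancellation the strict inequality $A_s(u_2)<s_{u_1}$ would not close. Once the size bound is in hand, Lemma~\ref{lem:prefix-size} and the budget identity from Claim~\ref{lem:bv} combine mechanically.
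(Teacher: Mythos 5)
Your proof is correct and follows essentially the same route as the paper's: both fix the two positive L-branching nodes, invoke Claim~\ref{lem:bv} and Lemma~\ref{lem:prefix-size}, and derive that the budget is overshot, with the only difference being that the paper's algebra uses the budget monotonicity $b(u_2)\le b(u_1^L)=s_{u_1}$ together with the sharper inequality $A_s(u_2)<s_{u_2}$ (which also shows $u_2$ itself, not just everything strictly beyond it, collects no reward), while you account for the edge $d(u_2,u_2^L)$ explicitly. One harmless nit: the intermediate bound $s_{u_2}\le s_{u_1}/2$ has an off-by-one --- the rightmost chain from $u_1^L$ bottoms out at level $1$, so the exponent sum is $\sum_{i=1}^{k_1-2}2^i=2^{k_1-1}-2$ and the true factor is $s_{u_1}/4$ --- and in fact this detour is unnecessary, since your budget calculation only uses $s_{u_2}<s_{u_1}$, which the paper already records immediately after defining the sizes.
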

\begin{proof}
For any node $v$ in tree \T, let $A_d(v)$ (resp. $A_s(v)$) denote the distance traveled (resp. size instantiated) in the adaptive policy \A until $v$; here $A_s(v)$ does not include the size of $v$.  Observe that Lemma~\ref{lem:prefix-size} implies that $A_s(v)<s_v$ for all nodes $v$.

In the non-adaptive solution \N, let $u$ and $v$ be any two L-branching nodes that instantiate to positive sizes $s_u$ and $s_v$; say $u$ appears before $v$. Under this outcome, we will show that \N exhausts its budget after $v$. Note that the distance traveled to node $v$ in \N is exactly $A_d(v)$, the same as that under \A. So the total distance plus size instantiated in \N is at least $A_d(v)+s_v+s_u$, which (as we show next) is more than the budget $B$.

By Claim  \ref{lem:bv},  $b(v)=B-A_d(v)-A_s(v)$. Moreover, the residual budget $b(u')$ at the left child $u'$ of $u$ equals $s_u$. Since the residual budgets are non-increasing down the tree \T, we have $B-A_d(v)-A_s(v)=b(v) \le b(u') = s_u$, i.e. $A_d(v)\ge B-A_s(v)-s_u$. Hence, the total distance plus size in \N is at least
$$ A_d(v)+s_v+s_u \quad \geq \quad B-A_s(v)+s_v \quad > \quad B,$$
 where the last inequality follows from Lemma~\ref{lem:prefix-size}. So \N can not obtain reward from any node after $v$.
\end{proof}

Combining the above two claims, we obtain:
\begin{claim}\label{cl:mon-na}
The expected reward of any directed non-adaptive policy is at most $3\sqrt{L}$.
\end{claim}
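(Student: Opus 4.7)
The plan is to combine Claim~\ref{cl:na-Rbranch} and Claim~\ref{cl:na-Lbranch} by splitting the reward of any directed non-adaptive policy $\N$ into the reward collected at R-branching nodes and the reward collected at L-branching nodes, and bounding each by $O(\sqrt{L})$ separately.

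First, by Claim~\ref{cl:na-Rbranch}, the total reward from R-branching nodes on the root-to-leaf path $\N$ is at most $\sqrt{L}$, regardless of the random size instantiations. So in expectation this contribution is at most $\sqrt{L}$.

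Second, I will bound the expected L-branching reward by $2\sqrt{L}$. Fix $\N$ and enumerate its L-branching nodes in root-to-leaf order as $v_1,\ldots,v_M$ with $M\le L$; let $Z_i=\mathbf{1}[S_{v_i}>0]$, which are independent Bernoulli$(p)$ random variables. Define $\tau:=\min\{i : Z_1+\cdots+Z_i\ge 2\}$ (setting $\tau=\infty$ if fewer than two positive sizes occur). Claim~\ref{cl:na-Lbranch} says that once two L-branching nodes instantiate to positive sizes, $\N$ collects no further reward; in particular, no reward is collected from $v_{\tau+1},\ldots,v_M$. Hence the L-branching reward is at most $\sum_{i\le \min(\tau,M)} r_{v_i}$. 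Since $r_{v_i}=(1-p)^{\tau(v_i)}\le 1$ for every $i$, this is at most $\min(\tau,M)\le \tau'$, where $\tau'$ is the index of the second success in an infinite sequence of independent Bernoulli$(p)$ trials. A standard calculation gives $\E[\tau']=2/p=2\sqrt{L}$, so the expected L-branching reward is at most $2\sqrt{L}$.

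Adding the two bounds yields an expected total reward of at most $3\sqrt{L}$ for any deterministic directed non-adaptive policy; a randomized policy is a convex combination of deterministic ones and therefore satisfies the same bound. The step I would expect to require the most care is the bookkeeping in the L-branching argument: one must be careful that Claim~\ref{cl:na-Lbranch} only rules out reward strictly after the second positive instantiation (not at $v_\tau$ itself), which is why the bound is $\min(\tau,M)$ rather than $\min(\tau,M)-1$, and that the case $\tau=\infty$ is handled by passing to the infinite-trial version $\tau'$ whose expectation is well-defined.
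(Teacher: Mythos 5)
Your proof is correct and follows essentially the same approach as the paper: split the reward into R-branching and L-branching contributions, bound the first deterministically by $\sqrt{L}$ via Claim~\ref{cl:na-Rbranch}, and bound the second in expectation by $2/p=2\sqrt{L}$ via Claim~\ref{cl:na-Lbranch} together with the expected waiting time for two positive instantiations. Your explicit stopping-time formalization (with $\tau$ and the extension $\tau'$ to an infinite Bernoulli sequence) just makes precise the paper's one-line statement of the same bound.
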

\begin{proof}
Using Claim~\ref{cl:na-Lbranch}, the expected reward from L-branching nodes is at most the expected number of L-branching nodes until two positive sizes instantiate, i.e. at most $\frac{2}{p}=2\sqrt{L}$. Claim~\ref{cl:na-Rbranch} implies that the expected reward from R-branching nodes is at most $\sqrt{L}$. Adding the two types of rewards, we obtain the claim.
\end{proof}

This proves an $\Omega(\sqrt{\log\log B})$ adaptivity gap for stochastic orienteering on directed metrics. We remark that the $O(\log\log B)$ upper bound in~\cite{GKNR12} also holds for directed metrics.

\subsection{Adaptivity Gap for Undirected Tree}

We now show that the adaptivity gap does not change much even if we make the edges of the tree undirected.
In particular, this has the effect of allowing the non-adaptive policy \N to backtrack along the (previously directed) edges, and visit any collection of nodes in the tree.
Recall that in the directed instance of the previous subsection, the non-adaptive policy could not try too many $L$-branching nodes (Claim \ref{cl:na-Lbranch}) and hence was forced to choose mostly $R$-branching nodes, in which case the rewards decreased rapidly.
However, in the undirected case, the non-adaptive policy can move along some right edges to collect rewards and then backtrack to high-reward nodes.

The main idea of the analysis below will be to show that  non-adaptive policies cannot gain much more from backtracking (Claims \ref{cl:na-noLback} and \ref{cl:na-back}).

The adaptive policy we compare against is the same as in the directed case. Let $\N'$ denote some fixed non-adaptive policy.
Using the definition of edge-lengths,
\begin{claim} \label{cl:na-noLback}
$\N'$ can not backtrack over any left-branching edge.
\end{claim}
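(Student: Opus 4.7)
The plan is to rule out any realization of $\N'$ that crosses a left-branching edge $(u,v)$ in both directions by showing that the travel cost alone already exceeds $B$. If $\N'$ backtracks over $(u,v)$, its total traveled distance is at least $A_d(u) + 2\,d(u,v)$, since in the undirected tree the unique $\rho$-to-$u$ path has length $A_d(u)$ (so no ``shortcut'' route to $u$ exists), and the edge $(u,v)$ is crossed twice.

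The remainder is just algebra. Using Claim~\ref{lem:bv} to write $A_d(u) = B - b(u) - A_s(u)$, and the construction's edge-length definition $d(u,v) = b(u) - s_u$, I get
\[A_d(u) + 2\,d(u,v) \;=\; B + b(u) - 2\,s_u - A_s(u).\]
Combining Claim~\ref{cl:well-defn} ($b(u) \ge 3\,s_u$) with Lemma~\ref{lem:prefix-size} applied at $u$ ($A_s(u) < s_u$) yields $b(u) - 2\,s_u - A_s(u) > b(u) - 3\,s_u \ge 0$, so the travel time strictly exceeds $B$. Hence $\N'$ runs out of budget before completing the backtrack and can collect no reward beyond this point.

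There is not much of an obstacle; the inequality is essentially baked into the construction, where every left edge is deliberately made so long (length at least $\tfrac{2}{3}b(u)$) that traversing it twice is unaffordable. The only small points of care are to apply Lemma~\ref{lem:prefix-size} at $u$ rather than the child $v$, and to note that because the tree metric is induced by a tree, the walk distance of $\N'$ up to its first arrival at $u$ is already bounded below by $A_d(u)$.
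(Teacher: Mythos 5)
Your proof is correct and takes essentially the same approach as the paper. The only difference is cosmetic bookkeeping: you write the distance lower bound as $A_d(u)+2d(u,v)$ and apply Lemma~\ref{lem:prefix-size} at $u$, whereas the paper writes it as $A_d(v)+d(u,v)$ (which is the same quantity since $A_d(v)=A_d(u)+d(u,v)$) and applies the lemma at $v$, additionally using $b(v)=s_u$ and $s_v<s_u$; both chains of inequalities reduce to $b(u)-3s_u\ge 0$ from Claim~\ref{cl:well-defn}.
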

\begin{proof}
As in the proof of Claim~\ref{cl:na-Lbranch}, for any $v\in\T$, let $A_d(v)$ (resp. $A_s(v)$) denote the distance traveled (resp. size instantiated) in the adaptive policy \A until node $v$; recall $A_s(v)$ does not include the size of $v$. If $\N'$ backtracks over the left-edge $(u,v)$ out of some node $u$ then the distance traveled is at least:
\begin{eqnarray*}
A_d(v) + d(u,v) & = &  A_d(v)+b(u)-b(v) =  B-2\cdot b(v)-A_s(v)+b(u)  \\ & > & \quad B-2\cdot b(v)-s_{v}+b(u)
=  B-2\cdot s_u-s_{v}+b(u) \\
&  \ge &  \quad  B-3\cdot s_u +b(u) \quad\ge \quad B
\end{eqnarray*}
The first equality follows as $d(u,v)=b(u)-b(v)$ and the second equality follows as $A_d(v)=B- A_s(v) - b(v)$ by Claim 
\ref{lem:bv}.
The first inequality follows as $A_s(v)<s_v$,  by Lemma~\ref{lem:prefix-size}. The second inequality follows as $s_v < s_u$ by the definition of sizes. Finally, the last inequality follows by Claim~\ref{cl:well-defn}.
Since the distance traveled by \N' is more than $B$, the claim follows.
\end{proof}

We now focus on bounding the contribution due to backtracking on the right edges.

Let $\{e_i\}_{i=1}^k$ be the left-edges traversed by $\N'$; we denote $e_i=(u_i,v_i)$ where $v_i$ is the left child of $u_i$. We now partition the nodes visited in $\N'$ as follows. For each $i\in[k]:=\{1,2,\cdots k\}$, group $G_i$ consists of nodes visited after traversing $e_{i-1}$ and before traversing $e_i$; and $G_{k+1}$ is the set of nodes visited after $e_k$. Note that the nodes in $G_i$ are visited contiguously using only right edges (they need not be visited in the order given by tree \T, as the algorithm may bactrack). See Figure~\ref{fig:back-NA} for a pictorial view.

For each $i\in[k]$, let $X_i\sse G_i$ denote the nodes at level more than $u_i$ (the parent node of left-edge $e_i$); and  let $Y_i=G_i\setminus X_i\setminus \{u_i\}$. We also set $X_{k+1}=G_{k+1}$.

By using exactly the argument in Claim~\ref{cl:na-Rbranch}, the total reward in $\{X_i\}_{i=1}^{k+1}$ is at most
$\sum_{j=0}^L (1-p)^j \le \sqrt{L}$.

 Let us modify $\N'$ by dropping all nodes in $\{X_i\}_{i=1}^{k+1}$.
 Each remaining node $w$ of $\N'$ is either (i) an L-branching node, where $\N'$ goes left after $w$ (these are the end-points $u_i$s of left-edges), or (ii) a ``{\em backtrack} node'' where $\N'$ backtracks on the edge from $w$ to its parent (these are nodes in $Y_i$s). By Claim~\ref{cl:na-Lbranch}, the expected reward from L-branching nodes is at most $2\sqrt{L}$. In order to bound the total expected reward, it now suffices to bound the reward from the backtrack nodes.

\begin{claim}\label{cl:na-back}
The expected reward of $\N'$ from the backtrack nodes is at most $2\sqrt{L}$
\end{claim}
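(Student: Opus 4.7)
The plan is to mirror Claim~\ref{cl:na-Lbranch}: identify a ``stopping'' event that, once it occurs, prevents $\N'$ from collecting further backtrack reward. For backtrack nodes the relevant event is the first positive-size instantiation inside some $Y_i$, together with $\N'$'s subsequent attempt to cross $e_i$.

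First I would establish the key structural lemma: if any $w\in Y_i$ instantiates positively, then $\N'$ cannot cross the left-edge $e_i$ without exceeding the budget. Since the entire $Y_i$-detour uses only right-edges of length zero, the distance traveled by $\N'$ when it arrives at $v_i$ equals $A_d(v_i)=A_d(u_i)+d(e_i)$, which by Claim~\ref{lem:bv} and $d(e_i)=b(u_i)-s_{u_i}$ simplifies to $B - A_s(u_i) - s_{u_i}$. Using the size formula for a strict right-descendant, every $w\in Y_i$ satisfies $s_w = s_{u_i}\cdot\q{2^{\ell(u_i)}-2^{\ell(w)}}\ge 4\,s_{u_i}$ whenever $\ell(u_i)\ge 2$. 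Combined with Lemma~\ref{lem:prefix-size} ($A_s(u_i)<s_{u_i}$), a single positive $Y_i$-instantiation makes the cumulative distance-plus-size at $v_i$ exceed $B$, so no reward is collected past $e_i$.

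Next, conditional on ``$\N'$ reaches $u_i$'' (which by the previous step forces all of $Y_1,\dots,Y_{i-1}$ to have instantiated zero, an event of probability $(1-p)^{\sum_{j<i}m_j}$), I would verify that $\N'$ actually processes every planned backtrack in $Y_i$. This uses the super-exponential size growth on the right-chain: $\sum_{w\in Y_i}s_w\le 2\,s_{w_{m_i}}$, and via the identity $s_{u_{i-1}} = s_{u_i}\cdot\q{2^{\ell(u_i)}}$ together with $b(u_i)\le s_{u_{i-1}}$ and Claim~\ref{cl:well-defn}, this sum is bounded by a constant fraction of the residual budget at $u_i$. Hence the conditional expected reward from $Y_i$ is exactly $\sum_{k=1}^{m_i}(1-p)^{\tau(u_i)+k}$.

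Combining, the expected total backtrack reward satisfies
\begin{equation*}
E[\text{backtrack reward}] \;\le\; \sum_i\sum_{k=1}^{m_i}(1-p)^{\tau(u_i)+k+\sum_{j<i}m_j}\;\le\;\sum_{P\ge 1}(1-p)^P\;\le\;\frac{1}{p}\;=\;\sqrt{L},
\end{equation*}
because the exponents $k+\sum_{j<i}m_j$, ranging over all pairs $(i,k)$ with $k\in[m_i]$, enumerate distinct positive integers. Since $\sqrt{L}\le 2\sqrt{L}$ this gives the claim (with room to spare).

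The hard part is the structural step in the boundary case $i=1$ with $u_1=\rho$: there $b(u_1)=B$ is so large that a positive $Y_1$-size does not by itself exhaust the budget while $\N'$ is still inside $Y_1$. The exhaustion instead happens exactly when $\N'$ tries to traverse $e_1$, since $d(e_1)=B-s_\rho$ already consumes nearly the entire budget and every $w\in Y_1$ satisfies $s_w>s_\rho$. A secondary subtlety is accounting for positive instantiations at main-path nodes (the $u_j$'s and the $X_j$-nodes); these only impose additional zero-instantiation requirements and therefore can only decrease the probability that $\N'$ reaches $u_i$, so the upper bound above continues to hold.
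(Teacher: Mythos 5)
Your core argument is correct and rests on the same key structural lemma as the paper: a positive instantiation at any $w\in Y_i$ makes the distance-plus-size exceed $B$ by the time $\N'$ crosses the left edge $e_i$, so no later group can be reached. The probability accounting that follows --- $\Pr[\text{reach }u_i]\le(1-p)^{\sum_{j<i}m_j}$, reward from $Y_i$ at most $\sum_{k=1}^{m_i}(1-p)^{\tau(u_i)+k}$ (exploiting that the $Y_i$ nodes lie on the all-right chain below $u_i$, so their $\tau$-values increase at least one per node), and then noting that the exponents $k+\sum_{j<i}m_j$ are distinct positive integers --- is valid and gives the slightly tighter bound $\sqrt L$; the paper instead caps each group's contribution by $h_i=\min\{|Y_i|,\sqrt L\}$ and obtains $2\sqrt L$. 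These are close variants of the same computation.

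However, your second step contains an error, and fortunately it is also unnecessary. You claim that conditional on reaching $u_i$, $\N'$ ``actually processes every planned backtrack in $Y_i$,'' arguing that $\sum_{w\in Y_i}s_w$ is ``bounded by a constant fraction of the residual budget at $u_i$.'' That claim is false in general: $\sum_{w\in Y_i}s_w$ can be as large as $\Theta\bigl(s_{u_i}\cdot\q{2^{\ell(u_i)}-2}\bigr)$, while Claim~\ref{cl:well-defn} only guarantees $b(u_i)\ge 3s_{u_i}$, which can be exponentially smaller; the ingredients you cite ($b(u_i)\le s_{u_{i-1}}$ and the identity $s_{u_{i-1}}=s_{u_i}\q{2^{\ell(u_i)}}$, both of which are correct) give an \emph{upper} bound on $b(u_i)$, not the lower bound that would be needed. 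But none of this matters: for the final computation you only need that the reward earned in $Y_i$ is \emph{at most} $\sum_{k=1}^{m_i}(1-p)^{\tau(u_i)+k}$, which holds trivially (it is the sum of all $Y_i$ node rewards), and you do use ``$\le$'' rather than ``$=$'' in the last display. So drop the claim of exactness and the ``every planned backtrack is processed'' argument; they add nothing and the one quantitative assertion in them is wrong. Similarly, the discussion of the ``boundary case $u_1=\rho$'' is moot --- the argument you gave in the first step, as in the paper, needs only Lemma~\ref{lem:prefix-size} ($A_s(v)<s_v<s_{u_i}$) plus $s_w\ge 4s_{u_i}$, and this treats $i=1$ uniformly with $i>1$.
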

\begin{proof}
Consider the partition (defined above) of backtrack nodes of $\N'$ into groups $Y_1,Y_2,\cdots,Y_k$. Recall that $\N'$ visits each group $Y_i$ contiguously (perhaps not in the order given by \T) and then traverses left-edge $e_i$ to go to the next group $Y_{i+1}$. Moreover, $u_i$ (the parent end-point of left-edge $e_i$) is an ancestor of all $Y_i$-nodes. See also Figure~\ref{fig:back-NA}.

Note also that the walk visiting each group $Y_i$ consists only of right-edges: so the total reward in any single group $Y_i$ is at most $\min\{|Y_i|, \sqrt{L}\}$ (see Claim~\ref{cl:na-Rbranch}). Define $h_i:=\min\{|Y_i|, \sqrt{L}\}$ for each $i\in[k]$.

\begin{figure}[ht]
  \begin{centering}
    \includegraphics[scale=0.75]{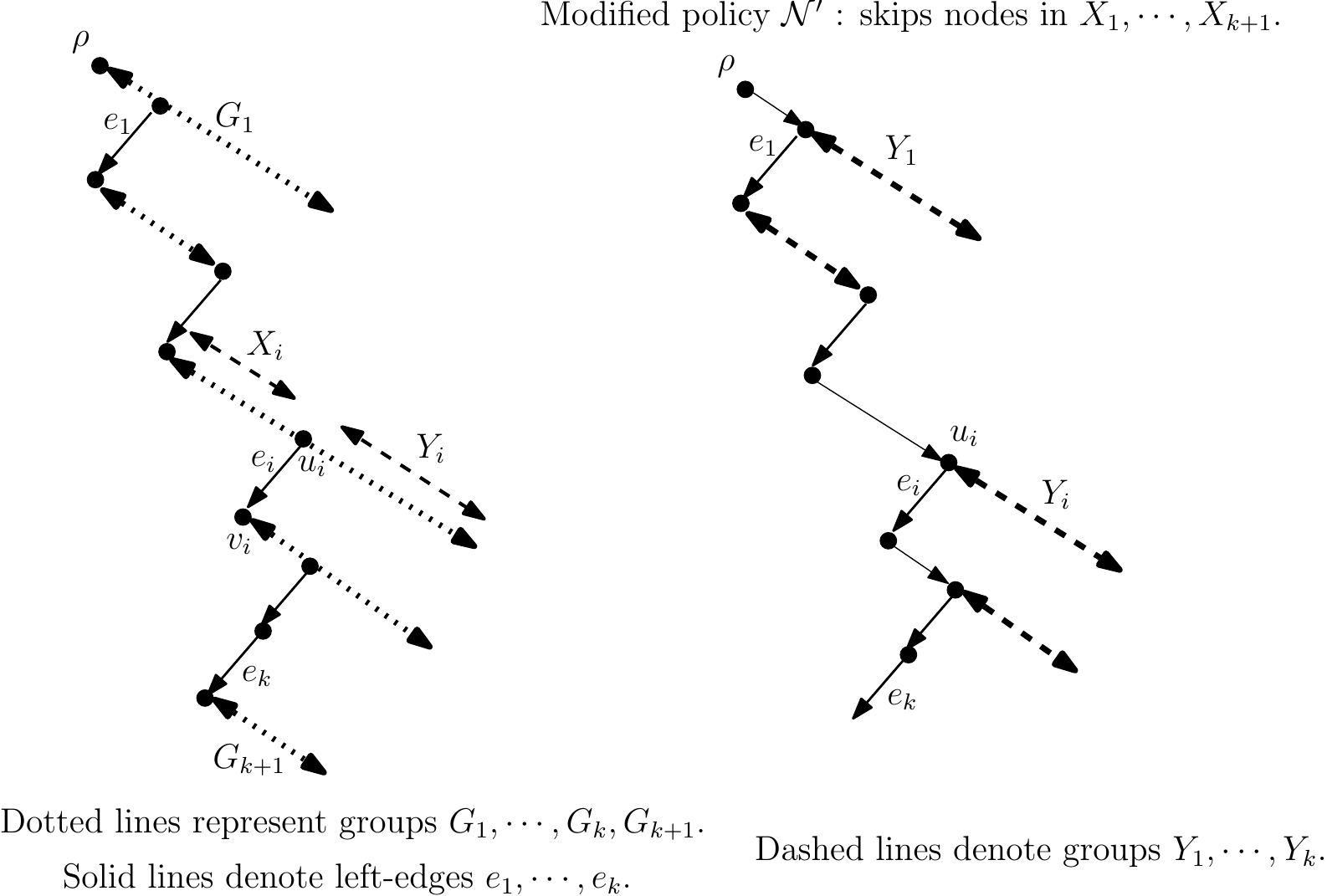}
    \caption{The walk corresponding to non-adaptive policy $\N'$ in Claim~\ref{cl:na-back}. \label{fig:back-NA} }
  \end{centering}
\end{figure}

Let $I$ denote the (random) index of the first group where a positive size is  instantiated. We now show that $\N'$ can not visit any group indexed more than $I$. Let $u=u_I$ and $v=v_I$ denote the end-points of the left-edge $e_I$. Note that $\N'$ must traverse the left edge $e_I=(u,v)$ out of $u$ to reach groups $Y_{I+1},\cdots,Y_k$. If $w\in Y_I$ is the node with positive size instantiation and $\ell$ its level, then $s_w\ge s_u\cdot \q{2^{\ell}}\ge 4\,s_u$ (since $u$ is an ancestor of all $Y_I$-nodes). The distance traveled by \N' till $v$ is $$A_d(v) = B-b(v)-A_s(v) = B - s_u - A_s(v) >B-2\cdot s_u$$
 where the last inequality follows by Lemma~\ref{lem:prefix-size}. Thus the total distance plus size seen in $\N'$ (till $v$) is at least  $B-2s_u+s_w$, which is at least $\ge B+2s_u$ and hence $>B$. Thus $\N'$ can not visit any higher indexed group.

Using the above observation, the expected reward from backtrack nodes is at most:
\begin{eqnarray*}
\E\left[\sum_{i=1}^I h_i\right] & =& \sum_{i=1}^k h_i\cdot \Pr[I\ge i] \quad \le \quad \max_{i=1}^k h_i \,+\, \sum_{i=1}^k h_i\cdot \Pr[I\ge i+1] \\
&\le & \sqrt{L} \,+\, \sum_{i=1}^k h_i\cdot (1-p)^{\sum_{j=1}^{i} |Y_j|} \quad \le \quad \sqrt{L} \,+\, \sum_{i=1}^k h_i\cdot (1-p)^{\sum_{j=1}^{i} h_j} \\
&\le  & \sqrt{L} \,+\, \sum_{x\ge 0} (1-p)^x \quad = \quad 2\sqrt{L}.
\end{eqnarray*}
Above we used the fact that $\Pr[I\ge i+1] = (1-p)^{\sum_{j=1}^i|Y_j|} \le (1-p)^{\sum_{j=1}^i h_j}$.
\end{proof}

Altogether, it follows that any non-adaptive policy $\N'$ has expected reward at most $\sqrt{L} + 2\sqrt{L}+ 2\sqrt{L} =5\sqrt{L}$. Finally, using Lemma~\ref{lem:ad-profit}, we obtain an $\Omega(\sqrt{L})=\Omega\left(\sqrt{\log\log B}\right)$ adaptivity gap.

\subsection{Adaptivity Gap on Line Metric}
We now show that the previous instance on a tree metric can also be embedded into a line metric such that the adaptivity gap does not change much. This gives an $\Omega\left(\sqrt{\log\log B}\right)$ adaptivity gap for stochastic orienteering even on line metrics.

The line metric \lm is defined as follows. Each node $v$ of the tree \T is mapped (on the real line) to the coordinate $d(\rho,v)$ which is the distance in \T from the root $\rho$ to $v$. Since all distances in our construction are integers, each node lies at a non-negative integer coordinate. Note that multiple nodes may be at the same coordinate (for example, as all right-edges in \T have zero length). Below, $d(\cdot)$ will denote distances in the tree metric \T, and $d_\lm(\cdot)$ denotes distances in the line metric \lm.

Note that $d_\lm(\rho,v)=d(\rho,v)$ for all nodes $v\in\T$. Moreover, the distance $d_\lm(u,v)$ between two nodes $u$ and $v$ in the line metric is $|d(\rho,v)-d(\rho,u)|$, which is at most the distance $d_{\T}(u,v)$ in the tree metric. Thus the adaptive policy \A for the tree \T is also valid for the line, which (by Lemma~\ref{lem:ad-profit}) has expected reward $\Omega(L)$.
However, the distances $d_L(u,v)$ on the line could be arbitrarily smaller than $d_T(u,v)$, and thus the key issue is to show that non-adaptive policies cannot do much better. To this end, we begin by observing some more properties of the distances $d(\rho,v)$ and the embedding on the line.

\begin{lemma}
\label{lem:dist-lr}
 For any internal node $u \in \T$, let $L_u$ (resp. $R_u$) denote the subtree rooted at the left (resp. right) child of $u$.
Then, for any node $v \in L_u$, $ d(\rho,v) > B-2 s_u$ and
for any node $v \in R_u$,
$d(\rho,v)  \leq B - 4 s_u.$
\end{lemma}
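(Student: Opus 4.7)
The plan is to translate $d(\rho,v)$ into the language of residual budgets and cumulative sizes along the adaptive policy $\A$. Since all tree edges are directed from $\rho$ toward the leaves and the $\rho$-to-$v$ path is unique, $d(\rho,v)$ equals $A_d(v)$, and Claim~\ref{lem:bv} gives the identity
\[
d(\rho,v) \;=\; A_d(v) \;=\; B - b(v) - A_s(v).
\]
So both inequalities reduce to controlling $b(v) + A_s(v)$.

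For $v \in L_u$, let $u'$ denote the left child of $u$. By the construction of the residual budgets, $b(u') = s_u$, and since $b(\cdot)$ is non-increasing down the tree $\T$, every descendant $v$ of $u'$ satisfies $b(v) \le s_u$. For the size term, the observation following the definition of sizes gives $s_v < s_u$ for every $v$ in the left subtree of $u$, and Lemma~\ref{lem:prefix-size} then gives $A_s(v) < s_v < s_u$. Adding these yields $b(v) + A_s(v) < 2 s_u$, which immediately produces $d(\rho,v) > B - 2 s_u$.

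For $v \in R_u$, I would instead lower-bound both terms. Because the unique $\rho$-to-$v$ path passes through the right edge out of $u$, the adaptive policy $\A$ reaches $v$ only along an instantiation where $u$ takes its positive value $s_u$; hence $A_s(v) \ge s_u$. On the other hand, the analogous observation gives $s_v > s_u$, so Claim~\ref{cl:well-defn} applied at $v$ yields $b(v) \ge 3 s_v > 3 s_u$. Summing, $b(v) + A_s(v) > 4 s_u$, and substituting into the identity above gives $d(\rho,v) < B - 4 s_u$, which implies the stated $\le B - 4 s_u$.

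This is essentially routine bookkeeping once the three tools Claim~\ref{lem:bv}, Lemma~\ref{lem:prefix-size}, and Claim~\ref{cl:well-defn} are in place. The only point that needs care is being consistent about the two directions of the monotonicity observation on sizes and about the fact that $A_s(v) \ge s_u$ for $v \in R_u$ follows from the decision rule of $\A$ (right branch iff positive instantiation), not from Lemma~\ref{lem:prefix-size}; no serious obstacle arises.
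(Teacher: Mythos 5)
Your proof is correct and takes essentially the same route as the paper's: both reduce $d(\rho,v)$ to the budget identity from Claim~\ref{lem:bv} and then bound the relevant terms via Lemma~\ref{lem:prefix-size} and Claim~\ref{cl:well-defn}. The only cosmetic difference is in the $L_u$ case, where the paper applies the identity at the left child $u'$ and uses $d(\rho,v)\ge d(\rho,u')$, while you apply it directly at $v$ and instead use monotonicity of the residual budget $b(\cdot)$ — an equivalent bookkeeping choice.
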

\begin{proof}
For any node $v$, recall that its residual budget $b(v) = B - d(\rho,v)-A_s(v)$, where $A_s(v)$ is the total size instantiated in the adaptive policy \A before node $v$.
Suppose $v \in L_u$, and let $u'$ be the left child of $u$. Then
 $$d(\rho,v) \geq d(\rho,u') = B - b(u') - A_s(u') = B-s_u - A_s(u') = B - s_u - A_s(u) > B -2 s_u,$$
 where we use that $b(u')=s_u$, $A_s(u')=A_s(u)$ and the last inequality follows from Lemma \ref{lem:prefix-size}.

 Now consider $v \in R_u$.
  We have $A_s(v) \geq s_u$, as $v$ lies in the right subtree under $u$ and so $u$ must have instantiated to a positive size before reaching $v$.
By Claim~\ref{cl:well-defn}, $b(v) \geq 3 s_v$ which is at least $3 s_u$ since $s_v > s_u$ for each $v\in R_u$. Thus $d(\rho,v)=
B - b(v) - A_s(v)  \leq B - 4s_u$.
\end{proof}

This implies the following useful fact.
 \begin{corollary}
 \label{cor:left-right}
 In the line embedding, for any node $u\in \T$, all nodes in the left-subtree $L_u$ appear after all nodes in the right-subtree $R_u$.
 \end{corollary}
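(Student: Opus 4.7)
The plan is to derive the corollary as an immediate consequence of Lemma~\ref{lem:dist-lr}. Recall that in the line embedding $\lm$, every node $v$ is placed at coordinate $d(\rho,v)$, and by ``appear after'' we mean ``lie at a strictly larger coordinate along the line.'' Thus the corollary reduces to showing that for every internal node $u\in\T$, every node of $L_u$ is placed strictly farther from $\rho$ than every node of $R_u$.

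First I would fix an arbitrary internal $u$, and recall from the size construction that $s_u>0$ (each $s_w$ is a positive power of $2$, starting from $s_\rho=2^{2^L}$ and multiplied only by positive powers of $2$ in the recursive definition). Then Lemma~\ref{lem:dist-lr} directly gives the two bounds I need: every $v\in L_u$ satisfies $d(\rho,v)>B-2s_u$, while every $w\in R_u$ satisfies $d(\rho,w)\le B-4s_u$.

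Finally, since $s_u>0$, we have $B-4s_u<B-2s_u<d(\rho,v)$, so $d_\lm(\rho,w)=d(\rho,w)<d(\rho,v)=d_\lm(\rho,v)$ for any $v\in L_u$ and $w\in R_u$. This is exactly the statement that on the line, every node of $L_u$ lies strictly to the right of every node of $R_u$, finishing the proof. There is really no obstacle here: the work has already been done in proving Lemma~\ref{lem:dist-lr}, which establishes a quantitative gap (a multiplicative factor $2s_u$) between the coordinates of the two subtrees, and the corollary simply extracts the qualitative ordering from that gap.
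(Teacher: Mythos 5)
Your proof is correct and matches the paper's intent exactly: the paper states the corollary immediately after Lemma~\ref{lem:dist-lr} as "This implies the following useful fact," with no written proof, and your argument simply spells out that one-line deduction (positivity of $s_u$ plus the two bounds from the lemma give the strict coordinate ordering).
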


We will now show that any non-adaptive policy has reward at most $O(\sqrt{L})$. This requires more work than in the tree metric case, but the high level idea is quite similar:
we restrict how any non-adaptive policy can look like by using the properties of distances, and show that such policies cannot obtain too much profit.
Observe that a non-adaptive policy $\N''$ is just a walk on \lm, originating from $\rho$ and  visiting a subset of vertices.

\begin{lemma}\label{lem:line-na-ordered}
Any non-adaptive policy on \lm must visit vertices ordered by non-decreasing distance from $\rho$.
\end{lemma}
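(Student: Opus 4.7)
The plan is to argue that Lemma~\ref{lem:line-na-ordered} is a direct consequence of the one-dimensional geometry of $\lm$, with no need to invoke the tree-specific structure from Lemma~\ref{lem:dist-lr} or Corollary~\ref{cor:left-right}. A non-adaptive policy is specified by a path $P$ starting at $\rho$, which on the line corresponds to a continuous trajectory whose initial position is $0$ and which alternates between \emph{travel} (the position changes while time advances) and \emph{processing} (the position is fixed while time advances). By the problem definition, the policy processes every job the walk encounters, in the order of encounter.

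First I would observe that any portion of the walk that ventures into $(-\infty,0)$ encounters no vertex, since every $v\in V$ has $d_\lm(\rho,v)=d(\rho,v)\ge 0$, and only consumes distance budget; so we may assume without loss of generality that the trajectory stays in $[0,\infty)$. Under this assumption, let $M(t)$ denote the maximum coordinate reached by the walk up to time $t$. Then $M(t)$ is nondecreasing in $t$ (it is constant during processing and weakly increasing during travel), and because the trajectory is continuous and starts at $0$, the set of positions touched by time $t$ is exactly the interval $[0,M(t)]$.

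Consequently, the first time a vertex $v$ is encountered by the walk is $t_v=\min\{t:M(t)\ge d(\rho,v)\}$, which is a monotone nondecreasing function of $d(\rho,v)$. Since the non-adaptive policy processes each job at the moment of first encounter, the processing order is by nondecreasing distance from $\rho$, as claimed (with ties among vertices at identical coordinates broken arbitrarily). The key (and essentially only) observation is that a continuous walk on $[0,\infty)$ starting at $0$ must sweep through points in order of their distance from the starting point, so I do not anticipate any technical obstacle.
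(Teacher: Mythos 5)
Your proposal rests on a misreading of the model: you assume a non‑adaptive policy ``processes every job the walk encounters, in the order of encounter,'' i.e. that a vertex is processed at the first instant the trajectory reaches its coordinate. That is not what a non‑adaptive policy is. As the paper uses the term (see the directed case, where a policy is ``a root‑leaf path in \T{} that \emph{skips} some subset of vertices,'' and the subsequent analysis of $\N''$, which explicitly handles skipped vertices via the blocking sets $E_v$), a non‑adaptive policy chooses a \emph{subset} of vertices and an \emph{order} in which to stop and run their jobs; it is free to travel past a vertex without processing it and return later. So nothing in the one‑dimensional geometry forbids a policy from going to a far coordinate, processing the job there, and then backtracking to a nearer coordinate to process a job it skipped. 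Indeed, the paper's own proof begins from the premise that such a policy is syntactically allowed (``if vertex $v$ is visited before $w$ and $d(\rho,v)>d(\rho,w)$\ldots'') and then rules it out.

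A second way to see that your argument cannot be complete: if the lemma were a consequence of line geometry alone, it would hold for \emph{every} stochastic orienteering instance on a line, which is false (take $B$ large relative to the coordinates and a policy that goes out to a far vertex and back). The content of Lemma~\ref{lem:line-na-ordered} is instance‑specific: any out‑of‑order visit forces a backtrack whose travel distance already exceeds $B$. Establishing this requires the quantitative structure of the embedding, namely Lemma~\ref{lem:dist-lr} (to bound $d(\rho,v)$ and $d(\rho,w)$ when the least common ancestor separates them into left/right subtrees), Lemma~\ref{lem:prefix-size}, and Claim~\ref{cl:well-defn} (for the case where $w$ is an ancestor of $v$). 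Your proposal uses none of these and therefore has a genuine gap; the ``key observation'' you identify (a continuous walk sweeps $[0,M(t)]$) is true but does not constrain the processing order, which is the object the lemma is about.
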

\begin{proof}
We will show that if vertex $v$ is visited before $w$ and $d(\rho,v)>d(\rho,w)$ then the walk to $w$ has length more than $B$; this would prove the lemma.

Let $u$ denote the least common ancestor of $v$ and $w$.
There are two cases depending on whether $u=w$ or $u \notin \{v,w\}$; note that the ancestor $u$ cannot be $v$ as $d(\rho,v) > d(\rho,w)$.

If $u\notin\{v,w\}$, since $d(\rho,v) > d(\rho,w)$, it must be that $v \in L_u$ and $w\in R_u$ by Corollary \ref{cor:left-right}.
Moreover, the total distance traveled by the path is at least
$$d_\lm(\rho,v) + d_\lm(v,w) \ge d(\rho,v) + d(\rho,v) - d(\rho,w)  = 2 d(\rho,v) - d(\rho,w) >  2(B - 2 s_u)  - (B - 4 s_u) = B,$$  where the second inequality is  by Lemma \ref{lem:dist-lr}.

If $u=w$, since $d(\rho,v)>d(\rho,w)$, there must be at least one left edge $e=(x,y)$ on the path from $w$ to $v$ in the tree (as the length of the right edges is 0).  Then,
the distance traveled by the path is at least $d_\lm(\rho,v) + d_\lm(v,w) \ge  d(\rho,y) + d(x,y) = d(\rho,x) + 2d(x,y)$. As $d(\rho,x) = B-b(x) - A_s(x) > B-b(x) - s_x$ by Lemma \ref{lem:prefix-size},
 and as $d(x,y) =  b(x)-s_x$ (by definition of distances on left edges), we have
  $$  d(\rho,x) + 2d(x,y) \geq   (B - b(x) - s_x) + 2(b(x)-s_x) = B + b(x) - 3s_x > B$$
where the last inequality follows from Claim \ref{cl:well-defn}.
\end{proof}

By Lemma~\ref{lem:line-na-ordered}, any non-adaptive policy $\N''$ visits vertices in non-decreasing coordinate order. For vertices at the same coordinate, we can break ties and assume that these nodes are visited in decreasing order of their level in \T. This does not decrease the expected reward due to the following exchange argument.

\def\oN{\ensuremath{\overline{\cal N}}\xspace}

\begin{claim}\label{cl:line-na-ties}
If $\N''$ visits two vertices $\{u,v\}$ consecutively that have the same coordinate in \lm and have levels $\ell(u)>\ell(v)$, then $u$ must be visited before $v$.
\end{claim}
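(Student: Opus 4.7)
The plan has two ingredients: a structural characterization of pairs of tree nodes that can share a coordinate in $\lm$, followed by a short local exchange argument. For the structural part, I would show that if $u,v\in\T$ satisfy $d_\lm(\rho,u)=d_\lm(\rho,v)$, then one is an ancestor of the other in $\T$ and the tree-path between them consists entirely of right edges. If neither is an ancestor of the other, their least common ancestor $w$ is distinct from both, placing $u,v$ in the distinct subtrees $L_w$ and $R_w$; Lemma~\ref{lem:dist-lr} then gives disjoint coordinate ranges ($(B-2s_w,\infty)$ versus $[0,B-4s_w]$), a contradiction. Assuming then that $u$ is a strict ancestor of $v$, the tree-distance $d(u,v)=d(\rho,v)-d(\rho,u)=0$; but Claim~\ref{cl:well-defn} forces every left edge to have length $b(x)-s_x\geq 2s_x>0$, so the $u$-to-$v$ path must use only right edges. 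Combined with $\ell(u)>\ell(v)$, this forces $u$ to be the ancestor. The size recursion along a chain of right edges then yields $s_u<s_v$, and the reward rule gives $r_u=r_v\,(1-p)^{-(\ell(u)-\ell(v))}\geq r_v$.

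For the exchange, let $\N''$ visit $v$ immediately before $u$ and let $\mathcal{M}$ denote the policy obtained by swapping these two consecutive visits. Because $d_\lm(u,v)=0$ and both nodes lie at the same line coordinate, the total distance traveled is unchanged by the swap, so the (random) residual budget $\beta$ on arrival at the first of $\{u,v\}$ has the same distribution under either policy. In either ordering the policy advances past both vertices precisely when $S_u+S_v\leq \beta$, and the residual budget thereafter equals $\beta-S_u-S_v$. Therefore the expected reward from every vertex other than $u$ and $v$ is the same under $\N''$ and $\mathcal{M}$, and it suffices to compare the expected reward drawn from $\{u,v\}$ themselves.

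Conditioning on $\beta$ (which is independent of $S_u,S_v$ because sizes are independent across vertices), set $a=\pr{S_u\leq \beta}$, $b=\pr{S_v\leq \beta}$, and $c=\pr{S_u+S_v\leq \beta}$. The expected reward from $\{u,v\}$ under $\mathcal{M}$ (which processes $u$ first) is $r_u a+r_v c$, and under $\N''$ it is $r_v b+r_u c$; the difference equals $r_u(a-c)-r_v(b-c)$. Both $(a-c)$ and $(b-c)$ are nonnegative, and an elementary three-case analysis on $\beta$ (using that $S_u,S_v$ are Bernoulli on $\{0,s_u\}$ and $\{0,s_v\}$ with $s_u<s_v$) shows $a\geq b$: both equal $1-p$ when $\beta<s_u$, then $a=1\geq 1-p=b$ when $s_u\leq\beta<s_v$, and both equal $1$ when $\beta\geq s_v$. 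Combined with $r_u\geq r_v\geq 0$, this chains to $r_u(a-c)\geq r_v(a-c)\geq r_v(b-c)\geq 0$, and averaging over $\beta$ completes the argument. The only slightly delicate step is the structural characterization in the first paragraph, which needs both Lemma~\ref{lem:dist-lr} (for non-ancestor pairs) and Claim~\ref{cl:well-defn} (to exclude left edges on an ancestor-path); once the structure is in hand, the exchange inequality is a direct conditional-expectation calculation.
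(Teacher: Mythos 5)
Your proof is correct and takes essentially the same approach as the paper: first deduce from Lemma~\ref{lem:dist-lr} (together with the positivity of left-edge lengths via Claim~\ref{cl:well-defn}) that $u$ must be an ancestor of $v$ connected by a chain of right edges with $s_u<s_v$ and $r_u\geq r_v$, then run a conditional exchange argument on the two consecutive co-located visits. The paper tabulates five cases on the residual budget whereas you package the same comparison into the probabilities $a$, $b$, $c$, but the substance is identical.
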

\begin{proof}
Since $u$ and $v$ have the same coordinate in \lm, by Lemma~\ref{lem:dist-lr} it must be that  one is an ancestor of the other, and the $u-v$ path in \T consists only of right-edges. Since $\ell(u)>\ell(v)$, node $u$ is an ancestor of $v$ in \T. Suppose that $\N''$ chooses to visit $v$ before $u$. We will show that the alternate solution \oN that visits $u$ before $v$ has larger expected reward. This is intuitively clear since $u$ stochastically dominates $v$ in our setting: the probabilities are identical, size of $u$ is less than $v$, and reward of $u$ is more than $v$. The formal proof also requires independence of $u$ and $v$, and is by a case analysis.

Let us {\em condition} on all instantiations other than $u$ and $v$: we will show that \oN has larger conditional expected reward than $\N''$. This would also show that the (unconditional) expected reward of \oN is more than $\N''$. Let $X$ denote the total distance plus size in $\N''$ (resp. \oN) when it reaches $v$ (resp. $u$). Irrespective of the outcomes at $u$ and $v$, the residual budgets in $\N''$ and $\oN$ before/after visiting $\{u,v\}$ will be identical. So the only difference in (conditional) expected reward is at $u$ and $v$. The following table lists the different possibilities for  rewards from $u$ and $v$, as $X$ varies (recall that $B$ is the budget).
\begin{center}
{\renewcommand{\arraystretch}{1.5}
\renewcommand{\tabcolsep}{0.2cm}\begin{tabular}{|c|c|c|}
\hline  Cases & Reward $(\N'')$ & Reward $(\oN)$ \\
\hline $X+s_u+s_v\le B$ & $r_u+r_v$ & $r_u+r_v$ \\
\hline $X+s_v\le B<X+s_u+s_v$ & $(1-p^2)r_u+r_v$ & $r_u+(1-p^2)r_v$ \\
\hline $X+s_u\le B<X+s_v$ & $(1-p)(r_u+r_v)$ & $r_u+(1-p)r_v$ \\
\hline $X\le B <  X+s_u$ & $(1-p)^2r_u+(1-p)r_v$ & $(1-p)r_u+(1-p)^2r_v$ \\
\hline $B<X$ & $0$ & $0$ \\
\hline
\end{tabular}}
\end{center}
In each case, $\oN$ gets at least as much reward as $\N''$ since $r_u>r_v$. This completes the proof.
\end{proof}

For any node $v$ in $\N''$, let $E_v$ denote the set of nodes $u$ satisfying (i) $u$ appears before $v$ in $\N''$, and (ii) $u$ is {\em not} an ancestor of $v$ in tree \T. We refer to $E_v$ as the ``blocking set'' for node $v$. We first prove a useful property of the  $\{E_v\}$ sets.
\begin{claim}\label{cl:line-E-prop}
For any $v\in \N''$ and $u\in E_v$, we must have $u\in$ right-subtree$(a)$ and $v\in$ left-subtree$(a)$ at the lowest common ancestor $a$ of $v$ and $u$. Moreover $\N''$ can not get reward from $v$ if any vertex in its blocking set $E_v$ instantiates to a positive size.
\end{claim}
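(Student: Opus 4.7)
The plan is to prove the two assertions in order, using Lemmas~\ref{lem:line-na-ordered} and~\ref{lem:dist-lr}, Claim~\ref{cl:line-na-ties}, and a short size-computation from the construction. Throughout, let $a$ denote the lowest common ancestor of $u$ and $v$ in $\T$.

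For the first (structural) assertion, since $u\in E_v$ is not an ancestor of $v$ we have $a\ne u$. The key step is to rule out $a=v$: if $v$ were a proper ancestor of $u$, then because right-edges have length $0$ and left-edges are positive, every downward path in $\T$ has non-negative length, so $d(\rho,u)\ge d(\rho,v)$. A strict inequality would place $u$ after $v$ on the line by Lemma~\ref{lem:line-na-ordered}, while equality would force the $v$-to-$u$ path to consist only of right-edges and the tie-break of Claim~\ref{cl:line-na-ties} (using $\ell(v)>\ell(u)$) would again visit $v$ first. Both contradict $u\in E_v$. Hence $a\notin\{u,v\}$, so $u$ and $v$ sit in distinct subtrees of $a$; applying Corollary~\ref{cor:left-right} (all of $L_a$ follows all of $R_a$ on $\lm$) together with the fact that $u$ precedes $v$ in $\N''$ gives $u\in R_a$ and $v\in L_a$.

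For the second assertion, suppose some $u\in E_v$ instantiates to its positive size $s_u$. Since $v\in L_a$, Lemma~\ref{lem:dist-lr} yields $d(\rho,v)>B-2s_a$. A short size-calculation along the $a$-to-$u$ path in $R_a$ shows $s_u\ge 4s_a$: the path begins with a right-edge to $a$'s right child (contributing $\q{2^{\ell(a)-1}}$ to the size), and by a telescoping identity the ratio $s_u/s_a$ is minimized when all subsequent edges are left, giving exponent $2^{\ell(u)}\ge 2$ and hence $s_u\ge s_a\cdot\q{2}=4s_a$. Finally, by Lemma~\ref{lem:line-na-ordered} the walk $\N''$ advances monotonically in coordinate, so its travel distance upon first arriving at $v$ is exactly $d(\rho,v)$, and the total time consumed (travel plus processing) by that moment is at least $d(\rho,v)+s_u>(B-2s_a)+4s_a=B+2s_a>B$. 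Thus $\N''$ has already overrun its budget before it can process $v$, proving the claim.

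The main obstacle is the size lower bound $s_u\ge 4s_a$ for $u\in R_a$: the rest of the proof is a direct application of the earlier lemmas and corollary, but this step requires one to argue that the interleaving of left-edges after the initial right-edge out of $a$ cannot fully cancel the size gain, which reduces to the telescoping identity $2^{\ell(a)-1}-2^{\ell(a)-2}-\cdots-2^{\ell(u)}=2^{\ell(u)}\ge 2$ inherent to the construction of sizes.
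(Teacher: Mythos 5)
Your proof is correct and follows essentially the same route as the paper: rule out $a\in\{u,v\}$ via the monotone-visit lemma and the co-located tie-break, invoke Corollary~\ref{cor:left-right} to place $u\in R_a$ and $v\in L_a$, then derive the bound $s_u\ge 4s_a$ from the size construction and combine it with $d(\rho,v)>B-2s_a$ from Lemma~\ref{lem:dist-lr} to exhaust the budget. Your version is a touch more explicit in a couple of places (justifying that the travel to $v$ equals $d(\rho,v)$, and spelling out the telescoping exponent $2^{\ell(u)}\ge 2$ rather than specializing to a leaf), but the underlying argument is the same.
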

\begin{proof}
Observe that $u$ and $v$ are incomparable in \T because:
\begin{itemize}
\item $u$ is not an ancestor of $v$ by definition of $E_v$.
\item $u$ is not a descendant of $v$. Suppose (for a contradiction) that $u$ is a descendant of $v$. Note that $u$ and $v$ are not co-located in \lm: if it were, then by  Claim~\ref{cl:line-na-ties} and the fact that $\N''$ visits $u$ before $v$, $u$ must be an ancestor of $v$, which contradicts the definition of $E_v$. So the only remaining case is that $u$ is located further from $\rho$ than $v$: but this contradicts Lemma~\ref{lem:line-na-ordered} as $\N''$ visits $u$ before $v$.
\end{itemize}
So the lowest common ancestor $a$ of $v$ and $u$ is distinct from both $v,u$. Since $d(\rho,u)\le d(\rho,v)$, we must have $u\in R_a$ and $v\in L_a$. This proves the first part of the claim.

Since $u\in R_a$, its size $s_u\ge s_a\cdot \q{2^{\ell(a)-1} - 2^{\ell(a)-2}-\cdots -2^1} = 4\cdot s_a$.
As $v \in L_a$, by Lemma \ref{lem:dist-lr} $d(\rho,v)> B-2\cdot s_a$  and hence if $u$ has non-zero size, the total distance plus size until $v$ is more than $4s_a+B-2s_a>B$, i.e.~$\N''$ can not get reward from $v$.
\end{proof}

The next key claim shows that the sets $E_v$ are increasing along $\N''$.
\begin{claim}\label{cl:inc-E}
If node $v$ appears before $w$ in $\N''$ then $E_v\sse E_w$.
\end{claim}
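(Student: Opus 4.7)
The plan is to fix an arbitrary $u \in E_v$ and show the two defining conditions of $E_w$ hold for $u$: namely, that $u$ appears before $w$ in $\N''$, and that $u$ is not an ancestor of $w$ in $\T$. The first condition is immediate by transitivity of the order of traversal in $\N''$, since $u$ precedes $v$ and $v$ precedes $w$. So the entire content of the claim lies in the second, non-ancestor condition; that is the step I expect to be the real (though not very deep) obstacle.

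To handle the non-ancestor condition, I would argue by contradiction: suppose $u$ is an ancestor of $w$ in $\T$. Since $u \in E_v$, Claim~\ref{cl:line-E-prop} applies to the pair $(u,v)$, so their lowest common ancestor $a$ satisfies $u \in R_a$ and $v \in L_a$. Because $u$ lies strictly below $a$ in the right subtree, every descendant of $u$ also lies in $R_a$; in particular, the assumption that $u$ is an ancestor of $w$ forces $w \in R_a$ as well.

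Now I would invoke the distance separation in Lemma~\ref{lem:dist-lr} at the node $a$: from $v \in L_a$ we get $d(\rho,v) > B - 2 s_a$, while from $w \in R_a$ we get $d(\rho,w) \le B - 4 s_a$. In particular $d(\rho,w) < d(\rho,v)$. But $v$ precedes $w$ in $\N''$ by hypothesis, so Lemma~\ref{lem:line-na-ordered} requires $d(\rho,v) \le d(\rho,w)$, a direct contradiction. Hence $u$ is not an ancestor of $w$, and both conditions for $u \in E_w$ are satisfied, completing the proof. The argument uses nothing new beyond Claim~\ref{cl:line-E-prop}, Lemma~\ref{lem:dist-lr}, and the monotone-order Lemma~\ref{lem:line-na-ordered}; the only point that requires care is noting that the strict separation between left- and right-subtree coordinates (coming from the $2s_a$ vs.\ $4s_a$ gap) is what rules out the troublesome case where $v$ and $w$ share a coordinate.
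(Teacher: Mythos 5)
Your proof is correct and follows essentially the same route as the paper's: both argue by contradiction, apply Claim~\ref{cl:line-E-prop} to locate $u \in R_a$ and $v \in L_a$ at the lowest common ancestor $a$, deduce $w \in R_a$ from the supposed ancestry, and then contradict the visit order via Lemma~\ref{lem:line-na-ordered}. The only cosmetic difference is that you unfold the coordinate comparison through Lemma~\ref{lem:dist-lr}'s $2s_a$ versus $4s_a$ bounds, which the paper leaves implicit.
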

\begin{proof}
Consider nodes $v$ and $w$ as in the claim, and suppose (for contradiction) that there is some $u\in E_v\setminus E_w$. Since $u\in E_v$, by Claim~\ref{cl:line-E-prop}, $u\in$ right-subtree$(a)$ and $v\in$ left-subtree$(a)$, where $a$ is the lowest common ancestor of $u$ and $v$. Clearly $u$ appears before $w$ in $\N''$ ($u$ is before $v$ which is before $w$). And since $u \not\in E_w$, $u$ must be an ancestor of $w$. Hence $w$ is also in the right-subtree$(a)$, and $d(\rho,w)<d(\rho,v)$;  recall that $v\in$ left-subtree$(a)$. This contradicts with Lemma~\ref{lem:line-na-ordered} since $v$ is visited before $w$. Thus $E_v\sse E_w$.
\end{proof}

Based on Claim~\ref{cl:inc-E}, the blocking sets in $\N''$ form an increasing sequence. So we can partition $\N''$ into contiguous segments $\{N_i\}_{i=1}^k$ with $u_i$ (resp. $v_i$) denoting the first (resp. last) vertex of $N_i$,  so that the following hold for each $i\in[k] := \{1,2,\cdots,k\}$.
\begin{itemize}
\item The first vertex $u_i$ of $N_i$ has $|E_{u_i}|\ge (i-1)\sqrt{L}$, and
\item the increase in the blocking set $|E_{v_i}\setminus E_{u_i}| = |E_{v_i}|-|E_{u_i}| < \sqrt{L}$.
\end{itemize}

{\bf Defining directed non-adaptive policies from $\N''$.} For each $i\in[k]$ consider the non-adaptive policy $\N_i$ that traverses segment $N_i$ and visits only vertices in $N_i\setminus E_{v_i}$; note that $N_i\setminus E_{v_i} = N_i\setminus \left(E_{v_i} \setminus E_{u_i}\right)$ since $E_{u_i}\cap N_i=\emptyset$. Notice that the blocking set is always empty in $\N_i$: this means that nodes in $\N_i$ are visited in the order of some root-leaf path in tree \T, i.e. $\N_i$ is a directed non-adaptive policy (as considered in Section~\ref{subsec:dir-gap}). So by Claim~\ref{cl:mon-na}, the expected reward in each $\N_i$ is at most $3\sqrt{L}$. That is,
\begin{equation}\label{eq:line-mon-na}
\max_{i=1}^k \,\, \E\left[\mbox{reward }\N_i\right] \quad \le \quad 3\sqrt{L}
\end{equation}

Now we can upper bound the reward in the original non-adaptive policy $\N''$.
\begin{eqnarray}
\E\left[\mbox{reward }\N''\right] & =& \sum_{i=1}^k \Pr\left[\N'' \mbox{ reaches }u_i\right] \cdot \E\left[\mbox{reward in }N_i \,|\, \N'' \mbox{ reaches }u_i \right] \label{eq:line-na-fin1}\\
&\le & \sum_{i=1}^k e^{-i+1} \cdot \E\left[\mbox{reward in }N_i \, |\, \N'' \mbox{ reaches }u_i \right] \label{eq:line-na-fin2}\\
&\le & \sum_{i=1}^k e^{-i+1} \cdot \left( \E\left[\mbox{reward }\N_i\right] +|E_{v_i}\setminus E_{u_i}|\right) \label{eq:line-na-fin3}\\
&\le & \sum_{i=1}^k e^{-i+1} \cdot 4\sqrt{L}\quad \le \quad \frac{4e}{e-1}\,\sqrt{L}.\label{eq:line-na-fin4}
\end{eqnarray}
%Equality~\eqref{eq:line-na-fin1} is by independence of different sizes.
Inequality~\eqref{eq:line-na-fin2} uses $|E_{u_i}|\ge (i-1)\sqrt{L}$ and the second part of Claim~\ref{cl:line-E-prop} which implies $\Pr[\N''\mbox{ reaches }u_i]\le (1-p)^{|E_{u_i}|}\le e^{-i+1}$. Inequality~\eqref{eq:line-na-fin3} uses the definition of $\N_i$ which is obtained by skipping nodes $E_{v_i}\setminus E_{u_i}$ in $N_i$, and also the independence of the sizes (which allows us to drop the conditioning). Finally,~\eqref{eq:line-na-fin4} uses the property $|E_{v_i}\setminus E_{u_i}|< \sqrt{L}$ and~\eqref{eq:line-mon-na}. This completes the proof of Theorem~\ref{thm:ad-gap}.

%Stochastic
\section{Adaptivity Gap Upper Bound for Correlated  Orienteering} \label{sec:corr-ad-gap}
Here we show that the adaptivity gap of stochastic orienteering is $O(\log\log B)$ even in the setting of {\em correlated} sizes and rewards. This is an extension of the result proved in~\cite{GKNR12} for the uncorrelated case.

The correlated stochastic orienteering problem again consists of a budget $B$ and metric $(V,d)$ with each vertex $v\in V$ denoting a random job. Each vertex has a joint distribution over its reward and processing time (size); so the reward and size at any vertex are correlated. These distributions are still independent across vertices. The basic stochastic orienteering problem is the special case when vertex rewards are deterministic. We prove Theorem~\ref{thm:corr-loglog-UB} in this section.

{\bf Notation.} Recall that for each vertex $v\in V$, $S_v$ and $R_v$ denote its (random) size and reward. As before, all sizes are integers in the range $[B]:=\{0,1,\cdots,B\}$. We assume an explicit representation of each job's distribution: the job at vertex $v\in V$ has size $s\in[B]$ and reward $r_v(s)$ with probability $\Pr[S_v=s]=\pi_v(s)$.

We represent the optimal adaptive policy naturally as a decision tree \T. Nodes in \T are labeled by vertices in $V$ and branches correspond to size (and reward) instantiations. Note that the same vertex of $V$ may appear at multiple nodes of \T (note the distinction between nodes and vertices). However, any root-leaf path in \T contains each vertex at most once. For nodes $u,u'$, we use $u\prec u'$ to denote $u$ being an ancestor of $u'$ in \T, where $u\ne u'$.
For any node $u\in \T$, let $d_u$ (resp. $i_u$) denote the total distance traveled (resp. size observed) in \T before $u$. For node $u\in \T$, we overload notation and use $S_u$, $R_u$ etc to denote the respective term for the vertex labeling $u$.

\def\rr{\ensuremath{\overline{r}}\xspace}

Note that at any node $u\in\T$, only size instantiations $S_u\le B-d_u-i_u$ contribute reward (any larger size violates the budget).
Define the expected reward at node $u$ as $\rr_u := \sum_{s=0}^{B-d_u-i_u} \Pr[S_u=s]\cdot r_u(s)$. Observe that the optimal adaptive reward is $\Opt=\sum_{u\in\T} \Pr[\T\mbox{ reaches }u]\cdot \rr_u$.

For each integer $j=0,1,\cdots,\lceil \log_2B\rceil$ and node $u\in\T$, define $X^j_u:=\min\{S_u,2^j\}$.
%consider an adaptive sequence of random variables

Let $K:=\Theta(\log\log B)$ be a parameter that will be fixed later.

We are now ready to prove Theorem~\ref{thm:corr-loglog-UB}. It is along similar lines as the proof for uncorrelated stochastic orienteering in~\cite{GKNR12}, and also makes use of the following concentration inequality.

\begin{theorem}[Theorem~1 in~\cite{Zhang05}]\label{thm:zhang}
Let $I_1,I_2,\ldots$ be a sequence of possibly dependent random variables; for each $k\ge 1$ variable $I_k$ depends
only on $I_{k-1},\ldots,I_1$. Consider also a sequence of random functionals $\xi_k(I_1,\ldots,I_k)$ that lie in $[0,1]$. Let
$\E_{I_k}[\xi_k(I_1,\ldots,I_k)]$ denote the expectation of $\xi_k$ with respect to $I_k$, conditional on $I_1,\ldots,I_{k-1}$. Furthermore, let
$\tau$ denote any stopping time. Then,
\[ \Pr\left[ \,\,\sum_{k=1}^\tau \E_{I_k}[\xi_k(I_1,\ldots,I_k)] \,\, \ge \,\, \frac{e}{e-1}\cdot \left( \sum_{k=1}^\tau
\xi_k +\delta\right) \right] \quad \le \quad \exp(-\delta), \qquad \forall \delta\ge 0.\]
\end{theorem}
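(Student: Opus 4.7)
The plan is to establish an exponential supermartingale whose parameters are tuned so that Markov's inequality (after optional stopping at $\tau$) yields exactly the constant $e/(e-1)$. Let $\F_k$ denote the $\sigma$-algebra generated by $I_1,\ldots,I_k$, let $Y_k := \xi_k(I_1,\ldots,I_k)$, and let $Z_k := \E_{I_k}[\xi_k]$, which is $\F_{k-1}$-measurable. Both lie in $[0,1]$. For parameters $\mu>0$ and $\lambda := 1-e^{-\mu}$, define
$$M_k \,:=\, \exp\!\Bigl(\lambda\sum_{j=1}^{k} Z_j \,-\, \mu\sum_{j=1}^{k} Y_j\Bigr), \qquad M_0:=1.$$

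Next I would verify that $(M_k)$ is an $(\F_k)$-supermartingale. The key inequality is the elementary fact that the convex function $y\mapsto e^{-\mu y}$ on $[0,1]$ lies below its chord joining $(0,1)$ and $(1,e^{-\mu})$, giving $e^{-\mu Y_k}\le 1-(1-e^{-\mu})Y_k = 1-\lambda Y_k$ pointwise. Taking conditional expectations,
$$\E\bigl[e^{-\mu Y_k}\,\big|\, \F_{k-1}\bigr]\;\le\; 1-\lambda Z_k\;\le\; e^{-\lambda Z_k},$$
using $1-x\le e^{-x}$. Multiplying by $M_{k-1}\cdot e^{\lambda Z_k}$ (which is $\F_{k-1}$-measurable) shows $\E[M_k\mid \F_{k-1}]\le M_{k-1}$, so $(M_k)$ is a nonnegative supermartingale with $\E[M_k]\le 1$.

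Now I would apply optional stopping. For each finite $n$, the stopping time $\tau\wedge n$ is bounded, so by Doob's optional stopping theorem $\E[M_{\tau\wedge n}]\le 1$, and Markov's inequality gives $\Pr[M_{\tau\wedge n}\ge e^{t}]\le e^{-t}$ for every $t\ge 0$. Since $M_{\tau\wedge n}\to M_\tau$ pointwise on $\{\tau<\infty\}$ and the indicator $\mathbf{1}\{M_{\tau\wedge n}\ge e^t\}$ converges appropriately, taking $n\to\infty$ (via Fatou) yields
$$\pr{\lambda\sum_{k=1}^{\tau} Z_k \,-\, \mu\sum_{k=1}^{\tau} Y_k \;\ge\; t}\;\le\; e^{-t}.$$

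Finally I would dial the parameters. Dividing the event inside the probability by $\lambda$ rewrites it as $\sum_{k\le \tau} Z_k \ge (\mu/\lambda)\sum_{k\le \tau} Y_k + t/\lambda$. Choosing $\mu=1$ gives $\lambda=1-e^{-1}=(e-1)/e$, so $\mu/\lambda = 1/\lambda = e/(e-1)$; setting $t=\delta$ then yields $t/\lambda = \frac{e}{e-1}\delta$, matching the claim
$$\pr{\sum_{k=1}^{\tau} \E_{I_k}[\xi_k] \;\ge\; \tfrac{e}{e-1}\Bigl(\sum_{k=1}^{\tau}\xi_k+\delta\Bigr)} \;\le\; e^{-\delta}.$$
The only real obstacle is the optional-stopping step when $\tau$ is unbounded, which is handled by the standard truncation $\tau\wedge n$ together with Fatou; the algebraic heart of the argument is simply the convexity bound for $e^{-\mu y}$ on $[0,1]$ plus the calibrated choice $\mu=1,\ \lambda=1-1/e$ that makes both $\mu/\lambda$ and $1/\lambda$ equal to $e/(e-1)$.
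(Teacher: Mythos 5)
Your proof is correct. Note, however, that the paper does not prove this statement at all: it is quoted verbatim as Theorem~1 of the cited reference [Zhang05] and used as a black box, so there is no in-paper argument to compare against. Your derivation is essentially the standard proof of Zhang's inequality: the chord bound $e^{-\mu y}\le 1-(1-e^{-\mu})y$ for $y\in[0,1]$ by convexity, followed by $1-x\le e^{-x}$, yields the exponential supermartingale $M_k$; optional stopping at $\tau\wedge n$ plus Markov gives the tail bound; and the calibration $\mu=1$, $\lambda=1-1/e$ makes both $\mu/\lambda$ and $1/\lambda$ equal to $e/(e-1)$, reproducing the stated constant exactly. The only point worth tightening is the limit step: on the boundary event $\{M_\tau=e^\delta\}$ the indicators $\mathbf{1}\{M_{\tau\wedge n}\ge e^\delta\}$ need not converge to $1$, so you should either run the Fatou argument with threshold $e^{\delta-\epsilon}$ and let $\epsilon\downarrow 0$, or note that in the intended application $\tau$ is almost surely finite (the policy tree has bounded depth), which makes the truncation harmless. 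This is a routine repair and does not affect the substance of the argument.
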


This result is used in proving the following important property.

\begin{lemma}\label{lem:corr-prob}
Assume $K\ge 12$, and fix any $j\in\{0,1,\cdots,\lceil \log B\rceil\}$. Then, the probability of reaching a node $u\in \T$ with:\begin{itemize}
\item $\sum_{v\preceq u} X^j_v \le 2\cdot 2^j$, and
\item $\sum_{v\preceq u} \E_v\left[ X^j_v \right] > K\cdot 2^j$
\end{itemize}
is at most $e^{-K/3}$.
\end{lemma}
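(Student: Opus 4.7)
The plan is to apply the martingale tail inequality of Theorem~\ref{thm:zhang} along the random root-to-leaf walk in $\T$ induced by the size instantiations, with the natural rescaling $\xi_k := X^j_{v_k}/2^j \in [0,1]$, where $v_k$ is the $k$-th node visited and $I_k := S_{v_k}$ is the observed size. Since the identity of $v_k$ is determined by $I_1,\ldots,I_{k-1}$ (it encodes the branch taken in $\T$) and $X^j_{v_k}=\min(S_{v_k},2^j)$ depends additionally on $I_k$, the functionals $\xi_k$ fit the framework of that theorem, and the relevant conditional expectation $\E_{I_k}[\xi_k]$ equals $\E_v[X^j_v]/2^j$ for $v=v_k$.

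\textbf{Stopping time and reduction.} First I would define $\tau$ to be the first index $k$ at which $\sum_{i=1}^k \E_{I_i}[\xi_i]$ strictly exceeds $K$, truncated at the depth of the walk if this never happens; this is a valid stopping time because each $\E_{I_i}[\xi_i]$ is $\sigma(I_1,\ldots,I_{i-1})$-measurable. Suppose the bad event occurs at some node $u\in\T$, say at index $k^*$. Then the second bad-event hypothesis gives $\sum_{i=1}^{k^*}\E_{I_i}[\xi_i]>K$, so $\tau\le k^*$. Because each $\xi_i\ge 0$ the partial sums $\sum\xi_i$ are monotone, and the first bad-event hypothesis $\sum_{i=1}^{k^*}\xi_i\le 2$ then yields $\sum_{i=1}^{\tau}\xi_i\le 2$; by definition of $\tau$ we also have $\sum_{i=1}^{\tau}\E_{I_i}[\xi_i] > K$.

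\textbf{Applying Zhang.} With $\delta:=K/3$, Theorem~\ref{thm:zhang} gives
\[
\Pr\!\left[\,\sum_{i=1}^\tau \E_{I_i}[\xi_i] \,\ge\, \frac{e}{e-1}\Bigl(\sum_{i=1}^\tau \xi_i + K/3\Bigr)\right] \;\le\; e^{-K/3}.
\]
On the bad event the left-hand inequality is guaranteed provided $K \ge \tfrac{e}{e-1}(2+K/3)$, which rearranges to $K\ge \tfrac{6e}{2e-3}\approx 6.7$; the hypothesis $K\ge 12$ comfortably suffices, giving the claimed bound $e^{-K/3}$.

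\textbf{Key obstacle.} The only delicate step is collapsing the existential bad event --- reaching \emph{some} node $u\in\T$ satisfying both conditions --- into a single deterministic inequality at one stopping time, rather than taking a prohibitively expensive union bound over nodes of $\T$. This works cleanly because the tree branching is driven by the $I_k$'s themselves (so a uniform walk captures every reachable node in one pass) and because non-negativity of $\xi_i$ makes the size sum monotone, allowing us to transfer both conditions from the arbitrary witness $k^*$ down to $\tau$ without loss.
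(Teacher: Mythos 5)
Your proof is correct and follows essentially the same approach as the paper's: both rescale by $2^j$ to fit Zhang's martingale tail inequality, using a stopping time along the random root-to-leaf walk. The only cosmetic differences are that you define $\tau$ as the first crossing of $K$ by the conditional-expectation sum and then transfer the bound $\sum\xi_i\le 2$ from the witness index via monotonicity, taking $\delta=K/3$ directly, whereas the paper defines $\tau$ as the first node at which both conditions hold simultaneously and takes $\delta=K/2-2$ before relaxing to $e^{-K/3}$; both choices are valid stopping times and yield the claimed bound under $K\ge 12$.
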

\begin{proof}
%This is an application of Theorem~\ref{thm:zhang}.
For each $k=1,2,\cdots$, set $I_k$ to be the $k^{th}$ node seen in \T, and
$$\xi_k(I_1,\ldots,I_k) \quad := \quad \frac{X^j_{I_k}}{2^j} \quad = \quad \min\left\{ \frac{S_{I_k}}{2^{j}}\, , \, 1\right\}.$$
Observe that this sequence satisfies the condition in Theorem~\ref{thm:zhang}, since the identity of the $k^{th}$ node in \T depends only on the outcomes of the previous $k-1$ nodes. Moreover, each $\xi_k(\cdot)$ has a value in the range $[0,1]$. Note that the conditional expectation $\E_{I_k}[\xi_k(I_1,\ldots,I_k)] = \E_{I_k}\left[X^j_{I_k}\right]/2^j$. Define stopping time $\tau$ to be the first node $I_k=u$ (if one exists) at which the following two conditions hold:
$$\sum_{h=1}^k \xi_h(I_1,\ldots,I_h) \le 2, \quad \mbox{ i.e. } \sum_{v\preceq u} X^j_v \le 2\cdot 2^j; \quad \mbox{ and }$$
$$\sum_{h=1}^k \E_{I_h}\left[\xi_h(I_1,\ldots,I_h)\right] > K, \quad \mbox{ i.e. } \sum_{v\preceq u} \E_v\left[ X^j_v \right] > K\cdot 2^j.$$
If there is no such node, then $\tau$ stops when \T ends (i.e. after a leaf-node of \T). Clearly, if $\tau$ stops before \T ends, then
$$ \sum_{k=1}^\tau \E_{I_k}[\xi_k(I_1,\ldots,I_k)] \,\, > \,\, K \,\, > \,\, \frac{e}{e-1}\cdot \left( \sum_{k=1}^\tau
\xi_k(I_1,\ldots,I_k) +\frac{K}{2}-2\right)$$

Now, setting $\delta=\frac{K}{2}-2$ in Theorem~\ref{thm:zhang}, the probability that $\tau$ stops before \T ends (i.e. we reach a node $u$ satisfying the two conditions stated in the lemma) is at most  $e^{-K/2+2}\le e^{-K/3}$  using $K\ge 12$. %This completes the proof.
\end{proof}

\begin{lemma}\label{lem:corr-NAfromA}
Assume $K\ge 3\cdot\log(6\log B)+12$. There is some node $s\in \T$ such that the path $\sigma$ from the root to $s$ satisfies:
\begin{itemize}
\item Total reward: $\sum_{v\in \sigma} \rr_v = \sum_{v\preceq s}\rr_v\ge \Opt/2$.
\item Prefix size: For each $v\in\sigma$ and $j$, either $\sum_{w\preceq v} X^j_w > 2\cdot 2^j$ or $\sum_{w\preceq v} \E_w\left[X^j_w\right] \le K\cdot 2^j$.
\end{itemize}
\end{lemma}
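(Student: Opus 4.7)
The plan is to prune the optimal adaptive decision tree $\T$ at its ``bad'' nodes and argue that this discards at most half of $\Opt$; a weighted-average argument over the resulting pruned tree then produces a concrete root-leaf path $\sigma$ satisfying both conclusions.

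Concretely, I would call a node $u\in\T$ \emph{bad at scale $j$} if both inequalities of Lemma~\ref{lem:corr-prob} hold at $u$, and simply \emph{bad} if it is bad at some $j\in\{0,1,\dots,\lceil\log B\rceil\}$. Let $\T''$ be the sub-decision-tree of $\T$ obtained by deleting every node that has a bad ancestor (strict or equal); by construction every root-leaf path in $\T''$ meets the second bullet of the lemma at each of its nodes. Writing $\mathcal{F}$ for the set of first bad nodes of $\T$ and $M(u)$ for the expected reward collected by the subtree of $\T$ rooted at $u$ conditional on reaching $u$ (counting $u$ itself and the descendants visited on the random continuation), the reward discarded by the pruning satisfies
\[
\Opt - \mathrm{reward}(\T'') \;=\; \sum_{u\in\mathcal{F}}\Pr[\T\text{ reaches }u]\cdot M(u).
\]

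The key intermediate step is the uniform bound $M(u)\le \Opt$ for every $u\in\T$. I would prove it by building an alternative adaptive strategy from $\rho$: travel in the metric along a shortest path to the vertex labeling $u$ (time at most $d_u$ by the triangle inequality) and then execute the subtree of $\T$ rooted at $u$ verbatim, which is well-defined because the vertex distributions below $u$ are independent of everything above. Since that subtree consumes time at most $B-d_u-i_u$ on every realization (as $\T$ itself is feasible), the modified policy uses at most $d_u+(B-d_u-i_u)=B-i_u\le B$ in total, so it is feasible and collects expected reward exactly $M(u)$, giving $M(u)\le\Opt$.

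Plugging this uniform bound into the displayed identity and applying Lemma~\ref{lem:corr-prob} to each of the $\lceil\log B\rceil+1$ scales via a union bound then gives
\[
\Opt - \mathrm{reward}(\T'') \;\le\; \Opt\cdot (\lceil\log B\rceil+1)\cdot e^{-K/3}.
\]
The hypothesis $K\ge 3\log(6\log B)+12$ makes $e^{-K/3}\le 1/(6e^4\log B)$, so the right-hand side is at most $\Opt/2$. Writing $\mathrm{reward}(\T'')=\sum_\ell p_\ell\cdot\bigl(\sum_{v\preceq\ell}\rr_v\bigr)$ with $\ell$ ranging over the leaves of $\T''$, $p_\ell=\Pr[\T\text{ reaches }\ell]$, and $\sum_\ell p_\ell\le 1$, the leaf $s$ maximising $\sum_{v\preceq\ell}\rr_v$ then satisfies both conclusions of the lemma. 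The main obstacle is the uniform domination $M(u)\le \Opt$: without this, the $O(\log B)$ factor from the union bound could not be absorbed into the $e^{-K/3}$ tail of Lemma~\ref{lem:corr-prob}, and the whole pruning strategy would collapse.
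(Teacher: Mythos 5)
Your proof follows essentially the same route as the paper's: your ``bad nodes'' are the paper's ``star nodes,'' the tree is truncated just before them, the same observation $M(u)\le\Opt$ is used (the paper states it in one line via the same teleporting argument), the union bound is over the $O(\log B)$ scales of Lemma~\ref{lem:corr-prob}, and an averaging argument extracts the path $\sigma$.

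There is, however, a genuine flaw in the final averaging step as you wrote it. The displayed identity $\mathrm{reward}(\T'')=\sum_\ell p_\ell\bigl(\sum_{v\preceq\ell}\rr_v\bigr)$, with $\ell$ ranging over leaves of $\T''$, is false. A non-leaf node $u$ of $\T''$ can have some of its $\T$-children bad and others not; the probability mass routed to the bad children contributes to $p_u$ but never reaches any leaf of $\T''$, so $p_u>\sum_{\ell\succeq u,\,\ell\text{ leaf of }\T''}p_\ell$ in general. Unwinding the double sum shows the right-hand side can only undercount, i.e.\ $\mathrm{reward}(\T'')\ge\sum_\ell p_\ell\bigl(\sum_{v\preceq\ell}\rr_v\bigr)$, which is the wrong direction: combined with $\sum_\ell p_\ell\le1$ you cannot conclude that any leaf achieves $\Opt/2$. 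The fix is to average over full realizations of $\T$ rather than over leaves of $\T''$: for a realization $\pi$, let $\ell(\pi)$ be the last node on $\pi$ inside $\T''$; then $\mathrm{reward}(\T'')=\E_\pi\bigl[\sum_{v\preceq\ell(\pi)}\rr_v\bigr]$ is a genuine probability average (the exit nodes $\ell(\pi)$ form an anti-chain whose probabilities sum to $1$), so some realization gives $\sum_{v\preceq\ell(\pi)}\rr_v\ge\Opt/2$, and $s:=\ell(\pi)\in\T''$ has the prefix-size property. (Alternatively, since $\rr_v\ge0$ the prefix sums are monotone along $\T''$, so one may then pass from $s$ to any $\T''$-leaf below it.) A second, immaterial, imprecision: the teleported policy collects \emph{at least} $M(u)$, not exactly $M(u)$, because its residual budget at every descendant of $u$ is only larger; but that only strengthens $M(u)\le\Opt$.
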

\begin{proof}
For each $j=0,1,\cdots,\lceil \log_2B\rceil$, define {\em band $j$ ``star nodes''} to be those $u\in \T$ that satisfy the two conditions in Lemma~\ref{lem:corr-prob}. Using Lemma~\ref{lem:corr-prob} and a union bound over $1+\lceil\log_2B\rceil$ values of $j$, the probability  of reaching any star node is at most $\frac{3\log_2B}{e^{K/3}}\le \frac12$ since $K\ge 3\cdot\log(6\log B)$.

Observe that for any node $u\in \T$, the conditional expected reward from the subtree of \T under $u$ is at most $\Opt$: otherwise, the alternate policy that visits $u$ directly from the root and follows the subtree below $u$ would be a feasible policy of value more than $\Opt$, contradicting the optimality of \T.

Consider tree \T truncated just before all the star nodes. By the above two properties, the expected reward lost is at most $\Opt\cdot \Pr[\mbox{reach a star node}]\le \Opt/2$. So the remaining reward is at least $\Opt/2$. By averaging, there is some leaf node $s$ in truncated \T, at which $\sum_{v\preceq s} \rr_v\ge \Opt/2$; this proves the ``total reward'' property. Let $\sigma$ denote the path from root to $s$ in \T. Since $\sigma$ does not contain a star node (of any band), every $v\in\sigma$  violates one of the conditions in Lemma~\ref{lem:corr-prob} for each value of $j$. This proves the ``prefix size'' property.
\end{proof}

{\bf The non-adaptive policy.} We now complete the proof of Theorem~\ref{thm:corr-loglog-UB} by implementing the path $\sigma$ from Lemma~\ref{lem:corr-NAfromA} as a non-adaptive solution \N. The policy \N simply involves visiting the vertices on $\sigma$ (in that order) and attempting each job independently with probability $\frac{1}{4K}$. Observe that the distance traveled by \N until any $v\in \sigma$ is exactly $d_v$.

We will show that the expected reward of policy \N is at least $\frac{1}{8K}\sum_{v\in\sigma} \rr_v$.

\begin{claim}\label{cl:corr-NA}
For any $v\in\sigma$, $\Pr\left[\N \mbox{ has seen total size $\le i_v$ until }v\right] \ge \frac34$.
\end{claim}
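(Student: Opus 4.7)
Fix $v\in\sigma$ and write $T:=\sum_{w\prec v,\,w\in\sigma} Z_w\, S'_w$, where each $Z_w\sim\mathrm{Ber}(1/(4K))$ is the independent attempt indicator and $S'_w$ is an independent fresh draw from the size distribution at $w$. The plan is to bound $\Pr[T>i_v]\le 1/4$ by invoking the prefix-size invariant of Lemma~\ref{lem:corr-NAfromA} at an appropriate truncation level.

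Set $j^\star:=\lceil\log_2 i_v\rceil$, so $i_v\le 2^{j^\star}<2\,i_v$. Since $X^{j^\star}_w\le S_w$ for each $w\prec v$ on $\sigma$ and $X^{j^\star}_v\le 2^{j^\star}$, the realized value satisfies $\sum_{w\preceq v}X^{j^\star}_w\le i_v+2^{j^\star}\le 2\cdot 2^{j^\star}$, which strictly violates case~(a) at level $j^\star$. Hence case~(b) must hold:
\[ \sum_{w\preceq v}\E[X^{j^\star}_w]\;\le\;K\cdot 2^{j^\star}. \]

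Next, I would split the contribution by whether $S'_w\ge 2^{j^\star}$ (``large'') or $S'_w<2^{j^\star}$ (``small''). For large attempts, Markov applied to each $X^{j^\star}_w$ gives $\sum_w\Pr[S'_w\ge 2^{j^\star}]\le K$, so by a union bound
\[ \Pr[\exists\,w\prec v\colon Z_w=1,\,S'_w\ge 2^{j^\star}]\;\le\;\frac{1}{4K}\cdot K\;=\;\frac{1}{4}. \]
On the complement event, $T$ is dominated by $T':=\sum_w Z_w\,X^{j^\star}_w$, whose mean is $\E[T']\le 2^{j^\star}/4$ and whose variance satisfies $\mathrm{Var}(T')\le 2^{j^\star}\cdot\E[T']\le(2^{j^\star})^2/4$. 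A suitable tail bound on $T'$ -- combining the low attempt probability $1/(4K)$ with these moment bounds via a Chernoff/Bernstein-type inequality -- then yields the desired $\Pr[T>i_v]\le 1/4$.

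The main obstacle is precisely this last step: a naive Markov on $T'$ at threshold $i_v$ alone gives $\Pr[T'>i_v]\le 1/2$, which combined with the $1/4$ from large attempts would sum to $3/4$, not the $1/4$ we need. Reaching the claimed $3/4$ lower bound on $\Pr[T\le i_v]$ requires exploiting either (i) a dyadic application of case~(b) at additional levels $j$ where it happens to hold, reducing the effective mean of $T'$ strictly below $i_v/4$, or (ii) a sharper concentration bound for $T'$ that uses the fact that each summand $Z_w X^{j^\star}_w$ vanishes with probability $1-1/(4K)$, together with the independence of the $(Z_w,S'_w)$ across $w\in\sigma$.
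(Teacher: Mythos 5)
Your setup is on the right track: identifying a dyadic level $j$, applying the ``prefix size'' invariant to rule out case (a), and extracting $\sum_{w\prec v}\E[X^j_w]\le K\cdot 2^j$. But the proof does not close, and you correctly sense this at the end. The problem is the split into ``large'' ($S'_w\ge 2^{j^\star}$) and ``small'' attempts, which forces you to pay two separate $1/4$ and $1/2$ probabilities. This decomposition is unnecessary. The key observation, which the paper uses, is that capping each attempted size at $2^j$ \emph{does not change the event} that the total attempted size is at least $2^j$: if any single attempted $S_w\ge 2^j$, its capped value is exactly $2^j$ (so the capped sum is already $\ge 2^j$), and otherwise capping is the identity on all attempted terms. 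Hence $\bigl\{\sum_w Z_w S_w \ge 2^j\bigr\}=\bigl\{\sum_w Z_w X^j_w \ge 2^j\bigr\}$, and a single application of Markov to $\sum_w Z_w X^j_w$, whose mean is at most $\frac{1}{4K}\cdot K\cdot 2^j = 2^j/4$, gives the required $1/4$ in one shot. No Bernstein/Chernoff refinement and no dyadic bootstrapping are needed.

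There is also a secondary issue with your choice of level: you set $j^\star=\lceil\log_2 i_v\rceil$ so that $i_v\le 2^{j^\star}$, but then $T>i_v$ does \emph{not} imply $T\ge 2^{j^\star}$ (e.g.\ $i_v=5$ gives $j^\star=3$, $2^{j^\star}=8$, and $T=6$ falls in the gap). The paper instead chooses $j$ with $2^j-1\le i_v<2^{j+1}-1$, so that $2^j\le i_v+1$; then, since sizes are integral, $T>i_v$ implies $T\ge i_v+1\ge 2^j$, and the bound $\Pr[T\ge 2^j]\le 1/4$ directly yields the claim. With your $j^\star$ the inequality chain does not link up even after fixing the concentration step.
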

\begin{proof}
Let $j\in\{0,1,\cdots,\lceil \log_2B\rceil\}$ be the value for which $2^j-1\le i_v<2^{j+1}-1$. So,
$$\sum_{w\prec v} X^j_w \quad =\quad \sum_{w\prec v} \min\{S_w, 2^j\} \quad \le \quad \sum_{w\prec v} S_w \quad =\quad i_v \quad < \quad 2\cdot 2^j.$$

Note that the claim is trivially true for $v$ being the root. For any other $v\in \sigma$, let $v'$ denote the node in $\sigma$ (and \T) immediately preceding $v$.
Using the ``prefix size'' property  (with $v'$ and $j$) in Lemma~\ref{lem:corr-NAfromA}, it follows that $\sum_{w\prec v} \E_w\left[X^j_w\right] \le K\cdot 2^j$. Since policy \N tries each job only with probability $\frac{1}{4K}$,  by Markov's inequality,
$$\Pr\left[\N \mbox{ has seen total size $\ge 2^j$ until }v\right] \le \frac14.$$ Since $i_v\ge 2^j-1$ and sizes are integral, the claim follows.
\end{proof}
We can now bound the reward in \N.
\begin{eqnarray*}
\E[\mbox{reward of }\N] & \ge & \sum_{v\in \sigma} \Pr[\N \mbox{ tries }v]\cdot \Pr[\N \mbox{ has seen total size $\le i_v$ until }v]\cdot \sum_{s=0}^{B-d_v-i_v} \Pr[S_v=s]\cdot r_v(s) \\
&\ge & \sum_{v\in \sigma} \frac{1}{4K}\cdot \frac{3}{4}\cdot \rr_v \quad >\quad \frac{1}{6K}\sum_{v\in \sigma} \rr_v\quad \ge \quad  \frac{\Opt}{12K}.
\end{eqnarray*}
The first inequality uses independence across vertices. The second inequality uses the sampling probability of jobs in \N, Claim~\ref{cl:corr-NA} and the definition of $\rr$. The last inequality uses the ``total reward'' property from Lemma~\ref{lem:corr-NAfromA}. This completes the proof of Theorem~\ref{thm:corr-loglog-UB}.

\paragraph{Defining portals on non-adaptive policy}
We now define some special nodes on the path $\sigma$ from Lemma~\ref{lem:corr-NAfromA}, which will be useful in the approximation algorithm given in the next section. First, some notation.
\begin{definition}[Capped sizes and rewards]\label{def:cap-size-rew}
For any vertex $v$ and integer $j\ge 0$, let $\mu^j_v:=\E[X^j_v]=\E\left[\min\{S_v,2^j\}\right]$ be the mean size of $v$ capped at $2^j$. For any vertex $v$ and integer $d\ge 0$, let $\eta_v(d):=\sum_{s=0}^d \pi_v(s)\cdot r_v(s)$ be the expected reward from $v$ under size instantiation at most $d$.
\end{definition}

Note that the expected reward of any $v\in\sigma$ in \T is $\rr_v=\eta_v(B-d_v-i_v)$.

Also, capped sizes satisfy the following property which will play a crucial role  in our algorithm:
\begin{equation}\label{eq:capped-mu}
\frac{\mu^{j+1}_v}{2^{j+1}} \quad \le \quad \frac{\mu^{j}_v}{2^{j}},\qquad \forall v\in V \mbox{ and } j\ge 0.
\end{equation}
This inequality follows from the fact that $\E\left[\min\{S_v,2^{j+1}\}\right]\le 2\cdot \E\left[\min\{S_v,2^j\}\right]$.

Let $L:=\lceil \log_2B\rceil$, and $[L]:=\{0,1,\cdots,L\}$.  \begin{equation}\label{eq:NA-portals}
\mbox{ Set {\em portal} $v_j$ to be the first vertex $u\in \sigma$ with $i_u\ge 2^{j+1}-1$}, \quad \forall \, j\in [L].
\end{equation}
Recall that $\sigma$ starts at the root $\rho$; for notational convenience set $v_{-1}=\rho$. %let $\rho'$ denote the last vertex on $\sigma$.
Clearly, $v_0\prec v_1\prec\cdots \prec v_{L}$.
For any $j\in [L]$, define {\em segment} $O_j$ to consist of the vertices $v_{j-1}\preceq u\prec v_j$ in path $\sigma$. The following lemma shows that
we can round size instantiations in $\sigma$ to powers of two, and still retain the two properties in Lemma~\ref{lem:corr-NAfromA}.

\begin{lemma}\label{lem:corr-portals}
Given any instance of correlated stochastic orienteering, there exist ``portal vertices'' $\{v_j\}_{j=0}^L$ and path $\sigma$ originating from $\rho$ and visiting the portals in that order, such that:
\begin{itemize}
\item Reward: $\sum_{j\in[L]} \,\, \sum_{u\in O_j} \eta_u(B-d_u-2^j+1) \ge \Opt/2$. For each $j\in[L]$, $O_j$ consists of the vertices in $\sigma$ between $v_{j-1}$ and $v_{j}$. For any $u\in\sigma$, $d_u$ is the distance to vertex $u$ along $\sigma$.
\item Prefix mean size: $\sum_{\ell=0}^j \,\,\sum_{u\in O_\ell} \mu^j_u \le (K+1)\cdot 2^j$, for all $j\in[L]$. Here $K=\Theta(\log\log B)$.
\end{itemize}
\end{lemma}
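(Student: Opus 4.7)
My plan is to verify both properties directly for the portals $v_j$ defined in~\eqref{eq:NA-portals}, using only Lemma~\ref{lem:corr-NAfromA} applied to carefully chosen prefixes of $\sigma$. The reward property is essentially a restatement of the total-reward conclusion of Lemma~\ref{lem:corr-NAfromA} once one observes that the truncation level $2^j-1$ is compatible with the inequality $i_u\ge 2^j-1$ inherited from the portal $v_{j-1}$. The prefix mean size property is the more delicate half, and the key idea will be to apply the prefix-size dichotomy one vertex before $v_j$ rather than at $v_j$ itself.

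\textbf{Reward property.} Any $u\in O_j$ weakly succeeds $v_{j-1}$ on $\sigma$, so by the defining inequality of $v_{j-1}$ we have $i_u\ge i_{v_{j-1}}\ge 2^j-1$ for $j\ge 1$; for $j=0$ the constraint $i_u<1$ together with integrality forces $i_u=0=2^0-1$. Since $\eta_u$ is non-decreasing, $\rr_u=\eta_u(B-d_u-i_u)\le \eta_u(B-d_u-2^j+1)$. Summing over all $u\in\sigma$ and invoking the total-reward bound of Lemma~\ref{lem:corr-NAfromA} gives the required lower bound of $\Opt/2$. (If $\sigma$ is short enough that $v_j$ is undefined for some $j$, one sets $v_j$ to the last vertex of $\sigma$ and extends the final non-empty segment; this only helps the inequality.)

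\textbf{Prefix mean size property.} Fix $j\in[L]$, let $u^*$ be the predecessor of $v_j$ on $\sigma$, and let $u^{**}$ be the predecessor of $u^*$ (the case $u^*=\rho$ is immediate since then $\bigcup_{\ell=0}^j O_\ell=\{\rho\}$ and $\mu^j_\rho\le 2^j$). By definition of $v_j$, $i_{u^*}\le 2^{j+1}-2$, and since $u^{**}$ together with its ancestors on $\sigma$ form exactly the set of strict ancestors of $u^*$,
\[
\sum_{w\preceq u^{**}} X^j_w\ \le\ \sum_{w\preceq u^{**}} S_w\ =\ i_{u^*}\ \le\ 2^{j+1}-2\ <\ 2\cdot 2^j.
\]
Thus the first alternative of Lemma~\ref{lem:corr-NAfromA} fails at $v=u^{**}$ with this $j$, so the second must hold: $\sum_{w\preceq u^{**}}\mu^j_w\le K\cdot 2^j$. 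Adding the one remaining term $\mu^j_{u^*}\le 2^j$ and noting that $\bigcup_{\ell=0}^j O_\ell=\{u\in\sigma:u\preceq u^*\}$ yields $\sum_{\ell=0}^j\sum_{u\in O_\ell}\mu^j_u\le (K+1)\cdot 2^j$.

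\textbf{Main obstacle.} The only delicate step is the choice to apply the dichotomy one vertex before $v_j$ rather than at $v_j$ or $u^*$: the uncapped prefix sum $i_{u^*}$ can already be as large as $2^{j+1}-2$, so including the extra term $X^j_{u^*}\le 2^j$ would push $\sum_{w\preceq u^*}X^j_w$ up to $3\cdot 2^j-2$, beyond the dichotomy's threshold $2\cdot 2^j$, and we would lose the ability to fall into the second alternative. Stepping back to $u^{**}$ costs only a single additive $\mu^j_{u^*}\le 2^j$, which accounts exactly for the $+1$ in the final factor $(K+1)\cdot 2^j$.
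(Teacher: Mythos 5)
Your proof is correct and follows essentially the same approach as the paper: both verify the two properties directly from Lemma~\ref{lem:corr-NAfromA} with the portal definition~\eqref{eq:NA-portals}, and both apply the prefix-size dichotomy at the vertex two steps before $v_j$ (your $u^{**}$, the paper's $w'$), absorbing the single leftover term $\mu^j_{u^*}\le 2^j$ into the $(K+1)$ factor. The one small detail you gloss over in the reward step is that vertices of $\sigma$ beyond $v_L$ (if any) do not lie in any $O_j$, but these contribute $\rr_w=0$ since the prefix size already exceeds $B$, so the inequality is unaffected.
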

\begin{proof}
The path $\sigma$ is from Lemma~\ref{lem:corr-NAfromA}, and portals $\{v_j\}_{j=0}^L$ are as in~\eqref{eq:NA-portals}. For the first property, consider any segment $O_j$ and vertex $u\in O_j$. By the definition of portals, we have $i_u\ge i_{v_{j-1}} \ge 2^{j}-1$; so $\rr_u=\eta_u(B-d_u-i_u)\le \eta_u(B-d_u-2^j+1)$. Using the ``total reward'' property in Lemma~\ref{lem:corr-NAfromA}, we have:
$$\sum_{j\in[L]} \,\, \sum_{u\in O_j} \eta_u(B-d_u-2^j+1) \quad \ge \quad \sum_{v\in \sigma} \rr_v \quad \ge \quad \Opt/2.$$
We used the fact that for any vertex $w\in \sigma$ with $v_L\prec w$, $\rr_w=0$ since the total size observed before $w$ is at least $i_{v_L}>B$.

To see the second property consider any $j\in [L]$. Let $w\in \sigma$ be the vertex immediately preceding $v_j$, and let $w'$ be the immediate predecessor of $w$. By definition of portal $v_j$, we have $i_{w}<2^{j+1}$; i.e. $\sum_{u\preceq w'} X^j_u \le i_w < 2^{j+1}$. Using the ``prefix size'' property in Lemma~\ref{lem:corr-NAfromA} with $w'$ and $j$, we obtain that $\sum_{u\prec w}\mu^j_u\le K\cdot 2^j$. So $\sum_{\ell=0}^j \,\,\sum_{u\in O_\ell} \mu^j_u = \sum_{u\prec w}\mu^j_u + \mu^j_w \le (K+1)\cdot 2^j$.
\end{proof}

\section{New Approximation Algorithm for Correlated Orienteering} \label{sec:corr-alg}
In this section, we present an improved quasi-polynomial time approximation algorithm for correlated stochastic orienteering, and prove Theorem~\ref{thm:corr-NA}.

An important subroutine in our algorithm is the {\em deadline orienteering} problem~\cite{BBCM04}. The input to deadline-orienteering is a metric $(U,d)$ denoting travel times, rewards $\{r_v\}_{v\in U}$ and deadlines $\{\Delta_v\}_{v\in U}$ at all vertices, start ($s$) and end ($t$) vertices, and length bound $D$. The objective is to compute an $s-t$ path of length at most $D$ that maximizes the reward from vertices visited before their deadlines. The best approximation ratio for this problem is $\alpha=\min\{O(\log n), O(\log B)\}$  due to Bansal et al.~\cite{BBCM04}; see also Chekuri et al.~\cite{CKP08}. (Strictly speaking, this definition is slightly more general than the usual one where there is no end-vertex or length bound; but all known approximation algorithms also work for the version we use here.) We actually need an algorithm for a generalization of this problem, in the presence of an additional knapsack constraint. The input to {\em knapsack deadline orienteering} (\kdo) is the same as deadline-orienteering, along with a knapsack constraint given by sizes $\{a_v:v\in U\}$ and capacity $A$. The objective is an  $s-t$ path of length at most $D$, having total knapsack size at most $A$, that maximizes the reward from vertices visited before their deadlines. We will use the following known result for \kdo.
\begin{theorem}[\cite{GKNR12}] \label{thm:kdo} There is an $O(\alpha)$-approximation algorithm for knapsack deadline orienteering, where $\alpha$ denotes the best approximation ratio for the deadline orienteering problem.
\end{theorem}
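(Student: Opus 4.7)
The plan is to reduce \kdo to the (knapsack-free) deadline-orienteering problem via a Lagrangian-relaxation argument, losing only a constant factor on top of $\alpha$. The intuition is that the knapsack constraint can be folded into the objective as a linear penalty on the size $a_v$, provided the Lagrange multiplier is chosen correctly; the residual problem is then handled directly by the given $\alpha$-approximation oracle for deadline orienteering.

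Concretely, for each multiplier $\lambda\ge 0$ I would define modified rewards $\tilde r^\lambda_v := \max\{0,\, r_v - \lambda\, a_v\}$ and invoke the $\alpha$-approximate deadline-orienteering algorithm on the instance with rewards $\tilde r^\lambda$, the same metric, deadlines, endpoints $s,t$ and length bound $D$, but with the knapsack constraint dropped. Standard Lagrangian duality ensures that for an appropriate $\lambda^*$ the returned path $P^{\lambda^*}$ satisfies $\sum_{v\in P^{\lambda^*}} r_v \ge \Opt/\alpha + \lambda^*(a(P^{\lambda^*}) - A)$, where $a(P)$ and $r(P)$ denote the total knapsack size and the unmodified reward of path $P$. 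I would then binary-search $\lambda$ over the $O(n^2)$ breakpoints $\{r_v/a_v\}$ to find two adjacent trials whose returned paths $P_1, P_2$ bracket the knapsack, $a(P_1) \le A < a(P_2)$; a convex-combination argument then shows $\max\{r(P_1),\, r(P_2)/2\} \ge \Omega(\Opt/\alpha)$.

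If $P_1$ wins, return it. Otherwise we must round $P_2$: walk along $P_2$ and stop at the first prefix whose cumulative knapsack size exceeds $A$. This splits $P_2$ into two sub-paths whose rewards sum to $r(P_2)$, so one of them collects at least $r(P_2)/2$. The main obstacle is that the chosen sub-path may fail to start at $s$ or end at $t$, and inserting a direct shortcut to the missing terminal could overshoot the length bound $D$. I would handle this by enumerating (in $O(n)$ time) a ``pivot'' vertex $v^*$ conjectured to lie on the optimum path, and running the Lagrangian procedure separately on the sub-instances $s\to v^*$ and $v^*\to t$, each with an appropriately partitioned length budget and knapsack capacity (which are themselves guessed up to a constant factor by enumeration). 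The various factor-$2$ splittings contribute only constant losses that are absorbed into the final $O(\alpha)$ ratio. An alternative route, matching the style of the configuration LP used later in the paper, would be to write a packing LP over deadline-feasible $s$-$t$ paths, use the $\alpha$-approximate deadline-orienteering algorithm as the dual separation oracle in a Plotkin--Shmoys--Tardos solver, and round via a two-path support argument; this would give the same $O(\alpha)$ guarantee.
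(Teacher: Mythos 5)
Your Lagrangian-relaxation reduction is essentially the argument used in~\cite{GKNR12} for this lemma (the present paper only cites it, without proof). Two corrections to your sketch.

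First, the pivot-vertex enumeration is unnecessary, and the worry motivating it is unfounded: for deadline orienteering, any \emph{contiguous} piece of a feasible $s$-$t$ path $P$ of length at most $D$ can be glued back into a feasible $s$-$t$ path by shortcutting directly. If the piece runs from $u$ to $w$ along $P$, then $d(s,u) + (\text{length of piece}) + d(w,t)$ is, by the triangle inequality, at most the length of $P$, hence at most $D$; moreover $d(s,u)$ is at most the distance from $s$ to $u$ along $P$, so every vertex on the piece is reached no later than it was on $P$ and all deadlines remain met. So there is no overshoot, and no need to guess a midpoint or repartition the budget.

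Second, the claim $\max\{r(P_1),\,r(P_2)/2\}\ge\Omega(\Opt/\alpha)$ together with the two-piece split at the first budget crossing breaks when $a(P_2)\gg A$: the suffix after the crossing then still violates the knapsack, so ``one of the two collects at least $r(P_2)/2$'' is not a feasible solution. What the convex combination actually gives is $\theta\,r(P_1)+(1-\theta)\,r(P_2)\ge\Opt/\alpha$ with $\theta\,a(P_1)+(1-\theta)\,a(P_2)=A$, hence $(1-\theta)\le A/a(P_2)$ and $\max\{r(P_1),(1-\theta)r(P_2)\}\ge\Opt/(2\alpha)$. The correct rounding of $P_2$ therefore splits it into $\lceil a(P_2)/A\rceil$ contiguous pieces each of knapsack size at most $A$ (after discarding any single vertex with $a_v>A$, which no feasible solution can use); by the shortcutting above each piece is a feasible $s$-$t$ path, and the best one has reward at least $A\,r(P_2)/(2a(P_2))\ge (1-\theta)r(P_2)/2$, which is $\Omega(\Opt/\alpha)$ in the case where $r(P_1)$ is small. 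With these two fixes your argument goes through and matches the cited proof.
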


Previously, a polynomial time $O(\alpha\cdot \log B)$-approximation algorithm and $\Omega(\alpha)$-hardness of approximation were known for correlated stochastic orienteering~\cite{GKNR12}, where $\alpha$ is the best approximation ratio for deadline orienteering. Our result improves the approximation ratio to $O(\alpha\cdot \log^2\log B)$, at the expense of quasi-polynomial running time.

{\bf Outline:} The algorithm involves three main steps. First, it guesses (by enumeration) $\log B$ many ``portal vertices'' corresponding to the near-optimal structure in Lemma~\ref{lem:corr-portals}, as well as the distances traveled between consecutive portal vertices. (This is the only step that requires quasi-polynomial time.) Next, based on this information, the algorithm solves a configuration LP-relaxation for paths between the portal vertices. This step also makes use of some results/ideas from the previous algorithm~\cite{GKNR12}. Finally, the algorithm uses randomized rounding with alterations to compute the non-adaptive policy from the LP solution.

\paragraph{Portals.}
The simple enumeration algorithm guesses the $L+1$ portal vertices $\{v_j\}_{j=0}^L$ as in Lemma~\ref{lem:corr-portals}; recall that $L=\lceil \log_2B\rceil$. It also guesses the lengths $\{D_j\}_{j=0}^L$ of the segments $O_j$s in Lemma~\ref{lem:corr-portals}. This requires running time $(nB)^{L+1}$.

We can reduce the enumeration of lengths using a somewhat stronger property than in Lemma~\ref{lem:corr-portals}.

\begin{lemma}\label{lem:corr-enum}
Given any instance of correlated stochastic orienteering, there exist portal vertices $\{v_j\}_{j=0}^L$, auxiliary vertices $\{m_j\}_{j=0}^L$, integers $\{e_j \in [L]\}_{j=0}^L$, and for each $j\in [L]$ a $v_{j-1}$-$v_j$ path $P_j$, such that:
\begin{itemize}
\item Path length: for any $j\in[L]$, path $P_j$ has  length at most $D_j := d(v_{j-1},m_j)+d(m_j,v_j)+2^{e_j}-1$.
\item Reward: $\sum_{j\in[L]} \,\, \sum_{u\in P_j} \eta_u(B-\sum_{i=0}^{j-1} D_i - t_u - 2^j+1) \ge \Opt/4$. Here, $t_u$ is the distance to vertex $u\in P_j$ from $v_{j-1}$ along path $P_j$.
\item Prefix mean size: $\sum_{\ell=0}^j \,\,\sum_{u\in P_\ell} \mu^j_u \le (K+1)\cdot 2^j$, for all $j\in[L]$. Here $K=\Theta(\log\log B)$.
\end{itemize}
\end{lemma}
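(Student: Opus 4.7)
The plan is to build on the structure already produced by Lemma~\ref{lem:corr-portals}, which gives a path $\sigma$ originating at $\rho=v_{-1}$, portal vertices $v_0\prec v_1\prec\cdots\prec v_L$ on $\sigma$, and segments $O_j$ (the part of $\sigma$ from $v_{j-1}$ to $v_j$) of integer length $\ell_j$, satisfying both (i)~the inherited prefix mean size bound $\sum_{\ell\le j}\sum_{u\in O_\ell}\mu^j_u\le (K+1)2^j$ and (ii)~the reward bound $\sum_j\sum_{u\in O_j}\eta_u(B-d_u-2^j+1)\ge \Opt/2$. I will take $P_j:=O_j$ for every $j$, so the prefix mean size property of Lemma~\ref{lem:corr-enum} is inherited verbatim and only the remaining two properties need to be verified. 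Note that $\sum_j\ell_j=d_{v_L}\le B$.

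For the path-length property the natural candidate is to set $m_j:=v_{j-1}$ (hence $d(v_{j-1},m_j)+d(m_j,v_j)=d(v_{j-1},v_j)$) and $e_j$ equal to the least integer with $2^{e_j}\ge \ell_j-d(v_{j-1},v_j)+1$, which automatically satisfies $e_j\in[L]$ since $\ell_j\le B$. This makes $D_j=d(v_{j-1},v_j)+2^{e_j}-1\ge\ell_j$, so that $P_j=O_j$ is indeed a $v_{j-1}$-$v_j$ path of length at most $D_j$. The point of the specific form $D_j=d(v_{j-1},m_j)+d(m_j,v_j)+2^{e_j}-1$ is that the enumeration in the algorithm now ranges over only $n\cdot(L+1)$ values of $(m_j,e_j)$ per segment rather than $\Theta(B)$ possible integers $D_j$, which is exactly what gives the claimed $(n\log B)^{O(\log B)}$ running time.

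The main obstacle is the reward property. With $P_j=O_j$ and $t_u$ the distance along $O_j$, we have $\sum_{i<j}D_i+t_u = d_u+\sum_{i<j}(D_i-\ell_i)$, so the argument of $\eta_u$ in Lemma~\ref{lem:corr-enum} is smaller than in Lemma~\ref{lem:corr-portals} by exactly $\sum_{i<j}(D_i-\ell_i)$, and since $\eta_u$ is nondecreasing, reward is lost. The per-segment overshoot satisfies $D_j-\ell_j\le 2^{e_j}-1 < 2(\ell_j-d(v_{j-1},v_j))$, which is $O(\ell_j)$ in the worst case; naively the cumulative overshoot can be as large as $\sum_i\ell_i\le B$, which is too much. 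To close the accounting I expect to sharpen the choice of $m_j$: pick $m_j$ as the vertex of $O_j$ that maximizes $d(v_{j-1},u)+d(u,v_j)$, which forces $2^{e_j}$ to capture only the ``excess'' $\ell_j-\max_{u\in O_j}[d(v_{j-1},u)+d(u,v_j)]$ rather than the full detour. Combined with a dyadic charging argument that pairs the overshoot at level $j$ against the $2^j$ slack already sitting in the formula (and, if necessary, an averaging step that restricts attention to a subset of indices $j$ whose inherited reward is $\Omega(\Opt)$), this should give an aggregate loss of at most one extra factor of two, yielding the stated bound $\Opt/4$ in place of the $\Opt/2$ available from Lemma~\ref{lem:corr-portals}.
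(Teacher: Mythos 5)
Your setup is right as far as it goes, and you have correctly diagnosed where the difficulty lies: with $P_j=O_j$ and $D_j\ge\ell_j$ the cumulative overshoot $\sum_{i<j}(D_i-\ell_i)$ erodes the argument of $\eta_u$ and you cannot control it. But the fixes you then sketch (maximizing $d(v_{j-1},u)+d(u,v_j)$ over $u$, a dyadic charging, an averaging over indices) do not close the gap, and more importantly they point in the wrong direction. The right move is not to control the overshoot but to \emph{eliminate} it: one must ensure $D_j\le\ell_j$ for every $j$, so that the arguments of $\eta_u$ only \emph{increase} relative to Lemma~\ref{lem:corr-portals} and the reward accounting is monotone in your favor.

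The paper achieves this by abandoning $P_j=O_j$. Instead, choose $m_j$ as a \emph{profit-balancing} midpoint of $O_j$: a vertex such that the portion of $O_j$ strictly after $v_{j-1}$ and up to $m_j$, and the portion from $m_j$ to $v_j$, each carry at least half of the total profit $q(O_j)=\sum_{u}q_u$ with $q_u=\eta_u(B-d_u-2^j+1)$. Then form two candidate subpaths: $O'_j$, which follows $O_j$ from $v_{j-1}$ to $m_j$ and then jumps directly to $v_j$, and $O''_j$, which jumps directly from $v_{j-1}$ to $m_j$ and then follows $O_j$. Each retains at least $q(O_j)/2$ profit, and writing $\epsilon=\ell_j-d(v_{j-1},m_j)-d(m_j,v_j)$, the shorter of the two has length at most $d(v_{j-1},m_j)+d(m_j,v_j)+\lfloor\epsilon/2\rfloor$. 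Setting $e_j$ so that $2^{e_j}-1\le\epsilon<2^{e_j+1}-1$ gives both $D_j=d(v_{j-1},m_j)+d(m_j,v_j)+2^{e_j}-1\ge$ length of $P_j$, and the crucial reverse inequality $D_j\le d(v_{j-1},m_j)+d(m_j,v_j)+\epsilon=\ell_j$. Because $D_i\le\ell_i$ for all $i$, the argument $B-\sum_{i<j}D_i-t_u-2^j+1$ is at least $B-d_u-2^j+1$ for every retained $u$, so no reward is lost to time-shifts; the only loss is the factor of two from keeping one half of each $O_j$, which is exactly how $\Opt/2$ becomes $\Opt/4$. The prefix-mean-size bound is inherited because $P_j\subseteq O_j$.

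So the missing idea is this trade: pay a constant factor in profit by shortcutting half of each segment, in exchange for a $D_j$ that is simultaneously of the compact form $d(v_{j-1},m_j)+d(m_j,v_j)+2^{e_j}-1$ \emph{and} no larger than the original segment length. Your version gives up the second property, and without it the reward bound genuinely fails (the cumulative overshoot can be $\Theta(B)$, as you noted).
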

\begin{proof}
Consider the path $\sigma$ and portal vertices $\{v_j\}_{j=0}^L$ satisfying the properties in Lemma~\ref{lem:corr-portals}; recall that $v_{-1}=\rho$. For each $j\in[L]$, let $O_j$ denote the subpath of $\sigma$ from vertex $v_{j-1}$ to $v_j$. Recall that for any $u\in \sigma$, $d_u=$ distance from $\rho$ to $u$ along path $\sigma$.

For each $j\in[L]$, let $D'_j$ denote the length of subpath $O_j$. Also define for each $j\in[L]$ and vertex $u\in O_j$, $t_u=$ distance to vertex $u$ from $v_{j-1}$ along path $O_j$; note that $d_u=\sum_{i=0}^{j-1} D'_i + t_u$.

We now modify each subpath $O_j$ to obtain a new $v_{j-1}$-$v_j$ subpath $P_j$ as follows. For each vertex $u\in O_j$ define its profit $q_u:=\eta_u(B-d_u-2^j+1)$, and let $q(O_j)$ denote the total profit of vertices in $O_j\setminus \{v_{j-1}\}$. Let $m_j$ denote any vertex on $O_j$ such that:
\begin{equation} \label{eq:portal-mid} \sum_{v_{j-1}\prec u \preceq m_j} \, q_u \,\,\ge \,\,\frac{q(O_j)}{2}\quad \mbox{and} \quad  \sum_{m_{j}\preceq u \preceq v_j} \, q_u \,\, \ge \,\,\frac{q(O_j)}{2} .
\end{equation}
Here $\prec$ denotes the order in which vertices appear on $O_j$. So $m_j$ can be viewed as a ``mid point'' of $O_j$, containing half the total profit on either side of it.

Let $\ell_1=$ length of $O_j$ from $v_{j-1}$ to $m_j$, and $b_1=d(v_{j-1},m_j)$. Similarly, let $\ell_2=$ length of $O_j$ from $m_{j}$ to $v_j$, and $b_2=d(m_{j},v_j)$. Also, let $\epsilon_1=\ell_1-b_1$, $\epsilon_2=\ell_2-b_2$ and $\epsilon=\epsilon_1+\epsilon_2$; clearly $\epsilon_1,\epsilon_2\ge 0$. Note that the length of $O_j$ is $D'_j=b_1+b_2+\epsilon$. Let $O'_j$ denote the subpath  $v_{j-1} \rightsquigarrow m_j \rightarrow v_j$ of $O_j$ that shortcuts over vertices between $m_j$ and $v_j$. Similarly, $O''_j$ denotes the subpath  $v_{j-1} \rightarrow m_j \rightsquigarrow v_j$ of $O_j$ that shortcuts over vertices between $v_{j-1}$ and $m_j$. By~\eqref{eq:portal-mid} the total profit on each of $O'_j$ and $O''_j$ is at least $q(O_j)/2$.
The length of subpath $O'_j$ (resp. $O''_j$) is $\ell_1+b_2$ (resp. $b_1+\ell_2$). We have:
$$\min\{\ell_1+b_2, b_1+\ell_2\} \,\, = \,\,  \min\{b_1+b_2+\epsilon_1, b_1+b_2 + \epsilon_2\}   \,\, \le  \,\, b_1+b_2 +\frac{\epsilon_1+\epsilon_2}{2}  \,\, =  \,\, b_1+b_2 +\frac{\epsilon}2.$$

Let $e_j$ denote the unique integer such that $2^{e_j}-1\le \epsilon< 2^{e_j+1}-1$; note that $e_j\in [L]$ since $\epsilon$ is an integer between $0$ and $B$. Based on the above inequality, the shorter of $O'_j$ and $O''_j$ has length at most $b_1+b_2+2^{e_j}-1$. We set subpath $P_j$ to be the shorter one among $O'_j$ and $O''_j$. This choice ensures that:
\begin{equation} \label{eq:enum-subpath}
\mbox{ length of } P_j \,\le\, d(v_{j-1},m_j)+d(m_j,v_j)+2^{e_j}-1=:D_j,\,\,\mbox{and}\,\, \sum_{u\in P_j} \eta_u(B-d_u-2^j+1)\,\ge \, \frac{q(O_j)}{2}.
\end{equation}
Note also that $D'_j=b_1+b_2+\epsilon\ge b_1+b_2+2^{e_j}-1=D_j$.

We set the portal vertices $\{v_j\}_{j=0}^L$, auxiliary vertices $\{m_j\}_{j=0}^L$, integers $\{e_j \in [L]\}_{j=0}^L$, and path $\{P_j\}_{j=0}^L$ as defined above.  The path length property follows from the first condition in~\eqref{eq:enum-subpath} and the definition of $D_j$. The prefix size property is immediate from that in Lemma~\ref{lem:corr-portals} since each $P_j\sse O_j$.
We now show the reward property:
\begin{eqnarray}
\sum_{j=0}^L \,\, \sum_{u\in P_j} \eta_u(B-\sum_{i=0}^{j-1} D_i - t_u - 2^j+1) &\ge & \sum_{j=0}^L \,\, \sum_{u\in P_j} \eta_u(B-\sum_{i=0}^{j-1} D'_i - t_u - 2^j+1) \label{eq:enum-profit1}\\
&=& \sum_{j=0}^L \,\, \sum_{u\in P_j} \eta_u(B- d_u - 2^j+1) \label{eq:enum-profit2}\\
&=& \sum_{j=0}^L \,\, \sum_{u\in P_j} q_u \quad \ge \quad \frac{1}{2} \sum_{j=0}^L \,\, \sum_{u\in O_j} q_u \label{eq:enum-profit3}\\
&= & \frac{1}{2} \sum_{j=0}^L \sum_{u\in O_j} \eta_u(B- d_u - 2^j+1)\quad \ge \quad \frac{\Opt}{4}. \label{eq:enum-profit4}
\end{eqnarray}
Inequality~\eqref{eq:enum-profit1} uses the fact that $D_i\le D'_i$ for all $i\in[L]$. The  equality~\eqref{eq:enum-profit2} uses $d_u=\sum_{i=0}^{j-1} D'_i + t_u$ for any $u\in P_j\sse O_j$ and $j\in[L]$. Inequality~\eqref{eq:enum-profit3} is by the definition of $q_u$ and the second condition from~\eqref{eq:enum-subpath}. Inequality~\eqref{eq:enum-profit4} uses the reward property in Lemma~\ref{lem:corr-portals}.
\end{proof}
Based on Lemma~\ref{lem:corr-enum}, our enumeration algorithm guesses the $L+1$ portals $\{v_j\}_{j=0}^L$ and segment lengths $\{D_j\}_{j=0}^L$ in time $n^{2L+2}\cdot L^{L+1}=(n\log B)^{O(\log B)}$ (by 
\eqref{eq:enum-subpath} there are at most $nL$ choices for $D_j$, as it is determined by  $e_j  \in [L]$ and $m_j 
\in V$).

\paragraph{Knapsack Deadline Orienteering Instances.} Our goal is to find a $v_{j-1}-v_j$ path $P_j$ for each $j\in[L]$, that have properties similar to the segments in Lemma~\ref{lem:corr-enum}. To this end, we define a knapsack-deadline-orienteering instance $\I_j$ for each $j\in[L]$.

Instance $\I_j$ is defined on metric $(V,d)$ with start vertex $v_{j-1}$, end vertex $v_j$ and length bound $D_j$. The rewards, sizes and deadlines will be defined shortly. As in~\cite{GKNR12}, we will introduce a suitable set of {\em copies} of each vertex $u\in V$, with deadlines that correspond to visiting $u$ at different possible times.
\begin{equation}\label{eq:kdo-copies}
\mbox{ For each $v\in V$ and distance $d\in [B]$, define $f_{v}^j(d):=\eta_v(B-\sum_{i=0}^{j-1}D_i-d-2^j+1)$, }
\end{equation} the expected reward from $v$ under instantiation at most $B-\sum_{i=0}^{j-1}D_i-d-2^j+1$. Intuitively, $f^j_v(d)$ is the expected reward obtained by visiting vertex $v$ in segment $j$ at distance $d$ along $P_j$ (so the total distance to $v$ is $\sum_{i=0}^{j-1}D_i+d$) and having observed total size $2^j-1$ until $v$.

In the \kdo instance $\I_j$, we would like to introduce a copy of vertex $v$ for each distance $d\in [B]$, in order to permit all possibilities for visiting $v$ in segment $j$.
However, solutions to such an instance may obtain a large reward just by visiting multiple copies of the same vertex. In order to control the total reward obtainable from multiple copies of a vertex, we introduce copies of each vertex corresponding only to a suitable subset of $[B]$. This relies on the following construction.
\begin{claim}[\cite{GKNR12}] \label{cl:gknr-count}
For each $j\in[L]$ and $v\in V$, we can compute in polynomial time, a subset $I^j_v\sse [B]$ s.t.
$$\frac{1}{3}\cdot \sum_{y\in I^j_v:y\ge d} f^j_v(y) \quad \le \quad f^j_v(d) \quad \le \quad 2\cdot \max_{y\in I^j_v:y\ge d} f^j_v(y),\qquad \forall d\in[B].$$
\end{claim}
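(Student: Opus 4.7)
The plan is a greedy, doubling-style selection. The starting observation is that for each fixed $v$ and $j$, the function $d\mapsto f^j_v(d)$ is non-increasing in $d$: raising $d$ only lowers the truncation threshold $B-\sum_{i=0}^{j-1}D_i - d - 2^j+1$ inside $\eta_v(\cdot)$, and $\eta_v$ is a cumulative sum of non-negative terms. Moreover we can precompute all values $f^j_v(d)$ for $d\in[B]$ in $O(B)$ arithmetic steps via a single cumulative-sum pass over $\pi_v\cdot r_v$, which will make the whole construction polynomial time.

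I will build $I^j_v$ greedily from right to left. Let $d_1:=\max\{d\in[B]: f^j_v(d)>0\}$ (taking $I^j_v = \emptyset$ if this set is empty). Inductively, having defined $d_l$, let $d_{l+1} := \max\{d< d_l: f^j_v(d) \ge 2\,f^j_v(d_l)\}$, stopping at index $k$ when this set becomes empty. Set $I^j_v=\{d_1>d_2>\cdots>d_k\}\sse[B]$. By design, the rewards satisfy $f^j_v(d_{l+1})\ge 2\,f^j_v(d_l)$, so along the list $d_k,d_{k-1},\ldots,d_1$ the $f$-values decrease geometrically by a factor of at least $2$.

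For the verification, fix $d\in[B]$. If $d>d_1$ then $f^j_v(d)=0$ and both inequalities hold vacuously; otherwise let $y^\star$ be the smallest element of $I^j_v$ with $y^\star\ge d$, say $y^\star = d_i$. The upper bound on $\sum_{y\in I^j_v,\, y\ge d} f^j_v(y)$ then falls out immediately: the sum equals $\sum_{l=1}^i f^j_v(d_l)\le f^j_v(d_i)\sum_{l=0}^{i-1} 2^{-l}\le 2\,f^j_v(d_i)\le 2\,f^j_v(d)\le 3\,f^j_v(d)$, by the geometric-decay property and monotonicity of $f^j_v$ together with $y^\star\ge d$. The only step that needs a small argument is the lower bound $f^j_v(d)\le 2\,f^j_v(y^\star)$ when $d<d_i$: here one invokes the maximality built into the definition of $d_i$. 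If $i\ge 2$ and $f^j_v(d)\ge 2\,f^j_v(d_i)$, then $d\in(d_{i+1},d_i)$ (with the convention that $d_{k+1}$ does not exist) would be a legal candidate strictly larger than $d_{i+1}$ when producing $d_{i+1}$, contradicting maximality; if $i=k$ the absence of $d_{k+1}$ gives the same contradiction; and if $i=1$ the same argument applied to the definition of $d_2$ yields $f^j_v(d)<2\,f^j_v(d_1)$.

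That case analysis is the only even mildly subtle step in the whole proof; everything else is a geometric sum. For the polynomial-time claim, for each of the $|V|\cdot(L+1)$ pairs $(v,j)$ the construction is one linear scan over $[B]$ following the $O(B)$ preprocessing of $\eta_v$, giving total running time $O(|V|\cdot L\cdot B)$.
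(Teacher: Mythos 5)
The paper does not reproduce a proof of this claim; it cites~\cite{GKNR12} for it. Your construction is the standard geometric-doubling argument, and it is correct: the monotonicity of $f^j_v$ (since $\eta_v$ is a cumulative sum of non-negative terms and $d$ enters with a minus sign), the right-to-left greedy selection of points where $f^j_v$ at least doubles, the geometric-sum bound $\sum_{l\le i} f^j_v(d_l)<2f^j_v(d_i)\le 2f^j_v(d)$ for the left inequality (which actually gives constant $\frac12$ rather than the claimed $\frac13$), and the maximality-contradiction for $f^j_v(d)<2f^j_v(d_i)$ all check out, as does the $O(nLB)$ running time. Your case split on $i\ge 2$, $i=k$, $i=1$ in the last step is redundant — the same one-line maximality argument covers all of them, with the convention that $d_{k+1}$ is ``$-\infty$'' — but it introduces no error.
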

%\vspace{-3mm}
Roughly speaking, the copies $I^j_v$ of vertex $v$ are those times $t$ at which its expected reward $f^j_v(t)$ doubles.

We now formally define the entire instance $\I_j$.
\begin{definition}[\kdo instance $\I_j$]
The metric is $(V,d)$, start-vertex $v_{j-1}$, end-vertex $v_j$, length bound $D_j$ and knapsack capacity $(K+1)\cdot 2^j$. For each $v\in V$ and $t\in I^j_v$, a job $\langle v, j, t\rangle$ with reward $f^j_v(t)$, deadline $t$ and knapsack-size $\mu^j_v$ is located at vertex $v$.
\end{definition}

To reduce notation, when it is clear from context, we will use $u,v$ etc. to refer to jobs as well. A feasible solution $\tau$ to $\I_j$ is a $v_{j-1}-v_j$ path of length $d(\tau)\le D_j$ and total size $\sum_{u\in \tau} \mu^j_u \le (K+1)\cdot 2^j$; the objective value is the total reward of jobs visited by $\tau$ before their deadlines. We will also use $\I_j$ to denote the set of all feasible solutions to this \kdo instance.

{\bf Remark:} A solution $\tau$ to this instance $\I_j$ corresponds to the $j^{th}$ segment in the correlated orienteering solution (between portal $v_{j-1}$ and $v_j$). Note that $\I_j$ imposes no upper bound on
$\tau$'s mean size capped at $2^h$ (i.e. $\mu^h(\tau)$) for any $h\ne j$, which however will be required in the the expected reward analysis (this corresponds to the ``prefix mean size'' property
in Lemma~\ref{lem:corr-enum}). Although there is no explicit bound on $\mu^h(\tau)$ for $h\ne j$, we can infer this using the capped mean properties, as follows.
\begin{itemize}
\item For any $h<j$, we do not need a bound on $\mu^h(\tau)$ since $\tau$ does not affect segment $h$.
\item For any  $h>j$, we can use~\eqref{eq:capped-mu} to obtain
$$\frac{\sum_{v\in \tau} \mu^{h}_v}{2^{h}} \quad \le \quad \frac{\sum_{v\in \tau} \mu^{j}_v}{2^{j}}\quad \le \quad K+1.$$
So the knapsack constraint in $\I_j$ corresponding to $\mu^j(\tau)$ also implicitly bounds $\mu^h(\tau)$.
\end{itemize}
The fact that we introduce only a {\em single} knapsack constraint in $\I_j$ turns out to be important since the best algorithm (that we are aware of) for deadline-orienteering with multiple knapsack constraints has an approximation factor that grows linearly with the number of knapsacks.

\paragraph{Configuration LP relaxation.} We now give an LP
relaxation to the near-optimal structure in Lemma~\ref{lem:corr-enum}. We make use of the guessed portals $\{v_j\}_{j=0}^L$ and lengths $\{D_j\}_{j=0}^L$. The segments $P_j$s in Lemma~\ref{lem:corr-enum} will correspond to solutions of \kdo instances $\I_j$s. The LP relaxation is given below:

\begin{alignat}{2}
  \mbox{max } \,\, \sum_{j=0}^L \, \sum_{v \in V}  \, \sum_{t \in I^j_v} &  \, f^j_v(t)\cdot y^j_{v,t}  & & \label{LP:0} \tag{\lp}  \\
  \mbox{s.t. } \,\, \sum_{\tau \in \I_j} x^j_{\tau} &  \, \le  \, 1 & \qquad & \forall \, j \in [L]   \label{LP:1} \\
  y^j_{v,t} & \, \leq  \,  \, \sum_{\tau\ni \langle v,j,t\rangle} x^j_\tau & \qquad & \forall \, j\in [L],\, v \in
  V, \, t\in I^j_v \label{LP:2} \\
  \sum_{j=0}^L  \, \sum_{t\in I^j_v} y^j_{v,t} & \, \leq  \, 1 & \qquad & \forall \, v \in  V \label{LP:3} \\
  \sum_{v \in V} \mu^j_v\cdot \sum_{\ell=0}^j  \, \sum_{t \in I^\ell_v}  y^\ell_{v,t} & \, \leq  \, (K+1)\cdot 2^j & \qquad & \forall \, j \in [L] \label{LP:4} \\
  \mathbf{x},\, \mathbf{y} &\geq 0. & & \label{LP:5}
\end{alignat}
Above, $\{x^j_\tau: \tau\in \I_j\}$ is intended to be the indicator variable that chooses a solution path for the \kdo instance $\I_j$. Variables $y^j_{v,t}$ indicate whether/not job $\langle v,j,t\rangle$ is selected in $\I_j$. These are enforced by constraints~\eqref{LP:1} and~\eqref{LP:2}. Constraint~\eqref{LP:3} requires at most one copy of each vertex to be chosen, over all segments. And constraint~\eqref{LP:4} bounds the ``prefix mean size'' as in Lemma~\ref{lem:corr-enum}. The objective corresponds to the reward in  Lemma~\ref{lem:corr-enum}.

First, observe that we can ensure equality in~\eqref{LP:2}.
\begin{claim}\label{cl:xy-equality}
Any \lp solution $(x,y)$ can be modified to another solution that satisfies~\eqref{LP:2} with equality and has the same objective value.
\end{claim}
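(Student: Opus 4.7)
The plan is to modify only the $x$ variables, leaving $y$ untouched, so that the objective (which is $\sum_{j,v,t} f^j_v(t)\,y^j_{v,t}$, depending only on $y$) stays the same. To make \eqref{LP:2} tight at a triple $(j,v,t)$ with current slack $\delta := \sum_{\tau\ni\langle v,j,t\rangle} x^j_\tau - y^j_{v,t} > 0$, I would push mass off of the paths containing the job $\langle v,j,t\rangle$ and onto their ``shortcuts.'' Concretely, for each $\tau\in\I_j$ containing $\langle v,j,t\rangle$ let $\tau' := \tau\setminus\{\langle v,j,t\rangle\}$ be the path obtained by deleting that vertex-copy from $\tau$ and bypassing $v$ via the direct metric edge. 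Choose nonnegative $\delta_\tau \le x^j_\tau$ summing to $\delta$ and update $x^j_\tau \leftarrow x^j_\tau - \delta_\tau$ and $x^j_{\tau'} \leftarrow x^j_{\tau'} + \delta_\tau$.

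The main thing to verify is that $\tau'$ is still a feasible solution of the \kdo instance $\I_j$. This is where the structure of \kdo helps: by the triangle inequality the total length only decreases, so $d(\tau')\le d(\tau)\le D_j$; the knapsack size drops by $\mu^j_v$ and so remains at most $(K+1)\cdot 2^j$; and every surviving job $\langle w,j,s\rangle$ is reached no later along $\tau'$ than along $\tau$, hence still meets its deadline. Thus $\tau'\in \I_j$ and the redistribution is legitimate.

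Next I would check the remaining constraints are preserved. Total $x$-mass in segment $j$ is conserved, so \eqref{LP:1} still holds. Variables $y$ do not change, so \eqref{LP:3} and \eqref{LP:4} are unaffected and the objective is unchanged. Nonnegativity holds because $\delta_\tau\le x^j_\tau$. The delicate case is \eqref{LP:2} at other triples $(j,v',t')$ in the same segment: since every such job is contained in $\tau'$ iff it is contained in $\tau$ (we removed only $\langle v,j,t\rangle$), the quantity $\sum_{\tau''\ni\langle v',j,t'\rangle} x^j_{\tau''}$ is preserved, so those inequalities continue to hold. In particular, triples already made tight stay tight.

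The main obstacle, such as it is, is exactly the last point — ensuring that fixing one triple does not break another. Because the shortcut leaves the set of other $\I_j$-jobs of $\tau$ unchanged, the operations commute in the sense that we may iterate over all triples $(j,v,t)$ (in any order) and after finitely many steps obtain a feasible solution satisfying \eqref{LP:2} with equality everywhere, with the same objective value as the original $(x,y)$.
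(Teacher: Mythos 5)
Your proposal is correct and takes essentially the same approach as the paper: shift $x$-mass from paths $\tau \ni \langle v,j,t\rangle$ to the shortcuts $\tau\setminus\{\langle v,j,t\rangle\}$, relying on the down-monotonicity of each $\I_j$ (which you verify explicitly via length, knapsack size, and deadlines). The paper's proof merely fixes a particular cumulative, ordering-based rule for choosing the shift amounts $\delta_\tau$, but the underlying argument is the same.
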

\begin{proof} This follows immediately from the fact that each solution-set $\I_j$ is ``down monotone'', i.e. $I\in\I_j$ and $I'\sse I$ implies $I'\in \I_j$. Formally, consider the constraints~\eqref{LP:2} in any order. For each $\langle v,j,t\rangle$, order arbitrarily the $x^j_\tau$ variables with $\tau \ni \langle v,j,t\rangle$ as $x^j_{\tau(1)}, x^j_{\tau(2)},\cdots,x^j_{\tau(a)}$. Let $b$ denote the unique index with $\sum_{k=1}^{b-1} x^j_{\tau(k)} < y^j_{v,t} \le \sum_{k=1}^{b} x^j_{\tau(k)}$. Now perform the following modification:
\begin{itemize}
\item For each index $k>b$, drop $\langle v,j,t\rangle$ from the solution $\tau(k)$.
\item For index $b$, consider the solutions $\sigma=\tau(b)$ and $\sigma'=\tau(b)\setminus \langle v,j,t\rangle$. Set $x^j_{\sigma'}= \sum_{k=1}^{b} x^j_{\tau(k)} - y^j_{v,t}$ and $x^j_{\sigma} = y^j_{v,t} -  \sum_{k=1}^{b-1} x^j_{\tau(k)}$.
\end{itemize}
After this change, it is clear that we have equality for
$\langle v,j,t\rangle$ in constraint~\eqref{LP:2}. Also, constraint~\eqref{LP:1} remains feasible. Finally, since the $y$ variables remain unchanged, constraints~\eqref{LP:3}-\eqref{LP:4} remain feasible and the objective value stays the same.
\end{proof}

\begin{claim}\label{cl:LP-obj}
The optimal value of \lp is at least $\Opt/8$.
\end{claim}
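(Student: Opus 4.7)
The plan is to construct an explicit feasible integral solution $(\mathbf{x},\mathbf{y})$ to \lp from the structure guaranteed by Lemma~\ref{lem:corr-enum} and show that its objective value is at least $\Opt/8$. Let $\{v_j\}_{j=0}^L$ be the portal vertices, $\{D_j\}_{j=0}^L$ the segment lengths, and $\{P_j\}_{j\in[L]}$ the $v_{j-1}$-$v_j$ paths from that lemma. For each $j\in[L]$, I would set $x^j_{P_j}=1$ and $x^j_\tau=0$ for $\tau\ne P_j$. To see that $P_j\in\I_j$, I note that its length is at most $D_j$ by the path length property, and the knapsack size bound $\sum_{u\in P_j}\mu^j_u\le (K+1)\cdot 2^j$ follows as a special case (taking only the $\ell=j$ term) of the prefix mean size property.

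Next, for each vertex $u\in P_j$, let $t_u$ denote its distance from $v_{j-1}$ along $P_j$, and define $t^*_u\in\arg\max_{y\in I^j_u:\,y\ge t_u} f^j_u(y)$ (if this set is empty, then by Claim~\ref{cl:gknr-count} we have $f^j_u(t_u)=0$ and we simply skip $u$). Set $y^j_{u,t^*_u}=1$ and all other $y$-entries to $0$. Since $t^*_u\ge t_u$, the path $P_j$ reaches $u$ by distance $t^*_u$, so it contains the job $\langle u,j,t^*_u\rangle$; hence \eqref{LP:2} holds. Constraint \eqref{LP:1} is trivial.

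For \eqref{LP:3}, the key observation is that the paths $P_j$ are disjoint in vertices: recall that the segments $\{O_j\}$ are induced subpaths of $\sigma$, which itself is a single root-to-leaf path of the decision tree $\T$ and therefore visits each vertex at most once; since $P_j\subseteq O_j$, each vertex $v\in V$ lies in at most one $P_j$, so $\sum_j\sum_t y^j_{v,t}\le 1$. For \eqref{LP:4}, since $y^\ell_{v,t}$ is supported on $(v,\ell,t^*_v)$ with $v\in P_\ell$, the LHS equals
\[\sum_{\ell=0}^j\sum_{u\in P_\ell}\mu^j_u\quad\le\quad (K+1)\cdot 2^j,\]
which is exactly the prefix mean size property of Lemma~\ref{lem:corr-enum}.

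Finally, the LP objective under this solution is $\sum_{j\in[L]}\sum_{u\in P_j}f^j_u(t^*_u)$. By the upper bound in Claim~\ref{cl:gknr-count}, $f^j_u(t^*_u)\ge \tfrac{1}{2}f^j_u(t_u)$ for every relevant $u$, so the objective is at least
\[\tfrac{1}{2}\sum_{j\in[L]}\sum_{u\in P_j}f^j_u(t_u)\,=\,\tfrac{1}{2}\sum_{j\in[L]}\sum_{u\in P_j}\eta_u\!\left(B-\sum_{i=0}^{j-1}D_i-t_u-2^j+1\right)\,\ge\, \tfrac{\Opt}{8},\]
using the reward property of Lemma~\ref{lem:corr-enum}. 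There is no serious obstacle here: the LP was tailored precisely to capture the structure in Lemma~\ref{lem:corr-enum}, with the discretization from Claim~\ref{cl:gknr-count} explaining the additional factor of $2$ loss between $\Opt/4$ and $\Opt/8$. The only minor subtlety is to match jobs $\langle u,j,t\rangle$ (indexed by the sparse set $I^j_u$) to the actual visit distance $t_u$, which is resolved by Claim~\ref{cl:gknr-count}.
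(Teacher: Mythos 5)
Your proposal is correct and follows essentially the same approach as the paper: you construct the identical integral solution $x^j_{P_j}=1$ from Lemma~\ref{lem:corr-enum}, select for each $u\in P_j$ the best available copy in $I^j_u$ (the paper uses the earliest copy $t'_u$, which coincides with your $t^*_u$ since $f^j_u$ is non-increasing), verify feasibility via the prefix mean size property, and apply the same factor-2 bound from Claim~\ref{cl:gknr-count}.
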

\begin{proof}
We will show that the subpaths $\{P_j\}_{j=0}^L$ in Lemma~\ref{lem:corr-enum} correspond to a feasible integral solution to \lp. 
Let $\sigma$ denote the concatenated path $P_0,P_1,\cdots,P_L$.
Based on our guess of the portals and distances, we have $d(P_j)\le D_j$. For any vertex $u\in P_j$ let $t_u$ denote the distance to $u$ along $P_j$ (this also appears in Lemma~\ref{lem:corr-enum}); note that the distance to $u$ along $\sigma$ is then $d_u=\sum_{i=0}^{j-1}D_i + t_u$. For each $u\in P_j$ let $t'_u:=\min\{t\in I^j_u : t\ge t_u\}$ be the deadline of the earliest copy of $u$ that path $P_j$ can visit in the \kdo instance $\I_j$. Consider $P_j$ as a solution to $\I_j$ which visits jobs $\{\langle u,j,t'_u\rangle : u\in P_j\}$; in order to verify feasibility, we only need to check that $\sum_{u\in P_j} \mu^j_u\le (K+1)\cdot 2^j$, which follows from the third property in Lemma~\ref{lem:corr-enum}.

Consider now the solution to \lp with $x^j_{P_j}=1$ for all $j\in [L]$; $y^j_{u,t'_u}=1$ for each $u\in P_j$, $j\in[L]$; and all other variables set to zero. Constraints~\eqref{LP:1},~\eqref{LP:2},~\eqref{LP:3} and ~\eqref{LP:5} are clearly satisfied. Observe that the left-hand-side of~\eqref{LP:4} is exactly $\sum_{\ell=0}^j \sum_{u\in P_\ell} \mu^j_u$ which is at most $(K+1)\cdot 2^j$ by the third property in Lemma~\ref{lem:corr-enum}. So $(\mathbf{x}, \mathbf{y})$ is a feasible (integral) solution to \lp. We now bound the objective value:
\begin{eqnarray*}
\sum_{j=0}^L \sum_{u\in P_j} f^j_u(t'_u) &\ge &\frac12 \cdot \sum_{j=0}^L \sum_{u\in P_j} f^j_u(t_u) \\
&= & \frac12 \cdot \sum_{j=0}^L \sum_{u\in P_j} \eta_u\big(B-\sum_{i=0}^{j-1}D_i -t_u-2^j+1\big) \\
&= & \frac12 \cdot \sum_{j=0}^L \sum_{u\in P_j} \eta_u\left(B-d_u-2^j+1\right)\quad \ge \quad \frac{\Opt}{8}.
\end{eqnarray*}
The first inequality is by definition of $t'_u$ and Claim~\ref{cl:gknr-count}; the next two equalities use the definitions $f^j_u(\cdot)$ and $d_u$; and the final inequality is by the first property in Lemma~\ref{lem:corr-enum}.
\end{proof}

\paragraph{Solving the configuration LP.} We show that \lp  can be solved approximately using an approximation algorithm for \kdo. This is based on applying the Ellipsoid algorithm to the dual of \lp, which is given below:
\begin{alignat}{2}
  \mbox{min } \,\, \sum_{j=0}^L \, \beta_j \,\,+\,\, \sum_{v \in V}  \,\delta_v \,\, +  & \,\, (K+1)\cdot \sum_{j=0}^L \, 2^j\cdot z_j & & \label{DLP:0} \tag{\dlp}  \\
  \mbox{s.t. } \,\, \gamma^j_{v,t} + \delta_v + \sum_{\ell\ge j} \mu^\ell_v\cdot z_\ell &  \, \ge  \, f^j_v(t) & \qquad & \forall \, j \in [L], \, v\in V,\, t\in I^j_v   \label{DLP:1} \\
  -\sum_{\langle v,j,t\rangle\in \tau} \gamma^j_{v,t} \,+\, \beta_j & \, \geq  \,  \, 0 & \qquad & \forall \, j\in [L],\, \tau \in
\I_j \label{DLP:2} \\
  \mathbf{\beta},\,\mathbf{\gamma},\, \mathbf{\delta},\, z &\geq 0. & & \label{DLP:3}
\end{alignat}
In order to solve \dlp using the Ellipsoid method, we need to provide a separation oracle that tests feasibility. Observe that constraints~\eqref{DLP:1} are polynomial in number, and can be checked explicitly. Constraint~\eqref{DLP:2} for any $j\in[L]$ is equivalent to asking whether the optimal value of \kdo instance $\I_j$ with rewards $\{\gamma^j_{v,t} : v\in V,\, t\in I^j_v\}$ is at most $\beta_j$. Using an $O(\alpha)$-approximate separation oracle (the knapsack deadline orienteering algorithm from Theorem~\ref{thm:kdo}) within the Ellipsoid algorithm, we obtain an $O(\alpha)$-approximation algorithm for \dlp and hence \lp.

{\bf Remark:} Alternatively, we can solve \lp using a faster combinatorial algorithm that is based on multiplicative weight updates. Note that we can eliminate $y$-variables in \lp by setting constraints~\eqref{LP:2} to equality (Claim~\ref{cl:xy-equality}). This results in a {\em packing LP}, consisting of  non-negative variables with each constraint of the form $\mathbf{a}^T x \le b$ where all entries in $\mathbf{a}$ and $b$ are non-negative. So we can solve \lp using faster approximation algorithms~\cite{PST91,GK07} for packing LPs, that also require only an approximate dual separation oracle (which is \kdo in our setting).

\paragraph{Rounding the LP solution.} Let $(x,y)$ denote the $O(\alpha)$-approximate solution to \lp. By Claim~\ref{cl:LP-obj} the objective value is $\Omega(\Opt/\alpha)$. The rounding algorithm below describes a (randomized) non-adaptive policy.
\begin{enumerate}
\item \label{step:corr1} For each $j\in[L]$, independently select solution $\tau_j\in \I_j$ as:
$$\tau_j \gets \left\{
\begin{array}{ll}
T&\mbox{ with probability } \frac{x^j_T}{2}, \mbox{ for each } T\in \I_j \\
\langle v_{j-1},v_j\rangle & \mbox{ with the remaining probability } 1-\frac12 \sum_{T\in \I_j} x^j_T
\end{array}\right.
$$

\item \label{step:corr2} If any vertex $v\in V$ appears in more than one solution $\{\tau_j\}_{j=0}^L$, then drop $v$ from all of them.

\item \label{step:corr3} If solution $\{\tau_j\}_{j=0}^L$ exceeds any constraint~\eqref{LP:4} by a factor more than $\frac{\log L}{\log\log L}$ then return an empty solution.

\item \label{step:corr4} For each $j\in[L]$, if $\tau_j$ contains multiple copies of any vertex $v\in V$ then retain only the copy with earliest deadline (i.e. highest reward).

\item \label{step:corr5} Return the non-adaptive policy that traverses the path $\tau_0\cdot \tau_1\cdots\tau_L$  and attempts each vertex independently with probability $\frac{\log\log L}{4K\cdot \log L}$.
\end{enumerate}

\paragraph{Analysis.} We now show that the expected reward of this non-adaptive policy is $\Omega(\frac{\log\log L}{\alpha\, K\cdot \log L})\cdot \Opt$, which would prove Theorem~\ref{thm:corr-NA}; recall that $K=\Theta(\log\log B)$ and $L=\Theta(\log B)$.

\begin{lemma} \label{lem:RRalt}
For any $j\in[L]$, $v\in V$ and $t\in I^j_v$,
$\Pr[\langle v,j,t\rangle \in \tau_j \mbox{ after Step~\ref{step:corr3}}]\,\ge\, y^j_{v,t}/8$.
\end{lemma}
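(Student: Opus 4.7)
The plan is to track the survival probability of job $\langle v,j,t\rangle$ in $\tau_j$ through Steps~\ref{step:corr1}--\ref{step:corr3} in turn. By Claim~\ref{cl:xy-equality} I may assume~\eqref{LP:2} holds with equality, so right after Step~\ref{step:corr1},
$$\Pr[\langle v,j,t\rangle\in\tau_j] \,=\, \sum_{T\ni\langle v,j,t\rangle}\tfrac12\, x^j_T \,=\, \tfrac12\, y^j_{v,t}.$$
Call this event $C$. For Step~\ref{step:corr2}, let $E$ be the event that some copy of $v$ appears in $\tau_\ell$ for some $\ell\ne j$. By~\eqref{LP:3} and a union bound, $\Pr[E]\le \frac12\sum_{\ell\ne j}\sum_{t'\in I^\ell_v}y^\ell_{v,t'}\le \frac12$; since the Step~\ref{step:corr1} draws are independent across segments, $E$ and $C$ are independent. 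Hence after Step~\ref{step:corr2}, $\Pr[\langle v,j,t\rangle\in\tau_j]=\Pr[C]\,\Pr[\neg E]\ge y^j_{v,t}/4$.

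The main work is to bound the probability that Step~\ref{step:corr3} rejects the solution, even conditional on $C$. For each constraint index $\ell\in [L]$, let $A_\ell=\sum_{i=0}^\ell Y_i^{(\ell)}$ be the left-hand side of~\eqref{LP:4} evaluated on the Step~\ref{step:corr1} outcome, where the per-segment contribution is $Y_i^{(\ell)}:=\sum_v \mu^\ell_v\sum_{t'\in I^i_v}\mathbf{1}[\langle v,i,t'\rangle\in\tau_i]$. I will apply a multiplicative Chernoff bound to the normalized variables $Y_i^{(\ell)}/B_\ell$, where $B_\ell:=(K+1)\cdot 2^\ell$ is the LP right-hand side. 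The crucial ingredient is~\eqref{eq:capped-mu}, i.e.~$\mu^\ell_v\le 2^{\ell-i}\mu^i_v$ for $i\le\ell$; combined with the knapsack capacity $(K+1)\cdot 2^i$ of instance $\I_i$ this yields $Y_i^{(\ell)}\le B_\ell$, so the normalized contributions lie in $[0,1]$. The $Y_i^{(\ell)}$ are independent because the $\tau_i$ are. Using Step~\ref{step:corr1}'s sampling probability $x^j_T/2$ together with constraint~\eqref{LP:4} gives $\E[A_\ell]\le B_\ell/2$. With $F=\Theta(\log L/\log\log L)$, Chernoff gives $\Pr[A_\ell>F\cdot B_\ell]\le (e/(2F))^F=L^{-\Omega(1)}$, and a union bound over the $L+1$ constraints (with the hidden constant in $F$ chosen large enough) yields overall failure probability at most $1/4$.

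Incorporating the conditioning on $C$ is then straightforward: $C$ affects only $\tau_j$, hence only $Y_j^{(\ell)}$, and only when $\ell\ge j$. For $\ell<j$ the previous bound applies unchanged. For $\ell\ge j$, the segments $i\ne j$ keep their marginal distributions (by independence of segments in Step~\ref{step:corr1}) and the crude bound $Y_j^{(\ell)}\le B_\ell$ holds unconditionally, so
$$\Pr\bigl[A_\ell>F\cdot B_\ell \,\bigm|\, C\bigr] \,\le\, \Pr\Bigl[\,\sum_{i\le\ell,\,i\ne j} Y_i^{(\ell)}/B_\ell \,>\, F-1\Bigr],$$
which admits the same Chernoff estimate; the union bound again gives $\Pr[\text{fail}\mid C]\le 1/4$. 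Putting the three steps together,
$$\Pr[\langle v,j,t\rangle\in\tau_j\text{ after Step~3}] \,\ge\, \Pr[C]\bigl(\Pr[\neg E\mid C]-\Pr[\text{fail}\mid C]\bigr) \,\ge\, \tfrac{y^j_{v,t}}{2}\bigl(\tfrac12-\tfrac14\bigr) \,=\, \tfrac{y^j_{v,t}}{8},$$
using $\Pr[\neg E\mid C]=\Pr[\neg E]\ge 1/2$ by independence. The principal obstacle is the Chernoff step: it is exactly~\eqref{eq:capped-mu} that converts the per-segment knapsack capacity into a uniform bound $B_\ell$, making it possible to apply one multiplicative Chernoff bound to all $L+1$ constraints at once.
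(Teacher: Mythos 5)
Your proof is correct and follows the same overall route as the paper's: propagate the survival probability of $\langle v,j,t\rangle$ through Steps~1--3, using Claim~\ref{cl:xy-equality} for Step~1, constraint~\eqref{LP:3} with segment-independence for Step~2, and then -- crucially -- the capped-size inequality~\eqref{eq:capped-mu} to show that each segment's contribution to the $h$-th prefix-size constraint is bounded by $(K+1)2^h$, so that a single Chernoff bound plus a union bound over the $L{+}1$ constraints controls Step~3. Where you go beyond the paper is in the treatment of the conditioning: the paper computes $\Pr[\mathrm{fail}] \le 1/2$ \emph{unconditionally} and then immediately asserts $\Pr[\langle v,j,t\rangle\in\tau_j^3\mid \langle v,j,t\rangle\in\tau_j^2]\ge 1/2$, glossing over the fact that the event $\langle v,j,t\rangle\in\tau_j^2$ is not independent of the failure event. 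You handle this explicitly by conditioning on $C$, observing that $C$ touches only the $j$-th segment, and replacing the threshold $F$ by $F-1$ for the Chernoff bound on the remaining independent segments -- at the cost of widening the constant in the rejection threshold in Step~\ref{step:corr3} to push $\Pr[\mathrm{fail}\mid C]$ down to $1/4$, and then using $\Pr[\neg E\cap\neg\mathrm{fail}\mid C]\ge\Pr[\neg E\mid C]-\Pr[\mathrm{fail}\mid C]$. This is a refinement rather than a different proof, but it does close a small gap in the paper's own argument; it only changes constants and does not affect the $\Omega\!\left(\frac{\log\log L}{\alpha K \log L}\right)$ guarantee.
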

\begin{proof}
Let $\tau^1_j$, $\tau^2_j$ and $\tau^3_j$ denote the solution $\tau_j$ after Step 1, 2 and 3 respectively. Clearly,
\vspace{-2mm}\begin{equation}\label{eq:prob-tau1}
\Pr\left[\langle v,j,t\rangle \in \tau_j^1\right] \quad = \quad\sum_{T\ni \langle v,j,t\rangle} \frac{x^j_T}{2} \quad = \quad \frac{y^j_{v,t}}{2}.
\end{equation}
\vspace{-3mm}

The last equality uses Claim~\ref{cl:xy-equality}. Note that $\langle v,j,t\rangle$ gets dropped in Step~\ref{step:corr2} exactly when there is some other solution $\{\tau^1_{\ell} : \ell\in[L]\setminus j\}$ that contains a copy of $v$. By union bound,~\eqref{eq:prob-tau1} and~\eqref{LP:3},
\vspace{-2mm}\begin{equation}\label{eq:prob-tau2}\Pr\left[\langle v,j,t\rangle \not\in \tau_j^2 \, |\, \langle v,j,t\rangle \in \tau_j^1\right] \quad \le \quad \sum_{\ell\in [L]\setminus j}\,\,  \sum_{t\in I^\ell_v} \frac{y^j_{v,t}}{2}\quad \le \quad \frac{1}{2}.
\end{equation}\vspace{-5mm}

In Step~\ref{step:corr3}, the entire solution is declared empty if any constraint~\eqref{LP:4} is violated by more than a factor of $\frac{\log L}{\log\log L}$; otherwise, $\tau^3_j=\tau^2_j$. Consider the constraint~\eqref{LP:4} with index $h\in[L]$. This reads $\sum_{\ell=0}^h \mu^h(\tau_\ell) \,\le\, (K+1)\cdot 2^h$, where $\mu^h(\tau_\ell):= \sum_{\langle v,\ell,t\rangle \in \tau_\ell} \mu^h_v$. The key observation is the following:
\begin{equation}\label{eq:main-RR}
\mbox{For any $\ell\le h$ and $\tau_\ell\in \I_\ell$, we have }\mu^h(\tau_\ell)\le (K+1)\cdot 2^h.
\end{equation}
This uses the fact that $\tau_\ell$ satisfies the knapsack constraint $\mu^\ell(\tau_\ell)\le (K+1)\cdot 2^\ell$ in \kdo instance $\I_\ell$; and by the observation~\eqref{eq:capped-mu} on capped sizes, $\frac{\mu^h(\tau_\ell)}{2^h}\le \frac{\mu^\ell(\tau_\ell)}{2^\ell}$ since $h\ge \ell$.

Using~\eqref{eq:main-RR} it follows that $Z_h:=\sum_{\ell=0}^h \frac{\mu^h(\tau_\ell)}{(K+1)\cdot 2^h}$ is the sum of independent $[0,1]$ bounded random variables.  Using~\eqref{eq:prob-tau1}, and \lp constraint~\eqref{LP:4} we have:
$$\E[Z_h] \quad \le \quad \frac1{(K+1)\cdot 2^h} \,\,\sum_{\ell=0}^h \sum_{v\in V} \sum_{t\in I^\ell_v} \,\mu^h_v\cdot \frac{y^j_{v,t}}{2} \quad \le \quad \frac12.$$
Hence, by a Chernoff bound, {\small $\Pr\left[Z_h > \frac{\log L}{\log\log L}\right]  \le \frac{1}{2L}$}.

Taking a union bound over all $L$ constraints~\eqref{LP:4}, it follows that with probability at least half, none of them is violated by a factor more than $\frac{\log L}{\log\log L}$. That is, $\Pr\left[\langle v,j,t\rangle \in \tau_j^3 \, |\, \langle v,j,t\rangle \in \tau_j^2\right] \ge \frac12$.

Combined with~\eqref{eq:prob-tau1} and \eqref{eq:prob-tau2}, we obtain the lemma.
\end{proof}

\begin{claim}\label{cl:corr-exp}
The expected \lp objective value of solution $\{\tau_j\}_{j=0}^L$ after Step~\ref{step:corr4} is $\Omega(\Opt/\alpha)$.
\end{claim}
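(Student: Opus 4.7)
The plan is to combine Lemma~\ref{lem:RRalt} (which handles Steps 1--3) with an $O(1)$-factor loss argument for Step~\ref{step:corr4} based on Claim~\ref{cl:gknr-count}.

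First, I will handle Steps 1--3 via linearity of expectation. Writing the LP objective after Step~\ref{step:corr3} as $\sum_{j,v,t} f^j_v(t)\cdot \mathbf{1}[\langle v,j,t\rangle\in \tau_j]$, Lemma~\ref{lem:RRalt} gives
$$\E\bigl[\text{LP obj after Step \ref{step:corr3}}\bigr] \,\ge\, \tfrac{1}{8}\sum_{j,v,t} f^j_v(t)\, y^j_{v,t} \,\ge\, \Omega(\Opt/\alpha),$$
where the last inequality uses that $(x,y)$ is an $O(\alpha)$-approximate \lp solution and that $\lp\ge \Opt/8$ by Claim~\ref{cl:LP-obj}.

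Next, I will argue that Step~\ref{step:corr4} loses at most a constant factor, pointwise over the randomness of $\{\tau_j\}$. Fix any outcome and any pair $(j,v)$; let $T_{j,v}\subseteq I^j_v$ be the set of copies of $v$ present in $\tau_j$ after Step~\ref{step:corr3}, and let $t^* := \min T_{j,v}$ be the copy retained by Step~\ref{step:corr4}. Since every $t\in T_{j,v}$ lies in $I^j_v$ and satisfies $t\ge t^*$, Claim~\ref{cl:gknr-count} applied with $d=t^*$ yields
$$\sum_{t\in T_{j,v}} f^j_v(t) \,\le\, \sum_{t\in I^j_v,\,t\ge t^*} f^j_v(t) \,\le\, 3\, f^j_v(t^*).$$
Summing over $(j,v)$, the LP objective after Step~\ref{step:corr4} is at least $1/3$ of the LP objective after Step~\ref{step:corr3} for every realization, and therefore also in expectation. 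Combining the two bounds gives $\E[\text{LP obj after Step~\ref{step:corr4}}]\ge \Omega(\Opt/\alpha)$, as claimed.

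The main (and only) obstacle is the accounting in Step~\ref{step:corr4}: one must recognize that because every retained/discarded copy at a vertex is indexed by the prescribed set $I^j_v$, the geometric-doubling property guaranteed by the construction of $I^j_v$ in Claim~\ref{cl:gknr-count} immediately bounds the total discarded reward by the retained reward up to factor $3$. Other conceivable issues (e.g.\ correlation between $\tau_j$'s, or the effect of Step~\ref{step:corr3}'s ``return empty'' branch) are already absorbed into Lemma~\ref{lem:RRalt} and do not require revisiting.
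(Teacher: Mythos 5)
Your proof is correct and follows essentially the same approach as the paper: combine Lemma~\ref{lem:RRalt} via linearity of expectation with the factor-$3$ bound from the first inequality of Claim~\ref{cl:gknr-count} (applied at the retained deadline $t^*$) to control the loss in Step~\ref{step:corr4}. You have merely spelled out the Step~\ref{step:corr4} accounting that the paper states in one line.
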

\begin{proof}
By Lemma~\ref{lem:RRalt} it follows that expected \lp objective value of solution $\{\tau_j\}_{j=0}^L$ after Step~\ref{step:corr3} is at least:
{\small $$\sum_{j=0}^L \, \sum_{v \in V}  \, \sum_{t \in I^j_v} \, f^j_v(t)\cdot \frac{y^j_{v,t}}8 \quad \ge\quad \Omega\left(\frac{\Opt}{\alpha}\right).$$}
The last inequality uses Claim~\ref{cl:LP-obj} and the fact that we have an $O(\alpha)$-approximately optimal \lp solution.

In Step~\ref{step:corr4}, we retain only one copy of each vertex in each $\tau_j$. Using Claim~\ref{cl:gknr-count}, since we retain the most profitable copy of each vertex, this decreases the total reward of each $\tau_j$ by at most a factor of $3$. %This completes the proof.
\end{proof}

%We are now ready to analyze the non-adaptive
Consider now the non-adaptive policy in Step~\ref{step:corr5} and {\em condition} on any solution $\{\tau_j\}_{j=0}^L$. Fix any $j\in [L]$ and  $\langle v, j, t\rangle\in \tau_j$. The distance traveled until $v$ is at most $\sum_{i=0}^{j-1}D_i + t$. By Step~\ref{step:corr4} and \lp constraint~\eqref{LP:4}, the total $\mu^j$-size of vertices in $\{\tau_i\}_{i=0}^j$ is at most $\frac{(K+1)\log L}{\log\log L}\cdot 2^j$. Since each vertex is attempted only with probability $\frac{\log\log L}{4K\cdot \log L}$, we have
$$\Pr\left[ \sum_{u\prec v} \min\{S_u,2^j\} \ge 2^j \right]  \,\, \le \,\, \frac12,$$ where the summation ranges over all vertices visited before $v$. Thus with probability at least half, vertex $v$ is visited  by time $\sum_{i=0}^{j-1}D_i + t+2^j-1$.

So the expected reward from vertex $v$ is $\Omega\left(\frac{\log\log L}{K\cdot \log L}\right)\cdot \eta_v(B-\sum_{i=0}^{j-1}D_i - t- 2^j+1) = \Omega\left(\frac{\log\log L}{K\cdot \log L}\right)\cdot f^j_v(t)$.
Adding this contribution over all vertices, the total reward is $\Omega\left(\frac{\log\log L}{K\cdot \log L}\right)$ times the \lp objective of solution $\{\tau_j\}_{j=0}^L$ after Step~\ref{step:corr4}. Taking expectations over Steps~\ref{step:corr1}-\ref{step:corr4}, and using Claim~\ref{cl:corr-exp}, it follows that our non-adaptive policy has expected reward $\Omega\left(\frac{\log\log L}{\alpha\, K\cdot \log L}\right)\cdot \Opt$. This completes the proof of Theorem~\ref{thm:corr-NA}.

\section{Conclusion}
In this paper, we proved an $\Omega(\sqrt{\log\log B})$ lower bound on the adaptivity gap of stochastic orienteering. The best known upper bound  is $O(\log\log B)$~\cite{GKNR12}. Closing this gap is an interesting open question. For the {\em correlated} stochastic orienteering problem, we gave a quasi-polynomial time $O(\alpha\cdot \log^2\log B)$-approximation algorithm, where $\alpha$ denotes the best approximation ratio for the deadline-orienteering problem. It is known that correlated stochastic orienteering can not be approximated to a factor better than $\Omega(\alpha)$~\cite{GKNR12}. Finding an $O(\alpha)$ approximation algorithm for correlated stochastic orienteering is another interesting direction.

\bibliographystyle{plain}
\bibliography{stoc-ks}

\begin{thebibliography}{10}

\bibitem{BBCM04}
Nikhil Bansal, Avrim Blum, Shuchi Chawla, and Adam Meyerson.
\newblock Approximation algorithms for deadline-{TSP} and vehicle routing with
  time-windows.
\newblock In {\em STOC}, pages 166--174, 2004.

\bibitem{BGLMNR10}
Nikhil Bansal, Anupam Gupta, Jian Li, Juli{\'a}n Mestre, Viswanath Nagarajan,
  and Atri Rudra.
\newblock When {LP} is the cure for your matching woes: Improved bounds for
  stochastic matchings.
\newblock {\em Algorithmica}, 63(4):733--762, 2012.

\bibitem{Bhalgat11}
Anand Bhalgat.
\newblock A {$(2+\epsilon)$}-approximation algorithm for the stochastic
  knapsack problem, 2011.
\newblock Unpublished Manuscript.

\bibitem{BGK11}
Anand Bhalgat, Ashish Goel, and Sanjeev Khanna.
\newblock Improved approximation results for stochastic knapsack problems.
\newblock In {\em SODA}, pages 1647--1665, 2011.

\bibitem{BCKLMM07}
Avrim Blum, Shuchi Chawla, David~R. Karger, Terran Lane, Adam Meyerson, and
  Maria Minkoff.
\newblock Approximation algorithms for orienteering and discounted-reward
  {TSP}.
\newblock {\em SIAM J. Comput.}, 37(2):653--670, 2007.

\bibitem{CKP08}
Chandra Chekuri, Nitish Korula, and Martin P{\'a}l.
\newblock Improved algorithms for orienteering and related problems.
\newblock {\em ACM TALG}, 8(3), 2012.

\bibitem{chenetal}
Ning Chen, Nicole Immorlica, Anna~R. Karlin, Mohammad Mahdian, and Atri Rudra.
\newblock Approximating matches made in heaven.
\newblock In {\em International Colloquium on Automata, Languages and
  Programming (ICALP)}, pages 266--278, 2009.

\bibitem{DeanGV08}
Brian~C. Dean, Michel~X. Goemans, and Jan Vondr{\'a}k.
\newblock Approximating the stochastic knapsack problem: The benefit of
  adaptivity.
\newblock {\em Math. Oper. Res.}, 33(4):945--964, 2008.

\bibitem{GK07}
Naveen Garg and Jochen K{\"o}nemann.
\newblock Faster and simpler algorithms for multicommodity flow and other
  fractional packing problems.
\newblock {\em SIAM J. Comput.}, 37(2):630--652, 2007.

\bibitem{GLV87}
Bruce~L. Golden, Larry Levy, and Rakesh Vohra.
\newblock The orienteering problem.
\newblock {\em Naval Research Logistics}, 34(3):307–318, 1987.

\bibitem{GuhaM09}
Sudipto Guha and Kamesh Munagala.
\newblock Multi-armed bandits with metric switching costs.
\newblock In {\em ICALP}, pages 496--507, 2009.

\bibitem{GKMR11}
Anupam Gupta, Ravishankar Krishnaswamy, Marco Molinaro, and R.~Ravi.
\newblock Approximation algorithms for correlated knapsacks and non-martingale
  bandits.
\newblock In {\em FOCS}, pages 827--836, 2011.

\bibitem{GKNR12}
Anupam Gupta, Ravishankar Krishnaswamy, Viswanath Nagarajan, and R.~Ravi.
\newblock Approximation algorithms for stochastic orienteering.
\newblock In {\em SODA}, pages 1522--1538, 2012.

\bibitem{M14}
Will Ma.
\newblock Improvements and generalizations of stochastic knapsack and
  multi-armed bandit approximation algorithms.
\newblock In {\em SODA}, 2014.

\bibitem{PST91}
Serge~A. Plotkin, David~B. Shmoys, and {\'E}va Tardos.
\newblock Fast approximation algorithms for fractional packing and covering
  problems.
\newblock In {\em FOCS}, pages 495--504, 1991.

\bibitem{Zhang05}
Zhang. Tong.
\newblock Data dependent concentration bounds for sequential prediction
  algorithms.
\newblock In {\em COLT}, pages 173--187, 2005.

\bibitem{VSO11}
Pieter Vansteenwegena, Wouter Souffriaua, and Dirk~Van Oudheusdena.
\newblock The orienteering problem: A survey.
\newblock {\em Eur.~J.~Oper.~Res.}, 209(1):1--10, 2011.

\bibitem{Weyland}
Dennis Weyland.
\newblock {\em Stochastic Vehicle Routing - From Theory to Practice}.
\newblock PhD thesis, University of Lugano, Switzerland, 2013.

\end{thebibliography}

\end{document}